\documentclass[a4,12pt]{article}

%%
%% \BibTeX command to typeset BibTeX logo in the docs
\AtBeginDocument{%
  \providecommand\BibTeX{{%
    \normalfont B\kern-0.5em{\scshape i\kern-0.25em b}\kern-0.8em\TeX}}}

\usepackage[margin=0.7in]{geometry}

\usepackage{comment}
\usepackage{macros}
\mathchardef\mhyphen="2D % Define a "math hyphen"
\usepackage{algorithm,algorithmic}

\usepackage{amsmath, amssymb}
\usepackage{hyperref,color}
\hypersetup{
    colorlinks=true,   % color instead of boxes
    citecolor=blue,    % cite links in blue
    linkcolor=blue,    % other internal links in gray
%    linktocpage,         % in TOC, LOF and LOT, the link is on the page number
}

\usepackage{caption}
\usepackage{subcaption} %Para poder dividir una figura en 2

\usepackage{authblk} %para autores

\usepackage[hyperpageref]{backref}
\renewcommand*\backref[1]{\ifx#1\relax \else (Cited on #1) \fi}

\begin{document}

\title{An epistemic approach to model uncertainty in data-graphs}

\author[1,2]{Abriola Sergio}
\author[2]{Cifuentes Santiago}
\author[1,2]{Martínez María Vanina}
\author[1,2]{Pardal Nina}
\author[1]{Pin Edwin}
\affil[1]{ICC CONICET}
\affil[2]{Departamento de Computación, Universidad de Buenos Aires}

%\footnotemark, Cifuentes Santiago, Martinez María Vanina, Pardal Nina, Pin Edwin} %Alphabetical order
%\email{sabriola@dc.uba.ar}
%\orcid{0000-0002-1979-5443} %
%  \affiliation{ICC CONICET, Universidad de Buenos Aires, Departamento de Computación, Universidad de Buenos Aires}
%  \streetaddress{P.O. Box 1212}
 % \city{Dublin}
  %\state{Ohio}
 % \country{Argentina}
    %\postcode{43017-6221}
\date{}

% \author{Cifuentes Santiago}
% \affiliation{%
%   \institution{Departamento de Computación, Universidad de Buenos Aires}
%   %\streetaddress{1 Th{\o}rv{\"a}ld Circle}
%   %\city{Hekla}
%   \country{Argentina}}
% \email{santiagocifuentes66@hotmail.com}

% \author{Martinez Maria Vanina}
% \affiliation{%
%   \institution{ICC CONICET, Universidad de Buenos Aires}
%   \country{Argentina}
% }

% \author{Pardal Nina}
% \affiliation{%
%   \institution{ICC CONICET, Universidad de Buenos Aires}
%   \orcid{---} %
%   %\streetaddress{1 Th{\o}rv{\"a}ld Circle}
%   %\city{Hekla}
%   \country{Argentina}
% }

% \author{Pin Edwin}
% \affiliation{%
%   \institution{Departamento de Computación, Universidad de Buenos Aires}
%   %\streetaddress{1 Th{\o}rv{\"a}ld Circle}
%   %\city{Hekla}
%   \country{Argentina}
% }

%%
%% By default, the full list of authors will be used in the page
%% headers. Often, this list is too long, and will overlap
%% other information printed in the page headers. This command allows
%% the author to define a more concise list
%% of authors' names for this purpose.

%%
%% This command processes the author and affiliation and title
%% information and builds the first part of the formatted document.
\maketitle

%%
%% The abstract is a short summary of the work to be presented in the
%% article.
\begin{abstract}
Graph databases are becoming widely successful as data models that allow to effectively represent and process complex relationships among various types of data. {\em Data-graphs} are particular types of
graph databases whose representation allows both data values in the paths and in the nodes
be treated as first class citizens by the query language. As with any other type of data repository, data-graphs may suffer from errors and discrepancies with respect to the real-world data they intend to represent.  In this work we explore the notion of {\em probabilistic unclean data-graphs}, in order to capture the idea that the observed (unclean) data-graph is actually the noisy version of a clean one that correctly models the world but that we know partially. As the factors that yield to such observation may be many, e.g, all different types of clerical errors or unintended transformations of the data, and depend heavily on
the application domain, we assume an epistemic probabilistic model that describes the distribution over all possible ways in which the clean (uncertain) data-graph could have been polluted. Based on this model we define two computational problems: \textit{data cleaning} and \textit{probabilistic query answering} and study for both of them their corresponding complexity when considering that the transformation of the data-graph can be caused by either removing (subset), adding (superset), or modifying (update) nodes and edges. For data cleaning, we explore restricted versions when the transformation only involves updating data-values on the nodes. Finally, we look at some implications of  incorporating hard and soft constraints to our framework.
\end{abstract}

%%
%% Keywords. The author(s) should pick words that accurately describe
%% the work being presented. Separate the keywords with commas.
%\keywords{Repairing, Path constraints.}
\small{\textit{\textbf{Keywords:} Data-graphs, Consistent query answering, Probabilistic query answering, Constraints, Inconsistent databases, Repairing}}

%%
%% The code below is generated by the tool at http://dl.acm.org/ccs.cfm.
%% Please copy and paste the code instead of the example below.
%%
%\begin{CCSXML}
%<ccs2012>
%   <concept>
%        <concept_id>10002951.10002952.10002953.10010146</concept_id>
%        <concept_desc>Information systems~Graph-based database models</concept_desc>
%        <concept_significance>500</concept_significance>
%        </concept>
%  </ccs2012>
% \end{CCSXML}

%\ccsdesc[500]{Information systems~Graph-based database models}

%\begin{CCSXML}
%<ccs2012>
%   <concept>
%       <concept_id>10002951.10002952.10002953.10010146</concept_id>
%       <concept_desc>Information systems~Graph-based database models</concept_desc>
%       <concept_significance>500</concept_significance>
%       </concept>
%   <concept>
%       <concept_id>10002951.10002952.10003190.10003206</concept_id>
%       <concept_desc>Information systems~Integrity checking</concept_desc>
%       <concept_significance>500</concept_significance>
%       </concept>
%   <concept>
%       <concept_id>10002951.10002952.10003197.10010825</concept_id>
%       <concept_desc>Information systems~Query languages for non-relational engines</concept_desc>
%       <concept_significance>300</concept_significance>
%       </concept>
% </ccs2012>
%\end{CCSXML}

%\ccsdesc[500]{Information systems~Graph-based database models}
%\ccsdesc[500]{Information systems~Integrity checking}
%\ccsdesc[300]{Information systems~Query languages for non-relational engines}

%!TEX root = main.tex

\section{Introduction}

There is an increasing interest on graph databases as a mean to adequately handle both the topology of the data and the data itself, which is specially useful for applications that involve analysis over linked and  semi-structured data \cite{anand2010techniques,arenas2011querying,fan2012graph}. In these settings, the structure of the database is queried through navigational languages such as \textit{regular path queries} or RPQs~\cite{barcelo2013querying} that can capture pairs of nodes connected by a specific kind of path. More expressive query languages have been defined with a tradeoff on evaluation complexity, nevertheless,  RPQs and its most common extensions (C2RPQs and NREs ~\cite{barcelo2012relative}) do not treat data values in the nodes of the graph as first class citizens in the language leaving behind any possible interaction with that source of data. For this reason, query languages have been defined for the case of data-graphs (i.e. graph databases where data lies both in the paths and in the nodes themselves), such as REMs and \Gregxpath~\cite{libkin2016querying}.  

When accessing or querying a data repository we expect to obtain data that comply, at least to a certain extent, to the semantics of the domain, either in terms of quality or by satisfying a series of integrity constraints. This is a major challenge for any non-trivial data-driven application. Specifically for data-graphs, integrity constraints can be expressed in graph databases through \textit{path constraints}~\cite{abiteboul1999regular,buneman2000path}.

In the literature, two main approaches have been thoroughly explored. On the one hand, data cleaning, or data repairing, focuses on frameworks that allow to identify inconsistencies caused by incorrect, missing, and duplicated data, and to restore it to a {\em clean} state, i.e., a state that satisfies the imposed constraints. On the other hand, repairing the data may not always be the best option, or it may actually be impossible, e.g, we may not have the permission to actually change it. In these cases, a different approach is to develop the means (theory and algorithms) to obtain consistent answers from inconsistent databases without changing the data itself. The field of consistent query answering (CQA), first defined for relational databases~\cite{arenas1999consistent, lian2010consistent} has lately been applied to semi-structured data such as graph databases~\cite{barcelo2017data}. 

One of the main drawbacks of traditional approaches to repairing and consistent query answering is that they do not allow to represent the fact that, in general, when we observe the {\em unclean} data we may not know exactly how the data was corrupted as a variety of factors may have been involved.
A probabilistic epistemic model allows incorporating uncertain domain knowledge to the framework, formalizing both a set of possible original {\em clean} databases and an application-dependent noisy process that yields for each one potentially distorted version. 
For this reason, inspired by the work in~\cite{de2018formal, rekatsinas2017holoclean}, in this paper we propose a probabilistic framework for repairing and querying data-graphs, assuming an epistemic model that describes the distribution over all possible ways in which the clean (uncertain) data-graph could have been polluted. 
%Clearly, this framework allows a more fined-grain treatment of inconsistency.  
%%%%%%% diferencias con PUD:
We adapt the definitions from~\cite{de2018formal} to the context of data-graphs, especially focusing on complexity aspects, and capitalizing on graph-related logics to define natural constraints and problems. We also take advantage of common theoretical devices developed for studying the complexity of reasoning problems related to database repairing. Overall, we aim to find tractable versions of some of the problems presented for the PUD framework~\cite{de2018formal} (this is, \textit{data cleaning} and \textit{PQA}), in which usual data-graph reasoning problems can be modelled. In particular, we concentrate in use cases where data is altered at the node level.
%%%%%%% diferencias con PUD:
In this work particularly, we study probabilistic data cleaning and query answering for data-graphs focusing on three types of transformations that can be applied to the data-graph: by either removing or adding nodes and edges, and by updating data values in the nodes.
As far as we know from literature review,~\cite{de2018formal} is the only work that deals with issues of complexity for a probabilistic framework such as the one presented in this paper. However, in the aforementioned work, only some initial complexity considerations are explored and there are no results on intractability or that yield higher than polynomial complexity. Moreover, as we are dealing with data-graphs, we can more naturally express and tackle problems involving some kinds of constraints --such as path constraints-- that are more convenient for graph related applications. 
The specific contributions of this work are as follows:
\begin{itemize}

\item We define the notion of a probabilistic unclean data-graph model (PUDG) based on the observed data-graph and an epistemic model of the data-graph (EMDG); the latter is composed of a probabilistic distribution over all possible clean data-graphs and a realization model that represents the noisy process which allows us to ‘observe’ a potentially distorted version of each of the clean versions.
\item We restrict the (epistemic model) EMDG to subset EMDG and superset EMDG considering realization models that either only delete or only add data in the data-graphs, respectively. These notions relate to two common semantics for repairing databases, namely subset and superset repairs~\cite{tenCate:2012}. 
We also consider the Node-Update EMDG version, where only modifications to data-values in the observed data-graph are allowed. 

\item We study the problem of {\em data cleaning}, this is, given a probabilistic unclean data-graph 
$\aPUDG = (\aProbabilisticDatabase,\aRealizationModel,\aGraph')$, find the most probable data-graph in $\aProbabilisticDatabase$ given an epistemic model of the application domain and an observation. We explore different restrictions of the problem, trying to understand the range of complexity that can be involved when reasoning around these structures.

\item We consider alternate ways of defining a system of priorities by adding constraints to the model. We observe that the concepts of hard and soft integrity constraints~\cite{arenas1999consistent, barcelo2017data, tenCate:2012} can be easily incorporated in the framework.

\item Finally, we study the problem of {\em probabilistic query answering} (PQA), which computes the probability of a formula $\aFormulaNodeOrPath$ over $\aProbabilisticDatabase$, given a  probabilistic unclean data-graph $\aPUDG = (\aProbabilisticDatabase,\aRealizationModel,\aGraph')$.

\end{itemize}
% {Section:Definitions}
% {section:dataCleaning}
% {section:PQA}
% {section:constraints}
% {Section:Conclusions}

This work is organized as follows. First, in Section~\ref{sec:RW} we discuss related work to provide some background to the proposal. In Section~\ref{Section:Definitions} we introduce the necessary preliminaries and notation for the syntax and semantics for our probabilistic model (PUDG). In Section~\ref{section:dataCleaning} we define the problem of probabilistic data cleaning and study its complexity for Subset, Superset, and Update PUDGs. We finish the section by considering restricted versions of Node-Update PUDG by applying different kinds of cardinality constraints to this framework.
In Section~\ref{section:constraints} we briefly discuss the generalization of our framework to
consider soft and hard constraints, and study the complexity of solving the Data Cleaning problem in some particular cases of hard constraints.
Later, in Section~\ref{section:PQA} we study the problem of probabilistic query answering for PUDGs focusing on  the subset and superset versions of the epistemic model.
Finally, in Section~\ref{Section:Conclusions} we discuss some final remarks and possible continuation of this work.

%\bigsergio{Para mencionar:
%A formal framework for probabilistic unclean databases. \cite{de2018formal}. Introduce los problemas de data cleaning (devuelve la database más probable dadas las restricciones), PQA (devuelve la probabilidad de una query).
%Existen enfoques más simples que tienen un conjunto acotado de validity labels (e.g. 0, 1, incierto), pero el drawback es que se pierde precisión a la hora de responder queries (en comparación a un enfoque probabilístico). 

%CQA in inconsistent probabilistic databases \cite{lian2010consistent}.

%From \cite{rekatsinas2017holoclean}, we define an unclean graph database.

%¿"Mencionar Database repairing with Soft functional dependencies. Nofar Carmeli, Benny Kimelfeld, et al"?
%}

% Posibilidad: Mandar a ACM?

\section{Related Work}\label{sec:RW}

Probabilistic databases have been studied over the last 20 years by both the Database and the Artificial Intelligence community~\cite{van2017query}. These studies were motivated by a variety of applications, such as database repairing~\cite{rekatsinas2017holoclean}, modeling uncertain data~\cite{sarma2009representing}, data cleaning~\cite{de2018formal} and approximate query processing~\cite{zeng2014analytical}. Nonetheless, the main purpose of probabilistic databases is to extend today’s database technology to handle uncertain data while avoiding developing a new artifact from scratch~\cite{van2017query}. Many of the usual database techniques, such as query optimization and indexes, can be carried over (with some necessary adjustments) to a probabilistic database to follow the new probabilistic semantics of the language. 

The main problem that was studied in this context is query evaluation~\cite{van2017query, dalvi2007efficient,olteanu2008using}: given a query $Q$ and a probabilistic database $D$, compute the answer of $Q$ over $D$. This problem is known as \textit{probabilistic query answering} or \textit{probabilistic query evaluation} (PQE), which frames the task of 
finding the probability associated to each answer to $Q$ that is yielded by $D$. Though several semantics have been defined, overall, the task to compute the probability of $Q$ over $D$ can be reduced to enumerating every possible world $d$ (a non-probabilistic database) that satisfies $Q$, assigning to that answer a weight that is related to the probability of $d$ given the semantics. Through this interpretation, it can be seen that the PQE problem is similar to the \textit{weighted model counting problem}~\cite{beame2013model, gribkoff2014understanding}, and it is the case that tight bounds on the complexity of the PQE problem can be deduced using techniques from the latter one. An important dichotomy result of the area is that, for a fixed query $Q$, PQE is either a $\#P-complete$ problem (as usual for problems related to counting satisfying assignments of formulas) or it can be solved in polynomial time~\cite{dalvi2013dichotomy}.

The probabilistic database $D$ is commonly restricted to be a \textit{tuple-independent database} (TID), in which every tuple is an independent probabilistic event with its own associated value. Even though more expressive alternatives were proposed, such as block-disjoint-independent databases~\cite{dalvi2007management} or seeing attributes as random variables~\cite{arenas2020counting}, TIDs are the best understood alternative to date, and are already being used in 
applications such as relational embeddings~\cite{friedman2020symbolic}. In some cases, the probabilistic database $D$ is restricted further by requesting it to be \textit{symmetric}: any two tuples of the same relation are conditioned to have the same probability. When considering this special case, the PQE problem can be solved in polynomial time for a larger class of queries~\cite{van2014skolemization}.

A problem that is quite similar to the \textit{data cleaning} version we study in this work is the problem of computing the \textit{most probable database} (MPD), studied in~\cite{gribkoff2014most}. In the MPD problem, the task is to find the most probable deterministic database $d$ that satisfies a given query $Q$ over a probabilistic database $D$. Even when the probabilistic database $D$ is assumed to be a TID and the query $Q$ some kind of key or functional dependency, the problem becomes intractable for quite simple queries.

Probabilistic models have already been developed in other contexts aside from the relational one, such as with XML~\cite{kimelfeld2009modeling,souihli2013optimizing,abiteboul2011capturing}, ontologies~\cite{riguzzi2015reasoning}, and graphs~\cite{lian2010consistent}. In the case of graphs, most of the attention has been placed in the problem of querying a probabilistic graph database~\cite{lian2010consistent,amarilli2017conjunctive,maniu2017indexing}, where the underlying distribution over the state of the database is defined through edge independent probabilities, in an analogous way as in TIDs. 
Queries usually consist of some kind of path or graph pattern, and the satisfiability of a query is defined in terms of the probability that the complete pattern can be found in the graph based on the independent probabilities of the edges. As far as we know, there is no work on modelling the graph databases considering data in the nodes (in the same way as the XML trees do) and allowing the graph patterns to interact with them.

%!TEX root = main.tex

\section{Definitions}\label{Section:Definitions}
% Nota: ir haciendo commands desde el principio para después poder modificar notación más fácilmente

Fix a finite set of edge labels $\Sigma_e$ and a countable set (either finite or infinite enumerable) of data values $\Sigma_n$ (sometimes called data labels), which we assume non-empty, and such that $\Sigma_e \cap \Sigma_n = \emptyset$. A (finite) \defstyle{data-graph} $\aGraph$ is a tuple $(V_\aGraph,L_\aGraph,\dataFunction_\aGraph)$, or just $(V,L,\dataFunction)$ if $\aGraph$ is clear, where $V$ is a finite set of natural numbers, $L$ is a mapping from $V \times V$ to $\parts{\Sigma_e}$ defining the edges of the graph, and $\dataFunction$ is a function mapping the nodes from $V$ to data values in $\Sigma_n$. We denote by $E_\aGraph$ the set $\{(v,e,w)$ $|$ $v,w \in V, e \in L(v,w)\}$, which can be understood as the set of edges of the graph. Also, we denote by $G \subseteq G'$ when $V_G \subseteq V_{G'}$, $E_G \subseteq E_{G'}$, and $D_G(x) = D_{G'}(x)$ for every $x \in V_G$.

See Figure~\ref{figure:exampleDataGraph} for a visual example of a data-graph.

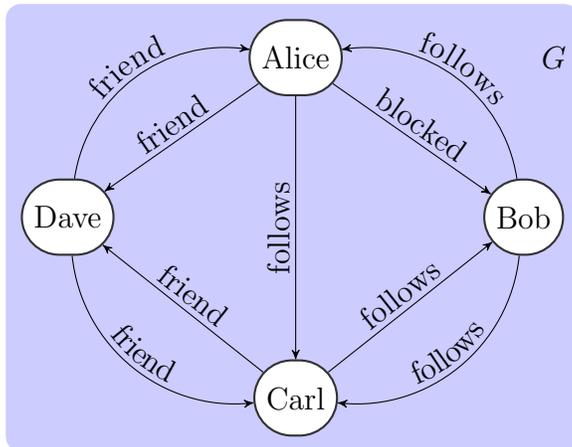
\begin{figure}[h!]%[H]
\centering
\begin{tikzpicture}[node distance=3.5cm,->,>=stealth', bend angle=45,auto] %,
  \tikzstyle{persona}=[rounded rectangle,thick,draw=black!80,fill=black!0,minimum size=10mm]
  %\tikzstyle{otra cosa}=[rectangle,thick,draw=black!75, fill=black!20,minimum size=4mm]
  \tikzstyle{every label}=[blue]
  \tikzstyle{every decoration}=[text format delimiters={|}{|}] %Para achicar a mano los textos en los edges de un decoration. Ver comando fontSizeEngrafosAchicados

    \node [persona] (Alice) {Alice};
    \node [draw = none] (graph_clean_name) [right= 2.5cm of Alice] {$\aGraph$};
    \node [persona] (Bob) at ($ (Alice) + (-35:3.7cm) $) {Bob};
    \node [persona] (Carl) [below = of Alice] {Carl};
    \node [persona] (Dave) at ($ (Alice) + (-145:3.7cm) $) {Dave};
    
    \node [draw = none] (Path_start) [right=0.2cm of Bob] {};

    \draw [->, postaction= {decorate,decoration={text along path,text align=center, text={blocked}, raise=0.2em}}] (Alice) to (Bob) node {}; 
    %\draw [->, postaction= {decorate,decoration={text along path,text align=center,text={ follows},raise=0.2em, reverse path}}, bend right] (Bob) to (Alice) node {};  %reverse path para que escriba el texto en el lado de arriba, sino queda feo
   \draw [->, postaction= {decorate,decoration={text along path,text align=center,text={follows},raise=0.2em, reverse path}}, bend right] (Bob) to (Alice) node {};
   \draw [->, postaction= {decorate,decoration={text along path,text align=center,text={follows},raise=0.2em, reverse path}}] (Alice) to (Carl) node {};   
    \draw [->, postaction= {decorate,decoration={text along path,text align=center,text={friend},raise=0.2em, reverse path}}] (Alice) to (Dave) node {};
    \draw [->, postaction= {decorate,decoration={text along path,text align=center,text={friend},raise=0.2em}}, bend left] (Dave) to (Alice) node {};
    \draw [->, postaction= {decorate,decoration={text along path,text align=center,text={ friend},raise=0.2em}}, bend right] (Dave) to (Carl) node {};
    \draw [->, postaction= {decorate,decoration={text along path,text align=center,text={ friend},raise=0.2em, reverse path}}] (Carl) to (Dave) node {};  
    \draw [->, postaction= {decorate,decoration={text along path,text align=center,text={follows},raise=0.2em}}] (Carl) to (Bob) node {};    
    \draw [->, postaction= {decorate,decoration={text along path,text align=center,text={ follows},raise=0.2em, reverse path}}, bend left] (Bob) to (Carl) node {};    

  \begin{pgfonlayer}{background}
    \filldraw [line width=4mm,join=round,blue!20]
      (Alice.north  -| Bob.east)  rectangle (Carl.south  -| Dave.west);   
  \end{pgfonlayer}

\end{tikzpicture}
\caption{A data-graph $\aGraph$ with set of nodes $V = \{1,2,3,4\}$  and data values $D(1) = \esDato{Alice}$, $D(2) = \esDato{Dave}$, $D(3) = \esDato{Carl}$ and $D(4) = \esDato{Bob}$, representing the names of the individuals in a social network, and the relation between any two individuals are either \esLabel{friend}, \esLabel{follows}, or \esLabel{blocked}.}  
\label{figure:exampleDataGraph}
\end{figure} 

\begin{definition}
A \defstyle{probabilistic data-graph} $\aProbabilisticDatabase$ is a probability distribution over a set of data-graphs.
More precisely, we consider 
$\aProbabilisticDatabase: \aUniverseOfGraphs \to [0,1]$, where $\aUniverseOfGraphs$ is the set of all data-graphs over $\Sigma_e, \Sigma_n$, and $$\sum_{\aGraph \in \aUniverseOfGraphs}\aProbabilisticDatabase(\aGraph) = 1.$$

For simplicity, we will sometimes abuse the notation and write $\aGraph \in \aProbabilisticDatabase$, or $\aGraph$ in $\aProbabilisticDatabase$, to refer to the case where $\aGraph$ is in the support of $\aProbabilisticDatabase$ (i.e., when  $\aProbabilisticDatabase(\aGraph)>0$).
\end{definition}

%In Figure~\ref{figure:exampleDataGraph} we show an example of a data-graph with 4 nodes $V = \{v_1,v_2,v_3,v_4\}$ where the data values and some of the edges are:
%\noindent
%\vspace{-0.3cm}
%\begin{center}
%\begin{minipage}[t]{.25\textwidth}
%\raggedright
%\begin{flalign*}
%    D(v_1) &= \esDato{Alice}&&\\
%    D(v_2) &= \esDato{Dave}&&\\
%    D(v_3) &= \esDato{Carl}&&\\
%    D(v_4) &= \esDato{Bob} &&
%\end{flalign*}
%\end{minipage}
%\hspace{1cm}
%\noindent
%\begin{minipage}[t]{.25\textwidth}
%\raggedleft
%\begin{align*}
%    L(v_1,v_1) &= \emptyset \\
%    L(v_1,v_2) &= \{\esLabel{friend}\} = L(v_2, v_1)  \\
%    L(v_1,v_4) &= \{\esLabel{blocked}\} \\
%    L(v_4,v_1) &= \{\esLabel{follows}\} 
%\end{align*}
%\end{minipage}
%\end{center}

%\

The probability distribution $\aProbabilisticDatabase$ can represent different types of uncertainty, such as the details of a concrete database, our priors about the relative frequency of different databases within a certain class, etc. The value $\aProbabilisticDatabase(\aGraph)$ is the probability that the original state of an uncertain data-graph is precisely $\aGraph$. 

%This  model is actually a generalization of the relational \textit{Tuple-Independent probabilistic Database} (TID), where each tuple might either exist or not based upon an associated probability.

%\santi{Yo lo sacaría. Igual puede ser cierto en el sentido de que tu $\aProbabilisticDatabase$ por debajo puede definir la probabilidad de un grafo $\aGraph$ como la productoria de la probabilidad de cada eje en $\aGraph$, y cada eje que no está en $G$ (esta última parte es medio problemática en nuestro contexto, pero no tanto en relacional ya que la cantidad de tuplas posibles está acotada por una función de las contantes que tengas)}

%For example, if the fact is about the existence of an edge $(v,\esLabel{e},w)$ between two fixed nodes $v,w$ belonging to all graphs, we have  
%%(such as having an edge $(v,\esLabel{e},w)$ between two fixed nodes, or satisfying a particular property in all nodes)
 % $$P_\aProbabilisticDatabase((v,\esLabel{e},w)) = \sum\limits_{\aGraph \in \aProbabilisticDatabase}\aProbabilisticDatabase(\aGraph)\queryTruthValue{(v,\esLabel{e},w)} \in \aGraph]\footnote{Here the expression $\queryTruthValue{\aQuery}$ has value 1 when $\aQuery$ is true, and 0 otherwise.}$$ 

\begin{example} %[Toy example of a probabilistic data-graph]\nina{yo pondría "(toy example...)" o "as a toy example", las dos cosas no}
As a toy example, we can consider a probabilistic data-graph that represents our epistemic state about the social relationship between four particular persons. 

Consider then $\Sigma_e = \{\esLabel{friend}\}$ and $\Sigma_n$ a (possibly infinite) set of valid names.
Let $V = \{1,2,3,4\}$, where each element of $V$ represents a different person, and $\aUniverseOfGraphs$ consists of data-graphs over $\Sigma_e, \Sigma_n, V$.
Now, assume that we are certain about the name of the four individuals to be $\{a_1, a_2, a_3, a_4\}$. Furthermore, assume that we know that $1$ and $4$ are mutual friends, but we are completely ignorant about other relationships among the four persons. Then, we could represent our epistemic state as the single probabilistic data-graph $\aProbabilisticDatabase: \aUniverseOfGraphs \to [0,1]$ where all the following conditions hold: 
\begin{itemize}
\item $\aProbabilisticDatabase(\aGraph) = 0$ for any $\aGraph$ such that, for some $ 1 \le i\le 4$, $D(i) \neq a_i$.
\item $\aProbabilisticDatabase(\aGraph) = 0$ for any $\aGraph$ such that either $(1, \esLabel{friends}, 4) \not \in E_\aGraph$ or $(4, \esLabel{friend}, 1) \not \in E_\aGraph$.
\item $\aProbabilisticDatabase(\aGraph) = 0$ for any $\aGraph$ such that $(i, \esLabel{friend}, i) \in E_\aGraph$ for some $i$.
\item $\aProbabilisticDatabase$ assigns the same probability to any $\aGraph$ that is not in the previous conditions.
\end{itemize}

The first condition ensures that the names of the individuals are precisely those that we know for certain. The second condition enforces that $1$ and $4$ are mutual friends, as requested. In other words, any graph with positive probability needs to have an edge from $1$ to $4$ with label $\esLabel{friend}$ and analogously from $4$ to $1$. As for the third condition, we ask that no individual is a friend of itself, this is, that every data-graph that represents this social relationship has no loops. Finally, the last condition states that those graphs that satisfy the first three constraints are equiprobable, disregarding of which other relationships they model.

In particular, it is easy to see that there are exactly $2^{2*3 + 2*2}$ data-graphs that fulfil these conditions, and thus they have assigned a non-negative probability, which is precisely $\frac{1}{1024}$. See Figure~\ref{Figure:AProbabilisticDataGraphToyExample} for an illustration.

%Figure:ExampleProbabilisticDatabase

%%% Node Y edge example

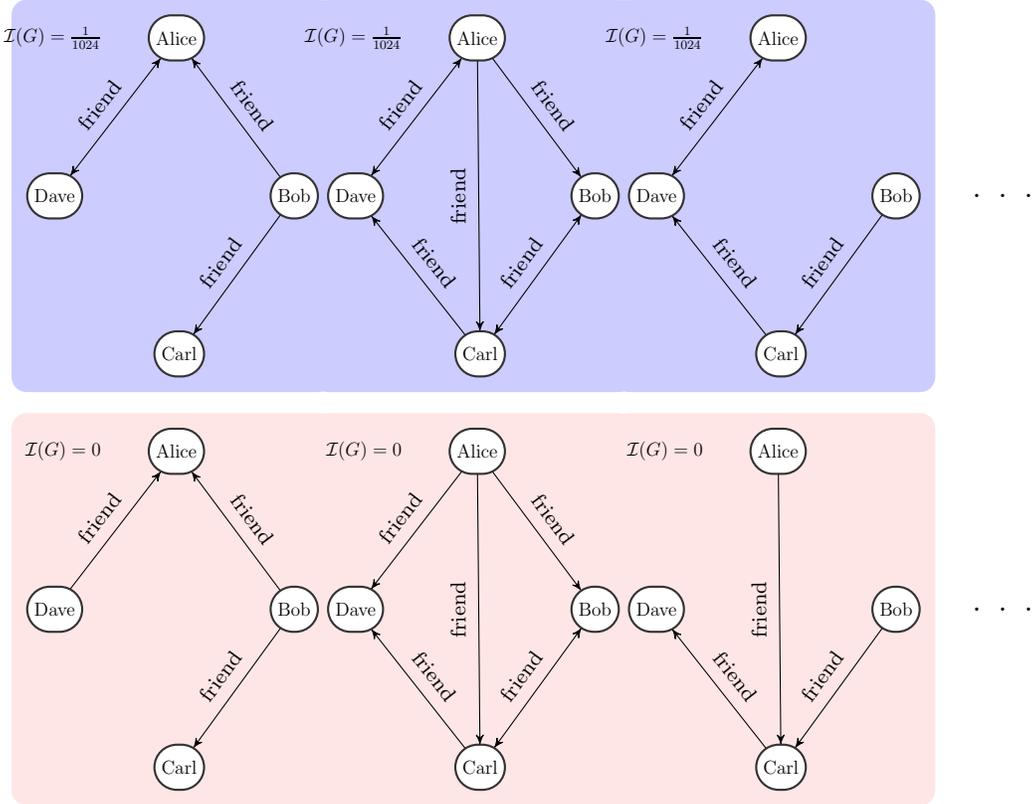
\begin{figure}[ht!]%
\centering 
\begin{tikzpicture}[node distance=3.5cm,->,>=stealth', bend angle=45,auto] %, %every node/.style={inner sep=0,outer sep=0}] scale=0.9
  \tikzstyle{persona}=[rounded rectangle,thick,draw=black!80,fill=black!0,minimum size=10mm]
  %\tikzstyle{otra cosa}=[rectangle,thick,draw=black!75, fill=black!20,minimum size=4mm]
  \tikzstyle{every label}=[blue]
  %\clip (Alice2.north  -| Alice.west) rectangle (Carl3.south  -| Bob3.east); %intento fallido de remover márgenes excesivos
%%%%%%%%%%%%%%%%
  \begin{scope}[xshift=-4cm, yshift=0cm, scale=0.6, every node/.append style={transform shape}]
    % un Datagraph en \aProbabilisticDatabase
    \node [persona] (Alice) {Alice};
    \node [draw = none] (graph_clean_name) [left= 0.9cm of Alice] {$\aProbabilisticDatabase(\aGraph) = \frac{1}{1024}$};
    \node [persona] (Bob) [below right = of Alice] {Bob};
    \node [persona] (Carl) [below left = of Bob] {Carl};
    \node [persona] (Dave) [below left = of Alice] {Dave};
    
    \node [draw = none] (Path_start) [right=0.2cm of Bob] {};
    \node [draw = none] (Path_start_up) [above =1cm of Path_start] {};
    
     \draw [<->, postaction= {decorate,decoration={text along path,text align=center,text={|\scriptsize|friend},raise=5pt}}] (Dave) to (Alice) node {}; 
    
  %  \draw [->, postaction= {decorate,decoration={text along path,text align=center,text={|\scriptsize|friend},raise=5pt, reverse path}}, bend right] (Bob) to (Alice) node {};  %reverse path para que escriba el texto en el lado de arriba, sino queda feo
   \draw [->, postaction= {decorate,decoration={text along path,text align=center,text={|\scriptsize|friend},raise=5pt, reverse path}}] (Bob) to (Alice) node {};
    \draw [->, postaction= {decorate,decoration={text along path,text align=center,text={|\scriptsize|friend},raise=5pt, reverse path}}] (Bob) to (Carl) node {}; 
  \end{scope}

%%%%%%%%%%%%%%%%%%%%%%%%%%%%  
  \begin{scope}[xshift=0cm, yshift=0cm, scale=0.6, every node/.append style={transform shape}]
    % un Datagraph en \aProbabilisticDatabase
    \node [persona] (Alice2) {Alice};
    \node [draw = none] (graph_clean_name) [left= 0.9cm of Alice2] {$\aProbabilisticDatabase(\aGraph) = \frac{1}{1024}$};
    \node [persona] (Bob2) [below right = of Alice2] {Bob};
    \node [persona] (Carl2) [below left = of Bob2] {Carl};
    \node [persona] (Dave2) [below left = of Alice2] {Dave};
    
    \node [draw = none] (Path_start2) [right=0.2cm of Bob2] {};
    \node [draw = none] (Path_start_up2) [above =1cm of Path_start2] {};
    
     \draw [<->, postaction= {decorate,decoration={text along path,text align=center,text={|\scriptsize|friend},raise=5pt}}] (Dave2) to (Alice2) node {}; 
    
   \draw [->, postaction= {decorate,decoration={text along path,text align=center,text={|\scriptsize|friend},raise=5pt}}] (Alice2) to (Bob2) node {};
   \draw [->, postaction= {decorate,decoration={text along path,text align=center,text={|\scriptsize|friend},raise=5pt, reverse path}}] (Alice2) to (Carl2) node {};
   \draw [->, postaction= {decorate,decoration={text along path,text align=center,text={|\scriptsize|friend},raise=5pt, reverse path}}] (Carl2) to (Dave2) node {};
    \draw [<->, postaction= {decorate,decoration={text along path,text align=center,text={|\scriptsize|friend},raise=5pt, reverse path}}] (Bob2) to (Carl2) node {}; 
  \end{scope}
  
%%%%%%%%%%%%%%%%%%%%%%%%%%%%  
  \begin{scope}[xshift=4cm, yshift=0cm, scale=0.6, every node/.append style={transform shape}]
    % un Datagraph en \aProbabilisticDatabase
    \node [persona] (Alice3) {Alice};
    \node [draw = none] (graph_clean_name) [left= 0.9cm of Alice3] {$\aProbabilisticDatabase(\aGraph) = \frac{1}{1024}$};
    \node [persona] (Bob3) [below right = of Alice3] {Bob};
    \node [persona] (Carl3) [below left = of Bob3] {Carl};
    \node [persona] (Dave3) [below left = of Alice3] {Dave};
    \node (moreGraphs) [right = 1cm of Bob3] {\Huge{. . .}};
    
    \node [draw = none] (Path_start3) [right=0.2cm of Bob3] {};
    \node [draw = none] (Path_start_up3) [above =1cm of Path_start3] {};
    
     \draw [<->, postaction= {decorate,decoration={text along path,text align=center,text={|\scriptsize|friend},raise=5pt}}] (Dave3) to (Alice3) node {};

   \draw [->, postaction= {decorate,decoration={text along path,text align=center,text={|\scriptsize|friend},raise=5pt, reverse path}}] (Carl3) to (Dave3) node {};
    \draw [->, postaction= {decorate,decoration={text along path,text align=center,text={|\scriptsize|friend},raise=5pt, reverse path}}] (Bob3) to (Carl3) node {}; 
  \end{scope}

%%%%%%%%%%%%%%%%
  \begin{scope}[xshift=-4cm, yshift=-5.5cm, scale=0.6, every node/.append style={transform shape}]
    % un Datagraph en \aProbabilisticDatabase
    \node [persona] (Alice4) {Alice};
    \node [draw = none] (graph_clean_name4) [left= 0.9cm of Alice4] {$\aProbabilisticDatabase(\aGraph) = 0$};
    \node [persona] (Bob4) [below right = of Alice4] {Bob};
    \node [persona] (Carl4) [below left = of Bob4] {Carl};
    \node [persona] (Dave4) [below left = of Alice4] {Dave};
    
    \node [draw = none] (Path_start4) [right=0.2cm of Bob4] {};
    \node [draw = none] (Path_start_up4) [above =1cm of Path_start4] {};
    
     \draw [->, postaction= {decorate,decoration={text along path,text align=center,text={|\scriptsize|friend},raise=5pt}}] (Dave4) to (Alice4) node {};

   \draw [->, postaction= {decorate,decoration={text along path,text align=center,text={|\scriptsize|friend},raise=5pt, reverse path}}] (Bob4) to (Alice4) node {};
    \draw [->, postaction= {decorate,decoration={text along path,text align=center,text={|\scriptsize|friend},raise=5pt, reverse path}}] (Bob4) to (Carl4) node {}; 
  \end{scope}

%%%%%%%%%%%%%%%%%%%%%%%%%%%%  
  \begin{scope}[xshift=0cm, yshift=-5.5cm, scale=0.6, every node/.append style={transform shape}]
    % un Datagraph en \aProbabilisticDatabase
    \node [persona] (Alice5) {Alice};
    \node [draw = none] (graph_clean_name5) [left= 0.9cm of Alice5] {$\aProbabilisticDatabase(\aGraph) = 0$};
    \node [persona] (Bob5) [below right = of Alice5] {Bob};
    \node [persona] (Carl5) [below left = of Bob5] {Carl};
    \node [persona] (Dave5) [below left = of Alice5] {Dave};
    
    \node [draw = none] (Path_start5) [right=0.2cm of Bob5] {};
    \node [draw = none] (Path_start_up5) [above =1cm of Path_start5] {};
    
     \draw [->, postaction= {decorate,decoration={text along path,text align=center,text={|\scriptsize|friend},raise=5pt, reverse path}}] (Alice5) to (Dave5) node {}; 
    
   \draw [->, postaction= {decorate,decoration={text along path,text align=center,text={|\scriptsize|friend},raise=5pt}}] (Alice5) to (Bob5) node {};
   \draw [->, postaction= {decorate,decoration={text along path,text align=center,text={|\scriptsize|friend},raise=5pt, reverse path}}] (Alice5) to (Carl5) node {};
   \draw [->, postaction= {decorate,decoration={text along path,text align=center,text={|\scriptsize|friend},raise=5pt, reverse path}}] (Carl5) to (Dave5) node {};
    \draw [<->, postaction= {decorate,decoration={text along path,text align=center,text={|\scriptsize|friend},raise=5pt, reverse path}}] (Bob5) to (Carl5) node {}; 
  \end{scope}
  
%%%%%%%%%%%%%%%%%%%%%%%%%%%%  
  \begin{scope}[xshift=4cm, yshift=-5.5cm, scale=0.6, every node/.append style={transform shape}]
    % un Datagraph en \aProbabilisticDatabase
    \node [persona] (Alice6) {Alice};
    \node [draw = none] (graph_clean_name6) [left= 0.9cm of Alice6] {$\aProbabilisticDatabase(\aGraph) = 0$};
    \node [persona] (Bob6) [below right = of Alice6] {Bob};
    \node [persona] (Carl6) [below left = of Bob6] {Carl};
    \node [persona] (Dave6) [below left = of Alice6] {Dave};
    \node (moreGraphs6) [right = 1cm of Bob6] {\Huge{. . .}};
    
    \node [draw = none] (Path_start6) [right=0.2cm of Bob6] {};
    \node [draw = none] (Path_start_up6) [above =1cm of Path_start6] {};

   \draw [->, postaction= {decorate,decoration={text along path,text align=center,text={|\scriptsize|friend},raise=5pt, reverse path}}] (Carl6) to (Dave6) node {};
    \draw [->, postaction= {decorate,decoration={text along path,text align=center,text={|\scriptsize|friend},raise=5pt, reverse path}}] (Bob6) to (Carl6) node {}; 
   \draw [->, postaction= {decorate,decoration={text along path,text align=center,text={|\scriptsize|friend},raise=5pt, reverse path}}] (Alice6) to (Carl6) node {};    
  \end{scope}

%%%%%%%
  \begin{pgfonlayer}{background}
    \filldraw [line width=4mm,join=round,blue!20]
      (Alice.north  -| Bob.east)  rectangle (Carl.south  -| Dave.west);
    \filldraw [line width=4mm,join=round,blue!20]
      (Alice2.north  -| Bob2.east)  rectangle (Carl2.south  -| Dave2.west);
    \filldraw [line width=4mm,join=round,blue!20]
      (Alice3.north  -| Bob3.east)  rectangle (Carl3.south  -| Dave3.west); 
      
    \filldraw [line width=4mm,join=round,red!10]
      (Alice4.north  -| Bob4.east)  rectangle (Carl4.south  -| Dave4.west);     
    \filldraw [line width=4mm,join=round,red!10]
      (Alice5.north  -| Bob5.east)  rectangle (Carl5.south  -| Dave5.west);     
    \filldraw [line width=4mm,join=round,red!10]
      (Alice6.north  -| Bob6.east)  rectangle (Carl6.south  -| Dave6.west);     
  \end{pgfonlayer}  
\end{tikzpicture}
%} %fbox end para testeo
 \caption{A toy example of a probabilistic data-graph, where nodes $1,2,3,4$ are labeled with the $\Sigma_n$ data values $a_1=\esLabel{Alice}, a_2=\esLabel{Bob}, a_3=\esLabel{Carl}$ and $a_4=\esLabel{Dave}$, respectively.}  
 \label{Figure:AProbabilisticDataGraphToyExample}
\end{figure}
\end{example}

Having defined a probabilistic data-graph, we now present the notion of transformation among possible states of a data-graph, 
via a probabilistic criteria.

\begin{definition}\label{def:realizationAndCosuport}
Given by $\Sigma_e$ and $\Sigma_n$, let $\aUniverseOfGraphs$ be the universe of data-graphs and $\ensuremath{P}$ be the universe of probabilistic data-graphs. 
A \defstyle{realization model} (or a \defstyle{noisy observer}) is a function $\aRealizationModel: \aUniverseOfGraphs \to \ensuremath{P}$. That is, $\aRealizationModel$ maps data-graphs $\aGraph\in\aUniverseOfGraphs$ into probabilistic data-graphs.
\end{definition}

Intuitively, $\aRealizationModel$
allows us to ‘observe’ a potentially distorted version of a clean data-graph.
We now formalize the Epistemic Model of a Graph Database.

\begin{definition}[EMDG] \label{def:EMDG}
Let $\aUniverseOfGraphs$ be the universe of data-graphs given by $\Sigma_e$ and $\Sigma_n$. We define an \defstyle{Epistemic Model of a Data-graph (EMDG)} as a pair $(\aProbabilisticDatabase,\aRealizationModel)$, where $\aProbabilisticDatabase$ is a probabilistic data-graph over $\aUniverseOfGraphs$, and $\aRealizationModel$ is a noisy observer. 
\end{definition}

Given $\aGraph'$ and $\aRealizationModel$, we define the \defstyle{cosupport of $G'$} as the set $\{\aGraph \in  U  \mid  \aGraph' \in \aRealizationModel(\aGraph) \}$. We denote its cardinality by $\sigma_{\aRealizationModel}(\aGraph')$, or just $\sigma(\aGraph')$ if $\aRealizationModel$ is clear from the context.
Throughout this paper, we will consider always EMDGs having finite cosupport for every $\aGraph'$.

We denote by $\aInverseRealizationModel_{\aProbabilisticDatabase, \aRealizationModel}$ the function whose domain is $\aUniverseOfGraphs$ and whose image for each data-graph $\aGraph'$ is a particular probabilistic data-graph, such that  $\aInverseRealizationModel_{\aProbabilisticDatabase, \aRealizationModel}(\aGraph')( \aGraph)$ is the probability that the data-graph $\aGraph$ was the clean data-graph transformed via the noisy observer $\aRealizationModel$ into the observed data-graph $\aGraph'$ taking into account the probabilities defined by $\aProbabilisticDatabase$. More precisely, and using the intuition of Bayes' theorem, we define 

\begin{equation}\label{eq:Bayes}
    \aInverseRealizationModel_{\aProbabilisticDatabase,\aRealizationModel}(\aGraph')(\aGraph)  \eqdef \frac{\aProbabilisticDatabase(\aGraph) \times \aRealizationModel(\aGraph)(\aGraph') }{\sum_{\aGraphb \in \aProbabilisticDatabase} (\aProbabilisticDatabase(\aGraphb)\times \aRealizationModel(\aGraphb)(\aGraph'))}
\end{equation}
%$$\aInverseRealizationModel(\aGraph')(\aGraph) \eqdef \frac{\aProbabilisticDatabase(\aGraph) \times \aRealizationModel(\aGraph)(\aGraph') }{\sum_{\aGraphb \in \aProbabilisticDatabase} (\aProbabilisticDatabase(\aGraphb)\times \aRealizationModel(\aGraphb)(\aGraph'))}$$
whenever there exists $\aGraphb \in \aProbabilisticDatabase$ such that $\aGraph'\in \aRealizationModel(\aGraphb)$, %(i.e. when $\aGraph'$ has non-empty cosupport),
and 0 otherwise. Notice that in the former case, (\ref{eq:Bayes}) is well-defined as a probabilistic data-graph given that the cosupport of $\aGraph'$ is finite.
Here, given the observed data-graph
$\aGraph'$ and a data-graph $\aGraph$,
$\aRealizationModel(\aGraph)(\aGraph')$
is the probability that $\aGraph'$ is the graph
that results from transforming $\aGraph$ 
through $\aRealizationModel$. Note that this does not take into account the prior given by 
$\aProbabilisticDatabase(\aGraph)$. The denominator is the normalization factor over all data-graphs with positive probability in
$\aProbabilisticDatabase$ that could lead to 
$\aGraph'$ through $\aRealizationModel$.

%\begin{definition} [EMDG]
%Let $\aUniverseOfGraphs$ be the universe of data-graphs given by $\Sigma_e$ and $\Sigma_n$. We define an \defstyle{Epistemic Model of a Graph Database (EMDG)} as a pair ($\aProbabilisticDatabase,\aRealizationModel$), where:
%    \begin{itemize}
%    \item \aProbabilisticDatabase is a probabilistic data-graph over $\aUniverseOfGraphs$.
%    \item \aRealizationModel, referred as the \defstyle{noisy observer} (or also as the \defstyle{realization model}), is a function which maps data-graphs into probabilistic data-graphs such that for any $\aGraph'\in U$:
%    $$\{\aGraph \in \aProbabilisticDatabase  \mid \aRealizationModel(\aGraph)(\aGraph') >0 \} \mbox{ is finite.}$$ 
%    In other words, for any data-graph of positive probability, its preimage via $\aRealizationModel$ is finite. 
%    \end{itemize}

%\end{definition}

\begin{remark}
Intuitively, $\aInverseRealizationModel_{\aProbabilisticDatabase, \aRealizationModel}(\aGraph')(\aGraph)$ is the conditional probability of $\aGraph$ given the observation of $\aGraph'$, taking into account the distribution over possible states $\aProbabilisticDatabase$ and the realization model $\aRealizationModel$. We may write $\aInverseRealizationModel$ without the subscript when both $\aProbabilisticDatabase$ and $\aRealizationModel$ are clear from the context. Even though we defined $\aInverseRealizationModel(\aGraph')$ as 0 when the cosupport of $\aGraph'$ intersected with the support of $\aProbabilisticDatabase$ is empty, in the proofs and examples we mostly consider and argue about the (much more interesting) non-empty case, so it will be common to refer to $\aInverseRealizationModel(\aGraph')$ as a probabilistic data-graph regardless of the observed data-graph $\aGraph'$. %Also, by definition, the cosupport of $\aGraph'$ is equal to the support of $\aInverseRealizationModel(\aGraph')$. 
\end{remark}
%\vani{No entiendo, dice $\aInverseRealizationModel(\aGraph')$ es un prob data-graph? Por que dice, we define $\aInverseRealizationModel(\aGraph')$ as 0? Por otro lado, la inversión de los G tiene una razon? No era que el observado siempre esta a la derecha? }

As it is usually the case, we are going to assume that $\aProbabilisticDatabase$, $\aRealizationModel$, and $\aInverseRealizationModel$ are all computable functions over adequate representations of their domains and codomains. 
When proving complexity bounds for problems related to EMDGs (or rather PUDGs, defined later in this work) we will be concerned with the set of probabilistic databases and realization models that can be computed efficiently. This means that, given $\aGraph$, we consider $\aProbabilisticDatabase$'s and $\aRealizationModel$'s such that we can compute $\aProbabilisticDatabase(\aGraph)$ and $\aRealizationModel(\aGraph)$ in time $poly(|\aGraph|)$. We also ask for the distribution $\aInverseRealizationModel(\aGraph')$ to be computable in polynomial time given its input and output: that is, we can compute it in $poly(|\aGraph'| +|\aInverseRealizationModel(\aGraph')|)$. Note that defining an arbitrary efficiently computable $\aProbabilisticDatabase$ and $\aRealizationModel$ does not imply that $\aInverseRealizationModel(\aGraph')(\aGraph)$ can be computed in polynomial time.  

\begin{example}\label{example:toySocialNetworkAndNoisyObserverOverGraph}
A particular example of the previous possible interpretation of an EMDG is shown in Figure~\ref{figure:exampleARealizationModel}, where a toy social network is incompletely observed via the `deletion' of a single edge. Here, the data values (in $\Sigma_n$) are name identifiers $\esDato{Alice}, \esDato{Bob}, \esDato{Carl}$, and the possible edge-labels (in $\Sigma_e$) are $\esLabel{friend}$ and $\esLabel{follows}$. In this case, one possible action of $\aRealizationModel$ over the data-graph $\aGraph$ is leaving it unchanged, while the other shown possibility is the deletion of a $\esLabel{friend}$ edge in the direction from Bob to Alice. Assuming that the representation in the figure is complete, we have that $\aRealizationModel(\aGraph)(\aGraph) + \aRealizationModel(\aGraph)(\aGraph') = 1$, since $\aRealizationModel(\aGraph)$ is a probabilistic data-graph.
%\fbox{ %para marcar los bordes... Parece que están bien
\begin{figure}[!ht]%
\centering 
\begin{tikzpicture}[node distance=3.5cm,->,>=stealth', bend angle=45,auto] %, %every node/.style={inner sep=0,outer sep=0}] scale=0.9
  \tikzstyle{persona}=[rounded rectangle,thick,draw=black!80,fill=black!0,minimum size=10mm]
  %\tikzstyle{otra cosa}=[rectangle,thick,draw=black!75, fill=black!20,minimum size=4mm]
  \tikzstyle{every label}=[blue]
  %\clip (Alice2.north  -| Alice.west) rectangle (Carl3.south  -| Bob3.east); %intento fallido de remover márgenes excesivos
%%%%%%%%%%%%%%%%
  \begin{scope}
    % un Datagraph en \aProbabilisticDatabase
    \node [persona] (Alice) {Alice};
    \node [draw = none] (graph_clean_name) [right= 1.5cm of Alice] {$\aGraph \in \aProbabilisticDatabase$};
    \node [persona] (Bob) [below right = of Alice] {Bob};
    \node [persona] (Carl) [below left = of Bob] {Carl};
    
    \node [draw = none] (Path_start) [right=0.2cm of Bob] {};
    \node [draw = none] (Path_start_up) [above =1cm of Path_start] {};
    
    \draw [->, postaction= {decorate,decoration={text along path,text align=center,text={friend},raise=5pt}}, bend right] (Alice) to (Bob) node {}; 
    \draw [->, postaction= {decorate,decoration={text along path,text align=center,text={friend},raise=5pt, reverse path}}, bend right] (Bob) to (Alice) node {};  %reverse path para que escriba el texto en el lado de arriba, sino queda feo
   \draw [->, postaction= {decorate,decoration={text along path,text align=center,text={follows},raise=5pt, reverse path}}] (Bob) to (Alice) node {};
      \draw [->, postaction= {decorate,decoration={text along path,text align=center,text={follows},raise=5pt, reverse path}}] (Bob) to (Carl) node {}; 
  \end{scope}

\tikzstyle{persona}=[rounded rectangle,thick,draw=black!80,fill=black!5,minimum size=10mm] %renuevo para pequeños cambios
%%%%%%%%%%%%%%%%%%
  \begin{scope}[xshift=6cm, yshift=1.5cm, scale=0.6, every node/.append style={transform shape}]
    % Un unclean datagraph producto de R; se borró un edge de Bob hacia Alice
    \node [persona] (Alice2) {Alice};
    \node [draw = none] (graph_unclean_name) [right= 2cm of Alice2] {$\aGraph$};
    \node [persona] (Bob2) [below right = of Alice2] {Bob};
    \node [persona] (Carl2) [below left = of Bob2] {Carl};
    
    \node [draw = none] (Path_end) [left=2.5cm of Bob2] {};
    
    \draw [->, postaction= {decorate,decoration={text along path,text align=center,text={|\scriptsize|friend},raise=5pt}}, bend right] (Alice2) to (Bob2) node {}; 
   \draw [->, postaction= {decorate,decoration={text along path,text align=center,text={|\scriptsize|follows},raise=5pt, reverse path}}] (Bob2) to (Alice2) node {};
    \draw [->, postaction= {decorate,decoration={text along path,text align=center,text={|\scriptsize|friend},raise=5pt, reverse path}}, bend right] (Bob2) to (Alice2) node {};    
   \draw [->, postaction= {decorate,decoration={text along path,text align=center,text={|\scriptsize|follows},raise=5pt, reverse path}}] (Bob2) to (Carl2) node {};
  \end{scope}

%%%%%%%%%%%%%%%%%%%%%%%%%%%%  
   \begin{scope}[xshift=6cm, yshift=-4cm, scale=0.6, every node/.append style={transform shape}]
    % Un unclean datagraph producto de R; se borró un edge de Bob hacia Alice
    \node [persona] (Alice3) {Alice};
    \node [draw = none] (graph_unclean_name) [right= 2cm of Alice3] {$\aGraph'$};
    \node [persona] (Bob3) [below right = of Alice3] {Bob};
    \node [persona] (Carl3) [below left = of Bob3] {Carl};
    
    \node [draw = none] (Path_end2) [left=2.5cm of Bob3] {};
    
    \draw [->, postaction= {decorate,decoration={text along path,text align=center,text={|\scriptsize|friend},raise=5pt}}, bend right] (Alice3) to (Bob3) node {}; 
   \draw [->, postaction= {decorate,decoration={text along path,text align=center,text={|\scriptsize|follows},raise=5pt, reverse path}}] (Bob3) to (Alice3) node {};
   \draw [->, postaction= {decorate,decoration={text along path,text align=center,text={|\scriptsize|follows},raise=5pt, reverse path}}] (Bob3) to (Carl3) node {};
  \end{scope}

% representacion de \aRealizationModel
  \draw [thick,decorate, decoration=snake, segment length=10mm]
    (Path_start) ->  (Path_end) 
    node [above=3mm,midway,text width=3cm,text centered]
      {$\aRealizationModel$};
  \draw [thick,decorate, decoration=snake, segment length=10mm]
    (Path_start) ->  (Path_end2) 
    node [above=3mm,midway,text width=3cm,text centered]
      {$\aRealizationModel$};      
    % \draw [thick, bend left]  %decorate, decoration=snake, segment length=6mm,
    % (Path_start) to [out= -75, in = -90 ]  (Path_start_up) 
    % node [right=1mm,midway, text width=0.5cm, text centered]
    %   {};

  \begin{pgfonlayer}{background}
    \filldraw [line width=4mm,join=round,blue!20]
      (Alice.north  -| Bob.east)  rectangle (Carl.south  -| Carl.west);
    \filldraw [line width=4mm,join=round,blue!10]      
      (Alice2.north -| Bob2.east) rectangle (Carl2.south -| Carl2.west)
      (Alice3.north -| Bob3.east) rectangle (Carl3.south -| Carl3.west);      
  \end{pgfonlayer}  
\end{tikzpicture}
%} %fbox end para testeo
\caption{Two possible actions of a particular noisy observer $\aRealizationModel$ over a given data-graph $\aGraph$ in $\aProbabilisticDatabase$. In one case, with probability $\aRealizationModel(\aGraph)(\aGraph)$, the original database is left unchanged. In the other case, with probability $\aRealizationModel(\aGraph)(\aGraph')$, one particular edge of the original database is deleted.}  
\label{figure:exampleARealizationModel}
\end{figure}
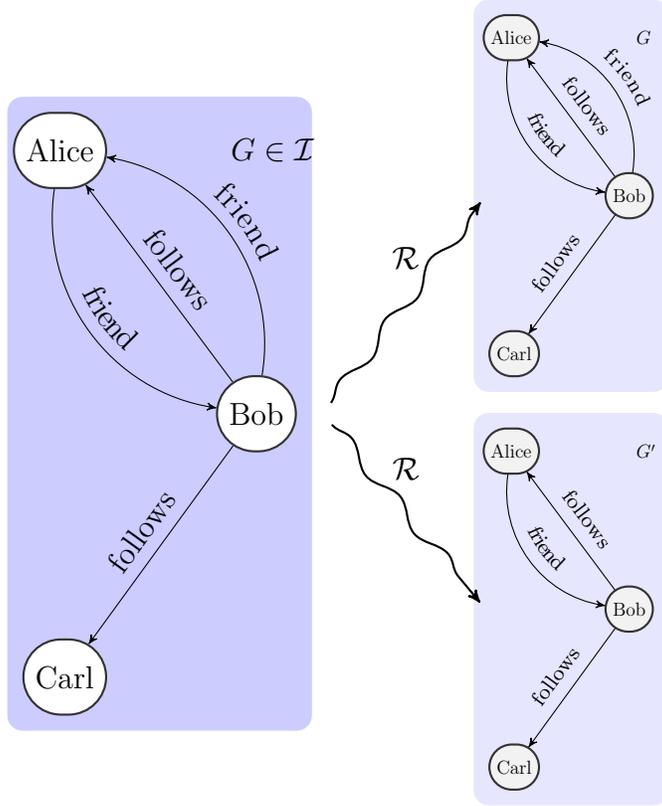
\end{example}

Another possible interpretation for EMDGs is that $\aProbabilisticDatabase$ represents idealized databases over certain domain, and $\aRealizationModel$ transforms each idealized version into many possible error-prone implementations. For example, $\aProbabilisticDatabase$ could represent bibliographical databases, and $\aRealizationModel$ could produce databases where some type of data-duplication errors have been made. 

Inspired by~\cite{rekatsinas2017holoclean, de2018formal}, we define a probabilistic unclean data-graph in the following way:
\begin{definition} [PUDG]\label{def:PUDG}
We define a \defstyle{Probabilistic Unclean Data-graph (PUDG)} as a triple $\aPUDG = (\aProbabilisticDatabase,\aRealizationModel,\aGraph')$, where: 
    \begin{itemize}
    \item ($\aProbabilisticDatabase,\aRealizationModel$) is an EMDG.
    \item $\aGraph'$ is the observed or unclean data-graph.
    \end{itemize}
\end{definition}

Intuitively, $\aGraph'$ is the current observable state of the database, which we suspect incorrect because of errors introduced due to data entry mistakes, data integration problems, or other kind of operations that could have polluted the original (unknown) data-graph~$\aGraph$. We have some knowledge about the original state of the data-graph prior to these issues, which is modeled by the distribution $\aProbabilisticDatabase$; and we also have some idea of the possible ways these original  data-graphs from $\aProbabilisticDatabase$ could have become unclean, which is modeled by $\aRealizationModel$.

\begin{example}\label{example:SocialNetwork}
Let $\Sigma_e = \{ \esLabel{friend\_of} \}$, $\Sigma_n$ be a set of common names. 
Consider $S_N$ the set of finite data-graphs over $\Sigma_e, \Sigma_n$ that have at least 2 nodes and no more than $N$ nodes. %(or any other arbitrary but fixed upper bound). %, and where the length of data values (i.e. the length of the words) is no greater than 10. %Observe that $S$ is finite, but extremely large.
%We are going to interpret $S$ as a set of simple social networks, where the data value of a node represents the last name of a person. 
Let $\aProbabilisticDatabase : \aUniverseOfGraphs \to [0,1]$ be a uniform distribution of probabilities among all graphs of $S_N$, having $\aProbabilisticDatabase(\aGraph) = 0$ if $\aGraph \in \aUniverseOfGraphs \backslash S_N$, and otherwise with $\aProbabilisticDatabase(\aGraph)$ given by some prior over the typical shape of social networks.

We can use $\aRealizationModel$ to model the probability of erroneously removing a friendship relation between two people. For example, it could be the case that the probability of this error is a fixed number $p$ between $0$ and $1$, and that the probability of making this error is independent for each edge; therefore, larger databases would usually be observed with more errors. We can represent this idea by defining $\aRealizationModel$ such that, for any unclean observed state $\aGraph'$ and data-graph $\aGraph$, we have:
$$
\aRealizationModel(\aGraph) (\aGraph') = \begin{cases} p^{|E_\aGraph \setminus E_{\aGraph'}|} \times (1-p)^{|E_{\aGraph'}|}& \text{if $\aGraph' \subseteq \aGraph$ and $V_\aGraph = V_\aGraph'$}\\
0 & \text{otherwise}
\end{cases}
$$

Note that $\aRealizationModel$, as defined above, is indeed a realization model. That is, for every $\aGraph$, $\aRealizationModel(\aGraph)$ is a probabilistic data graph: 
$$\sum_{\aGraph' \in \aUniverseOfGraphs} \aRealizationModel(\aGraph)(\aGraph')= \sum_{\aGraph' \subseteq \aGraph, V_\aGraph = V_{\aGraph'}} p^{|E_\aGraph \setminus E_{\aGraph'}|} \times (1-p)^{|E_{\aGraph'}|} = 1$$

Furthermore, note that this way of constructing $\aRealizationModel$ can be generalized naturally to larger sets of edge labels $\Sigma_e$ by assigning different values $p_\aLabel$ to each $\aLabel \in \Sigma_e$.
% $$
% \aRealizationModel(\aGraph)(\aGraph') = \prod_{\aLabel \in \Sigma_e \mid } p^{|E_\aGraph \setminus E_{\aGraph'}|} \times (1-p)^{|E_{\aGraph'}|}
% $$

%In this context, the `unclean' observation of a graph $\aGraph'$, the image via $\aRealizationModel$ of some $\aGraph \in \aProbabilisticDatabase$, gives us valuable information.
\end{example}

\begin{example}\label{example:BibliographicalRIntroducesAuthorship}
Let $\Sigma_e = \{\esLabel{author\_of}, \esLabel{name}, \esLabel{title}\}$, and let $\Sigma_n$ be the set of strings over the English alphabet. 
We can consider an %Epistemic Model of a Graph Database 
EMDG $(\aProbabilisticDatabase,\aRealizationModel)$, where $\aProbabilisticDatabase$ assigns non-zero probability to all data-graphs that form a viable representation of bibliographical data (captured using some path or tree expression), and where $\aRealizationModel$ can introduce errors by adding outgoing $\esLabel{author\_of}$ edges from any node that has no incoming edges. For example, $\aRealizationModel$ could erroneously assign co-authorship of the same book, or it could (in a more blatant error) introduce authorship between a person and a name, or between a person and another person.

If we have a PUDG $\aPUDG = (\aProbabilisticDatabase,\aRealizationModel,\aGraph')$, then the introduction of errors might cause the observed $\aGraph'$ not to coincide with the initial clean data-graph. Nonetheless, the nature of this $\aRealizationModel$ allows us to know that the true answer must be a subgraph of $\aGraph'$ where the only possible changes, if any, are the removal of $\esLabel{author\_of}$ edges. Furthermore, the information on $\aProbabilisticDatabase$ can also be used; for example, by its own nature we know with certainty that any $\esLabel{author\_of}$ edges between different authors are errors introduced by $\aRealizationModel$.
\end{example}

In principle, $\aRealizationModel$ might represent any possible transformation between data-graphs. In order to restrict the generality of $\aRealizationModel$ while still preserving a good amount of expressiveness so it remains useful for real-world applications, we will consider realization models that either only delete, only add, or only modify data in the data-graphs (either nodes or edges or data). Considering any of these restrictions is an usual practice when reasoning over uncertainty or inconsistencies and, furthermore, 
all three notions can be related to common semantics for repairing databases~\cite{afrati2009repair,barcelo2017data, chomicki2005minimal, Cate:2015}. 

In order to formalize these ideas, we start by defining different types of EMDG that characterize the different realization models we seek to capture.

\begin{definition}
%Epistemic Model of a Graph Database (EMDG) is  a pair ($\aProbabilisticDatabase,\aRealizationModel$),
A \defstyle{Subset EMDG} is a pair $(\aProbabilisticDatabase, \aRealizationModel)$ such that the noisy observer $\aRealizationModel$ satisfies that $\aGraph' \subseteq \aGraph$ for every data-graph $\aGraph'$ such that $\aRealizationModel(\aGraph)(\aGraph') > 0$.

%Furthermore we ask that $\aSubsetRealizationModel$ can be expressed as a pair of functions $\aSubsetRealizationModel = (\aSubsetRealizationModelNode, \aSubsetRealizationModelEdge)$ such that $\aSubsetRealizationModelNode: \Sigma_n \to [0,1] \cap \QN$ models the probability of deleting a node without edges according to its data value, and $\aSubsetRealizationModelEdge: \Sigma_e \to [0,1] \cap \QN$ models the probability of deleting an edge depending on its label. 
\end{definition}

In a way, we can think that the noisy observer of a Subset EMDG introduces node or edge deletions.
%%%% superset def  %%%%
To formalize the second notion we define {Superset EMDG} in an analogous way:

\begin{definition}
A \defstyle{Superset EMDG} is a pair $(\aProbabilisticDatabase, \aRealizationModel)$ such that the noisy observer $\aRealizationModel$ satisfies that $\aGraph' \supseteq \aGraph$ for every
data-graph $\aGraph'$ such that $\aRealizationModel(\aGraph)(\aGraph') > 0$.
 \end{definition}

In other words, in this case the noisy observer can add nodes or edges to the original clean graph.

%%%% superset def  %%%%

%%%%%%%%%%%%%%%% UPDATE DEFS %%%%%%%%%%%%%%%%

For the case where data modifications are allowed, we provide now formal definitions for two different kinds of  data modifications.

\begin{definition}\label{def:update} %[Update EMDG]
%Let $\aGraph \sim \aGraph'$ if and only if $V_\aGraph = V_{\aGraph'}$ and $ |L_\aGraph(v,w)| = |L_{\aGraph'}(v,w)|$ for all $u,v\in V_\aGraph$.
We say that $(\aProbabilisticDatabase, \aRealizationModel)$ is a \defstyle{Node-Update EMDG} when $\aRealizationModel$ is such that if $\aRealizationModel(\aGraph)(\aGraph') > 0$, then $V_\aGraph = V_{\aGraph'}$ and $L_\aGraph(v,w) = L_{\aGraph'}(v,w)$ for all $u,v\in V_\aGraph$.
%$\aGraph \sim \aGraph'$.

On the other hand, we say that $(\aProbabilisticDatabase, \aRealizationModel)$ is an \defstyle{Update EMDG} when $\aRealizationModel$ is such that if $\aRealizationModel(\aGraph)(\aGraph') > 0$, then $V_\aGraph = V_{\aGraph'}$ and $ |L_\aGraph(v,w)| = |L_{\aGraph'}(v,w)|$ for all $u,v\in V_\aGraph$.

%$\aGraph \sim \aGraph'$.
\end{definition}

\begin{comment}

\begin{definition} \nina{quité el nombre de la def entre paréntesis porque los otros no lo tienen} %[Node-Update EMDG]
%Consider two distinct data-graphs $G$ and $G'$ in $U$. We say that $\aGraph \sim \aGraph'$ if and only if $V_\aGraph = V_{\aGraph'}$ and $ L_\aGraph(v,w) = L_{\aGraph'}(v,w)$ for all $v,w\in V_\aGraph$. 
We say that $(\aProbabilisticDatabase, \aRealizationModel)$ is a \defstyle{Node-Update EMDG} when $\aRealizationModel$ is such that if $\aRealizationModel(\aGraph)(\aGraph') > 0$ then $\aGraph \sim \aGraph'$.
\end{definition}

\begin{definition}\label{def:update} %[Update EMDG]
%Let $\aGraph \sim \aGraph'$ if and only if $V_\aGraph = V_{\aGraph'}$ and $ |L_\aGraph(v,w)| = |L_{\aGraph'}(v,w)|$ for all $u,v\in V_\aGraph$.
We say that $(\aProbabilisticDatabase, \aRealizationModel)$ is an \defstyle{Update EMDG} when $\aRealizationModel$ is such that if $\aRealizationModel(\aGraph)(\aGraph') > 0$ then $\aGraph \sim \aGraph'$.
\end{definition}

\end{comment}

%\begin{example}\label{example:NodeUpdate}
Intuitively, a node-update only modifies the data-values in the original data-graph, leaving all other structure intact, that is, only the function $D$ might change. Figure~\ref{figure:exampleNodeUpdate} shows a toy example of a social network where the names of the individuals can be altered by the noisy observer $\aRealizationModel$, while leaving the rest of the network's structure intact. 
Compare with Figure~\ref{figure:exampleARealizationModel}, and more generally with Subset and Superset EMDGs, where the observer is able to modify the data-graph via the removal or addition of edges, but without changing the data in the nodes. %\input{Examples/Figure:ExampleNodeUpdate.tex} %Cambiado. Ahora habría que agregarle el figure
%\end{example}
%\begin{example}\label{example:NodeEdgeUpdate}
Analogously, we may understand an Update EMDG as a transformation that is allowed to modify both the data-values from the nodes and the labels from the edges, while leaving intact the remaining structure (including the number of edges). Figure~\ref{figure:exampleNodeEdgeUpdate} presents a simple example where more changes are possible than in Figure~\ref{figure:exampleNodeUpdate}.

\begin{definition}[$\Pi$ PUDG]
A \defstyle{$\Pi$ PUDG} is a triple $\aPUDG =(\aProbabilisticDatabase,\aRealizationModel,\aGraph')$ where $(\aProbabilisticDatabase, \aRealizationModel)$ is a $\Pi$ EMDG where $\Pi \in \{\text{Subset}, \text{Superset}, \text{Update}, \text{Node-Update}\}$.
\end{definition}

%%%%%%%%%%%%%%%%%%%%%%%%%%% PI PUDG   %%%%%%%%%%%%%%%%%%%%%%%%%%

\begin{example} 
An Update PUDG can be used to model uncertainty that is focused on particular portions of the data-graph. For example, $\Sigma_n, \Sigma_e$ can contain distinguished symbols (such as \esDato{Name?} and \esLabel{?}) which denote unknown data. Our prior \aProbabilisticDatabase might represent a distribution of probabilities over the real state of the world, with no data-graph in $\aProbabilisticDatabase$ containing one of the uncertainty labels. Then, we could use a noisy observer $\aRealizationModel$ that only modifies a graph by transforming precise data values or labels into unknown ones. 
See Figure~\ref{figure:exampleEpistemicSymbolsNodeEdgeUpdate} for a toy representation of this kind of modelling.
%%% Node Y edge example

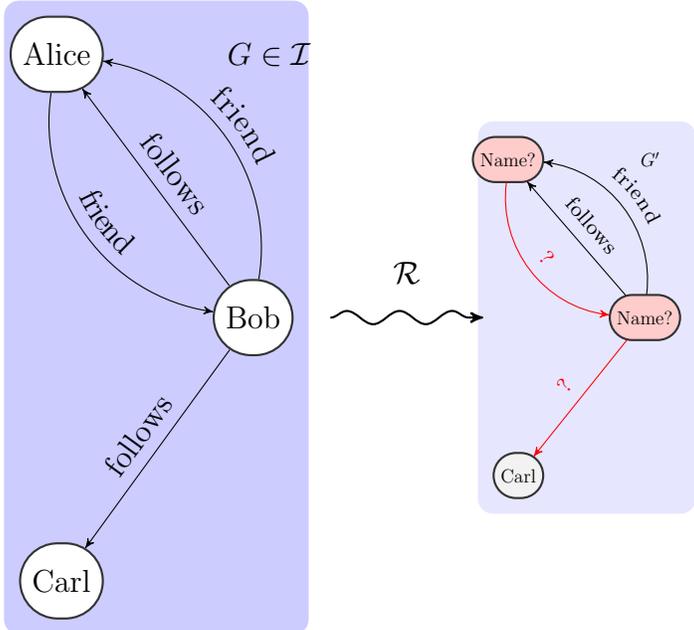
\begin{figure}[ht!]%
\centering 
\begin{tikzpicture}[node distance=3.5cm,->,>=stealth', bend angle=45,auto] %, %every node/.style={inner sep=0,outer sep=0}] scale=0.9
  \tikzstyle{persona}=[rounded rectangle,thick,draw=black!80,fill=black!0,minimum size=10mm]
  %\tikzstyle{otra cosa}=[rectangle,thick,draw=black!75, fill=black!20,minimum size=4mm]
  \tikzstyle{every label}=[blue]
  %\clip (Alice2.north  -| Alice.west) rectangle (Carl3.south  -| Bob3.east); %intento fallido de remover márgenes excesivos
%%%%%%%%%%%%%%%%
  \begin{scope}
    % un Datagraph en \aProbabilisticDatabase
    \node [persona] (Alice) {Alice};
    \node [draw = none] (graph_clean_name) [right= 1.5cm of Alice] {$\aGraph \in \aProbabilisticDatabase$};
    \node [persona] (Bob) [below right = of Alice] {Bob};
    \node [persona] (Carl) [below left = of Bob] {Carl};
    
    \node [draw = none] (Path_start) [right=0.2cm of Bob] {};
    \node [draw = none] (Path_start_up) [above =1cm of Path_start] {};
    
    \draw [->, postaction= {decorate,decoration={text along path,text align=center,text={friend},raise=5pt}}, bend right] (Alice) to (Bob) node {}; 
    \draw [->, postaction= {decorate,decoration={text along path,text align=center,text={friend},raise=5pt, reverse path}}, bend right] (Bob) to (Alice) node {};  %reverse path para que escriba el texto en el lado de arriba, sino queda feo
   \draw [->, postaction= {decorate,decoration={text along path,text align=center,text={follows},raise=5pt, reverse path}}] (Bob) to (Alice) node {};
      \draw [->, postaction= {decorate,decoration={text along path,text align=center,text={follows},raise=5pt, reverse path}}] (Bob) to (Carl) node {}; 
  \end{scope}

\tikzstyle{persona}=[rounded rectangle,thick,draw=black!80,fill=black!5,minimum size=10mm] %renuevo para pequeños cambios
%%%%%%%%%%%%%%%%%%
%   \begin{scope}[xshift=6cm, yshift=1.5cm, scale=0.6, every node/.append style={transform shape}]
%     % Un unclean datagraph producto de R; se borró un edge de Bob hacia Alice
%     \node [persona] (Alice2) {Alice};
%     \node [draw = none] (graph_unclean_name) [right= 2cm of Alice2] {$\aGraph$};
%     \node [persona] (Bob2) [below right = of Alice2] {Bob};
%     \node [persona] (Carl2) [below left = of Bob2] {Carl};
    
%     \node [draw = none] (Path_end) [left=2.5cm of Bob2] {};
    
%     \draw [->, postaction= {decorate,decoration={text along path,text align=center,text={|\scriptsize|friend},raise=5pt}}, bend right] (Alice2) to (Bob2) node {}; 
%   \draw [->, postaction= {decorate,decoration={text along path,text align=center,text={|\scriptsize|follows},raise=5pt, reverse path}}] (Bob2) to (Alice2) node {};
%     \draw [->, postaction= {decorate,decoration={text along path,text align=center,text={|\scriptsize|friend},raise=5pt, reverse path}}, bend right] (Bob2) to (Alice2) node {};    
%   \draw [->, postaction= {decorate,decoration={text along path,text align=center,text={|\scriptsize|follows},raise=5pt, reverse path}}] (Bob2) to (Carl2) node {};
%   \end{scope}

%%%%%%%%%%%%%%%%%%%%%%%%%%%%  
   \begin{scope}[xshift=6cm, yshift=-1.4cm, scale=0.6, every node/.append style={transform shape}]
    % Un unclean datagraph producto de R; se borró un edge de Bob hacia Alice
    \node [persona, fill=red!20] (Alice3) {Name?};
    \node [draw = none] (graph_unclean_name) [right= 2cm of Alice3] {$\aGraph'$};
    \node [persona, fill=red!20] (Bob3) [below right = of Alice3] {Name?};
    \node [persona] (Carl3) [below left = of Bob3] {Carl};
    
    \node [draw = none] (Path_end2) [left=2.5cm of Bob3] {};
    
    \draw [->, draw = red, postaction= {decorate,decoration={text along path,text align=center, text={|\scriptsize\color{red}|?},raise=5pt}}, bend right] (Alice3) to (Bob3) node {};  % o draw = red si double es demasiado llamativo
   \draw [->, postaction= {decorate,decoration={text along path,text align=center,text={|\scriptsize|follows},raise=5pt, reverse path}}] (Bob3) to (Alice3) node {};
    \draw [->, postaction= {decorate,decoration={text along path,text align=center,text={|\scriptsize|friend},raise=5pt, reverse path}}, bend right] (Bob3) to (Alice3) node {};      
   \draw [->, draw = red, postaction= {decorate,decoration={text along path,text align=center,text={|\scriptsize\color{red}|?},raise=5pt, reverse path}}] (Bob3) to (Carl3) node {};
  \end{scope}

% representacion de \aRealizationModel
%   \draw [thick,decorate, decoration=snake, segment length=10mm]
%     (Path_start) ->  (Path_end) 
%     node [above=3mm,midway,text width=3cm,text centered]
%       {$\aRealizationModel$};
  \draw [thick,decorate, decoration=snake, segment length=7mm] %7mm en vez de 10 por ser una
    (Path_start) ->  (Path_end2) 
    node [above=3mm,midway,text width=3cm,text centered]
      {$\aRealizationModel$};      
    % \draw [thick, bend left]  %decorate, decoration=snake, segment length=6mm,
    % (Path_start) to [out= -75, in = -90 ]  (Path_start_up) 
    % node [right=1mm,midway, text width=0.5cm, text centered]
    %   {};

  \begin{pgfonlayer}{background}
    \filldraw [line width=4mm,join=round,blue!20]
      (Alice.north  -| Bob.east)  rectangle (Carl.south  -| Carl.west);
    \filldraw [line width=4mm,join=round,blue!10]      
 %     (Alice2.north -| Bob2.east) rectangle (Carl2.south -| Carl2.west)
      (Alice3.north -| Bob3.east) rectangle (Carl3.south -| Carl3.west);      
  \end{pgfonlayer}  
\end{tikzpicture}
%} %fbox end para testeo
 \caption{An Update PUGD with distinguished symbols to pinpoint which data is uncertain.}  
 \label{figure:exampleEpistemicSymbolsNodeEdgeUpdate}
\end{figure}
\end{example}

\newcommand{\ScaleFactorForExampleTwoPUGDs}{0.7} %0.65
\newcommand{\SizeMiniPagesForExampleTwoPUGDs}{.48\textwidth} %0.45

\begin{figure}[!htb]
    \centering
    \begin{subfigure}{\SizeMiniPagesForExampleTwoPUGDs}
        \centering
        \scalebox{\ScaleFactorForExampleTwoPUGDs}
{
%\fbox{ %para marcar los bordes... Parece que están bien
%\begin{figure}[ht!]%
%\centering 
\begin{tikzpicture}[node distance=3.5cm,->,>=stealth', bend angle=45,auto] %, %every node/.style={inner sep=0,outer sep=0}] scale=0.9
  \tikzstyle{persona}=[rounded rectangle,thick,draw=black!80,fill=black!0,minimum size=10mm]
  %\tikzstyle{otra cosa}=[rectangle,thick,draw=black!75, fill=black!20,minimum size=4mm]
  \tikzstyle{every label}=[blue]
  %\clip (Alice2.north  -| Alice.west) rectangle (Carl3.south  -| Bob3.east); %intento fallido de remover márgenes excesivos
%%%%%%%%%%%%%%%%
  \begin{scope}
    % un Datagraph en \aProbabilisticDatabase
    \node [persona] (Alice) {Alice};
    \node [draw = none] (graph_clean_name) [right= 1.5cm of Alice] {$\aGraph \in \aProbabilisticDatabase$};
    \node [persona] (Bob) [below right = of Alice] {Bob};
    \node [persona] (Carl) [below left = of Bob] {Carl};
    
    \node [draw = none] (Path_start) [right=0.2cm of Bob] {};
    \node [draw = none] (Path_start_up) [above =1cm of Path_start] {};
    
    \draw [->, postaction= {decorate,decoration={text along path,text align=center,text={friend},raise=5pt}}, bend right] (Alice) to (Bob) node {}; 
    \draw [->, postaction= {decorate,decoration={text along path,text align=center,text={friend},raise=5pt, reverse path}}, bend right] (Bob) to (Alice) node {};  %reverse path para que escriba el texto en el lado de arriba, sino queda feo
   \draw [->, postaction= {decorate,decoration={text along path,text align=center,text={follows},raise=5pt, reverse path}}] (Bob) to (Alice) node {};
      \draw [->, postaction= {decorate,decoration={text along path,text align=center,text={follows},raise=5pt, reverse path}}] (Bob) to (Carl) node {}; 
  \end{scope}

\tikzstyle{persona}=[rounded rectangle,thick,draw=black!80,fill=black!5,minimum size=10mm] %renuevo para pequeños cambios
%%%%%%%%%%%%%%%%%%
  \begin{scope}[xshift=6cm, yshift=1.5cm, scale=0.6, every node/.append style={transform shape}]
    % Un unclean datagraph producto de R; se borró un edge de Bob hacia Alice
    \node [persona] (Alice2) {Alice};
    \node [draw = none] (graph_unclean_name) [right= 2cm of Alice2] {$\aGraph$};
    \node [persona] (Bob2) [below right = of Alice2] {Bob};
    \node [persona] (Carl2) [below left = of Bob2] {Carl};
    
    \node [draw = none] (Path_end) [left=2.5cm of Bob2] {};
    
    \draw [->, postaction= {decorate,decoration={text along path,text align=center,text={|\scriptsize|friend},raise=5pt}}, bend right] (Alice2) to (Bob2) node {}; 
   \draw [->, postaction= {decorate,decoration={text along path,text align=center,text={|\scriptsize|follows},raise=5pt, reverse path}}] (Bob2) to (Alice2) node {};
    \draw [->, postaction= {decorate,decoration={text along path,text align=center,text={|\scriptsize|friend},raise=5pt, reverse path}}, bend right] (Bob2) to (Alice2) node {};    
   \draw [->, postaction= {decorate,decoration={text along path,text align=center,text={|\scriptsize|follows},raise=5pt, reverse path}}] (Bob2) to (Carl2) node {};
  \end{scope}

%%%%%%%%%%%%%%%%%%%%%%%%%%%%  
   \begin{scope}[xshift=6cm, yshift=-4cm, scale=0.6, every node/.append style={transform shape}]
    % Un unclean datagraph producto de R; se borró un edge de Bob hacia Alice
    \node [persona, fill=red!20] (Alice3) {Alise};
    \node [draw = none] (graph_unclean_name) [right= 2cm of Alice3] {$\aGraph'$};
    \node [persona, fill=red!20] (Bob3) [below right = of Alice3] {Carl};
    \node [persona] (Carl3) [below left = of Bob3] {Carl};
    
    \node [draw = none] (Path_end2) [left=2.5cm of Bob3] {};
    
    \draw [->, postaction= {decorate,decoration={text along path,text align=center,text={|\scriptsize|friend},raise=5pt}}, bend right] (Alice3) to (Bob3) node {}; 
   \draw [->, postaction= {decorate,decoration={text along path,text align=center,text={|\scriptsize|follows},raise=5pt, reverse path}}] (Bob3) to (Alice3) node {};
    \draw [->, postaction= {decorate,decoration={text along path,text align=center,text={|\scriptsize|friend},raise=5pt, reverse path}}, bend right] (Bob3) to (Alice3) node {};      
   \draw [->, postaction= {decorate,decoration={text along path,text align=center,text={|\scriptsize|follows},raise=5pt, reverse path}}] (Bob3) to (Carl3) node {};
  \end{scope}

% representacion de \aRealizationModel
  \draw [thick,decorate, decoration=snake, segment length=10mm]
    (Path_start) ->  (Path_end) 
    node [above=3mm,midway,text width=3cm,text centered]
      {$\aRealizationModel$};
  \draw [thick,decorate, decoration=snake, segment length=10mm]
    (Path_start) ->  (Path_end2) 
    node [above=3mm,midway,text width=3cm,text centered]
      {$\aRealizationModel$};      
    % \draw [thick, bend left]  %decorate, decoration=snake, segment length=6mm,
    % (Path_start) to [out= -75, in = -90 ]  (Path_start_up) 
    % node [right=1mm,midway, text width=0.5cm, text centered]
    %   {};

  \begin{pgfonlayer}{background}
    \filldraw [line width=4mm,join=round,blue!20]
      (Alice.north  -| Bob.east)  rectangle (Carl.south  -| Carl.west);
    \filldraw [line width=4mm,join=round,blue!10]      
      (Alice2.north -| Bob2.east) rectangle (Carl2.south -| Carl2.west)
      (Alice3.north -| Bob3.east) rectangle (Carl3.south -| Carl3.west);      
  \end{pgfonlayer}  
\end{tikzpicture}
%} %fbox end para testeo
% \caption{Here the $\aRealizationModel$ is a Node-Update noisy observer, which can introduce duplicated data or change in other ways the original data values in the nodes, but does not otherwise modify the original data-graph.}  
% \label{figure:exampleNodeUpdate}
%\end{figure}
}
\caption{Here the $\aRealizationModel$ is a Node-Update noisy observer, which can introduce %duplicated data or make other 
changes to the original data values in the nodes, but does not otherwise modify the original data-graph.}  
\label{figure:exampleNodeUpdate}
    \end{subfigure}%%%%%%%%%%%%%%%%%%%%%%%%%%%%%%%%%%%%
\hspace{0.15cm}\vline\hspace{0.15cm}
%%%%%%%%%%%%%%%%%%%%%%%%%%%%%%%
    \begin{subfigure}{\SizeMiniPagesForExampleTwoPUGDs}
        \centering
        \scalebox{\ScaleFactorForExampleTwoPUGDs}{
%%% Node Y edge example

%\begin{figure}[ht!]%
%\centering 
\begin{tikzpicture}[node distance=3.5cm,->,>=stealth', bend angle=45,auto] %, %every node/.style={inner sep=0,outer sep=0}] scale=0.9
  \tikzstyle{persona}=[rounded rectangle,thick,draw=black!80,fill=black!0,minimum size=10mm]
  %\tikzstyle{otra cosa}=[rectangle,thick,draw=black!75, fill=black!20,minimum size=4mm]
  \tikzstyle{every label}=[blue]
  %\clip (Alice2.north  -| Alice.west) rectangle (Carl3.south  -| Bob3.east); %intento fallido de remover márgenes excesivos
%%%%%%%%%%%%%%%%
  \begin{scope}
    % un Datagraph en \aProbabilisticDatabase
    \node [persona] (Alice) {Alice};
    \node [draw = none] (graph_clean_name) [right= 1.5cm of Alice] {$\aGraph \in \aProbabilisticDatabase$};
    \node [persona] (Bob) [below right = of Alice] {Bob};
    \node [persona] (Carl) [below left = of Bob] {Carl};
    
    \node [draw = none] (Path_start) [right=0.2cm of Bob] {};
    \node [draw = none] (Path_start_up) [above =1cm of Path_start] {};
    
    \draw [->, postaction= {decorate,decoration={text along path,text align=center,text={friend},raise=5pt}}, bend right] (Alice) to (Bob) node {}; 
    \draw [->, postaction= {decorate,decoration={text along path,text align=center,text={friend},raise=5pt, reverse path}}, bend right] (Bob) to (Alice) node {};  %reverse path para que escriba el texto en el lado de arriba, sino queda feo
   \draw [->, postaction= {decorate,decoration={text along path,text align=center,text={follows},raise=5pt, reverse path}}] (Bob) to (Alice) node {};
      \draw [->, postaction= {decorate,decoration={text along path,text align=center,text={follows},raise=5pt, reverse path}}] (Bob) to (Carl) node {}; 
  \end{scope}

\tikzstyle{persona}=[rounded rectangle,thick,draw=black!80,fill=black!5,minimum size=10mm] %renuevo para pequeños cambios
%%%%%%%%%%%%%%%%%%
  \begin{scope}[xshift=6cm, yshift=1.5cm, scale=0.6, every node/.append style={transform shape}]
    % Un unclean datagraph producto de R; se borró un edge de Bob hacia Alice
    \node [persona] (Alice2) {Alice};
    \node [draw = none] (graph_unclean_name) [right= 2cm of Alice2] {$\aGraph$};
    \node [persona] (Bob2) [below right = of Alice2] {Bob};
    \node [persona] (Carl2) [below left = of Bob2] {Carl};
    
    \node [draw = none] (Path_end) [left=2.5cm of Bob2] {};
    
    \draw [->, postaction= {decorate,decoration={text along path,text align=center,text={|\scriptsize|friend},raise=5pt}}, bend right] (Alice2) to (Bob2) node {}; 
   \draw [->, postaction= {decorate,decoration={text along path,text align=center,text={|\scriptsize|follows},raise=5pt, reverse path}}] (Bob2) to (Alice2) node {};
    \draw [->, postaction= {decorate,decoration={text along path,text align=center,text={|\scriptsize|friend},raise=5pt, reverse path}}, bend right] (Bob2) to (Alice2) node {};    
   \draw [->, draw = red, postaction= {decorate,decoration={text along path,text align=center,text={|\scriptsize\color{red}|friend},raise=5pt, reverse path}}] (Bob2) to (Carl2) node {};
  \end{scope}

%%%%%%%%%%%%%%%%%%%%%%%%%%%%  
   \begin{scope}[xshift=6cm, yshift=-4cm, scale=0.6, every node/.append style={transform shape}]
    % Un unclean datagraph producto de R; se borró un edge de Bob hacia Alice
    \node [persona, fill=red!20] (Alice3) {Alise};
    \node [draw = none] (graph_unclean_name) [right= 2cm of Alice3] {$\aGraph'$};
    \node [persona, fill=red!20] (Bob3) [below right = of Alice3] {Carl};
    \node [persona] (Carl3) [below left = of Bob3] {Carl};
    
    \node [draw = none] (Path_end2) [left=2.5cm of Bob3] {};
    
    \draw [->, draw = red, postaction= {decorate,decoration={text along path,text align=center, text={|\scriptsize\color{red}|follows},raise=5pt}}, bend right] (Alice3) to (Bob3) node {};  % o draw = red si double es demasiado llamativo
   \draw [->, postaction= {decorate,decoration={text along path,text align=center,text={|\scriptsize|follows},raise=5pt, reverse path}}] (Bob3) to (Alice3) node {};
    \draw [->, postaction= {decorate,decoration={text along path,text align=center,text={|\scriptsize|friend},raise=5pt, reverse path}}, bend right] (Bob3) to (Alice3) node {};      
   \draw [->, draw = red, postaction= {decorate,decoration={text along path,text align=center,text={|\scriptsize\color{red}|friend},raise=5pt, reverse path}}] (Bob3) to (Carl3) node {};
  \end{scope}

% representacion de \aRealizationModel
  \draw [thick,decorate, decoration=snake, segment length=10mm]
    (Path_start) ->  (Path_end) 
    node [above=3mm,midway,text width=3cm,text centered]
      {$\aRealizationModel$};
  \draw [thick,decorate, decoration=snake, segment length=10mm]
    (Path_start) ->  (Path_end2) 
    node [above=3mm,midway,text width=3cm,text centered]
      {$\aRealizationModel$};      
    % \draw [thick, bend left]  %decorate, decoration=snake, segment length=6mm,
    % (Path_start) to [out= -75, in = -90 ]  (Path_start_up) 
    % node [right=1mm,midway, text width=0.5cm, text centered]
    %   {};

  \begin{pgfonlayer}{background}
    \filldraw [line width=4mm,join=round,blue!20]
      (Alice.north  -| Bob.east)  rectangle (Carl.south  -| Carl.west);
    \filldraw [line width=4mm,join=round,blue!10]      
      (Alice2.north -| Bob2.east) rectangle (Carl2.south -| Carl2.west)
      (Alice3.north -| Bob3.east) rectangle (Carl3.south -| Carl3.west);      
  \end{pgfonlayer}  
\end{tikzpicture}
%} %fbox end para testeo
% \caption{Here the $\aRealizationModel$ is a Node-Update noisy observer, which can introduce duplicated data or change in other ways the original data values in the nodes, but does not otherwise modify the original data-graph.}  
% \label{figure:exampleNodeEdgeUpdate}
%\end{figure}
}
\caption{Here the $\aRealizationModel$ is a general Update noisy observer. It can introduce changes to both the data in the nodes and to the edge labels, but it does not otherwise modify the original data-graph.}  
\label{figure:exampleNodeEdgeUpdate}
    \end{subfigure}
    \caption{Two types of Update EMDGs, with different restrictions on the function $\aRealizationModel$.}
\end{figure}

It is common to expect databases to follow some semantic structure related to the world they are representing. This structure can be enforced by defining a set $\aRestrictionSet$ of integrity constraints that limit the shape and data contained in the database. In the case of data-graphs, a common way of defining these constraints is through \textit{path constraints} defined in some specific logical language. Given a constraint $\aFormula$, we want our database to be consistent with respect to $\aFormula$, given some definition of consistency based on the semantics of $\aFormula$.

In this work, constraints will be defined with the \defstyle{\Gregxpath} language. These formulas can capture both nodes and pairs of nodes based upon expressions that
allow to define paths along the graph with common regular expression operators as well as being able to compare data values.

\textit{Path expressions} of \Gregxpath are given by:

\begin{center}
    $\aPath, \aPathb$ = $\epsilon$ $|$ $\labelComodin$ $|$ $\aLabel$ $|$ $\aLabel^{-}$ $|$ $\expNodoEnCamino{\aFormula}$ $|$ $\aPath$ . $\aPathb$ $|$ $\aPath \pathUnion \aPathb$ $|$ $\aPath \pathIntersection \aPathb$ $|$ $\aPath^{*}$ $|$ $\pathComplement{\aPath}$ $|$
    $\aPath^{n,m}$ 
\end{center}  

\noindent where $\aLabel$ iterates over all labels from $\Sigma_e$ and $\aFormula$ is a \textit{node expression} defined by the following grammar:

\begin{center}
    $\aFormula, \aFormulab$ = $\neg \aFormula \mid \aFormula \wedge \aFormulab$ $|$ $\comparacionCaminos{\aPath}$ $|$ $\esDatoIgual{\aData}$ $|$ $\esDatoDistinto{\aData}$ $|$ $\comparacionCaminos{\aPath = \aPathb}$ $|$ $\comparacionCaminos{\aPath \neq \aPathb}$ $|$
    $\aFormula \vee \aFormulab$
\end{center}

\noindent where $\aPath$ and $\aPathb$ are path expressions (and so they are defined by mutual recursion), and $c$ iterates over $\Sigma_n$. If we only allow the Kleene star to be applied to labels and their inverses (i.e., $\aLabel^-$) we then have a subset of \Gregxpath called \Gcorexpath. The semantics for these languages are defined in~\cite{libkin2016querying} in a similar fashion as the usual regular languages for navigating graphs,~\cite{barcelo2013querying}, while adding some extra capabilities such as the complement of a path expression $\pathComplement{\aPath}$ and data tests. The $\comparacionCaminos{\aPath}$ operator is the usual one for \textit{nested regular expressions} or NREs used in \cite{barcelo2012relative}. Given a data-graph $\aGraph=(V,L,D)$, the semantics is defined as follows:

\begin{itemize}[leftmargin=.6in,label={}]

\itemsep0em 

  \item $\semantics{\epsilon}_\aGraph = \{(v,v)$ $|$ $v \in V\}$
  
  \item $\semantics{\labelComodin}_\aGraph = \{(v,w)$ $|$ $v,w \in V, L(v,w) \neq \emptyset\}$ 
  
  \item $\semantics{\aLabel}_\aGraph = \{ (v,w)$ $|$ $\aLabel \in L(v,w)\}$
  
  \item $\semantics{\aLabel^-}_\aGraph = \{ (w,v)$ $|$ $\aLabel \in L(v,w)\}$
  
  \item $\semantics{\aPath^*}_\aGraph = $ the reflexive transitive closure of $\semantics{\aPath}_\aGraph$
  
  \item $\semantics{\aPath . \aPathb}_\aGraph = \semantics{\aPath}_\aGraph \circ \semantics{\aPathb}_\aGraph$
  
  \item $\semantics{\aPath \pathUnion \aPathb}_\aGraph = \semantics{\aPath}_\aGraph \cup \semantics{\aPathb}_\aGraph$
  
   \item $\semantics{\aPath \pathIntersection \aPathb}_\aGraph = \semantics{\aPath}_\aGraph \cap \semantics{\aPathb}_\aGraph$
    
  \item $\semantics{\pathComplement{\aPath}}_\aGraph = V \times V \setminus \semantics{\aPath}_\aGraph$
  
  \item $\semantics{\expNodoEnCamino{\aFormula}}_\aGraph = \{(v,v)$ $|$ $v \in \semantics{\aFormula}_\aGraph$\}
  
  \item $\semantics{\aPath^{n,m}}_\aGraph = \bigcup\limits_{k=n}^m(\semantics{\aPath}_\aGraph)^k$
  
  \item $\semantics{\comparacionCaminos{\aPath}}_\aGraph = \pi_1(\semantics{\aPath}_\aGraph) = \{v \mid \exists w\in V\, (v,w) \in \semantics{\aPath}_\aGraph \} $ % } %where \pi_1 is the projection to the first coordinate
  
  \item $\semantics{\lnot \aFormula}_\aGraph = V \setminus \semantics{\aFormula}_\aGraph$
  
  \item $\semantics{\aFormula \wedge \aFormulab}_\aGraph = \semantics{\aFormula}_\aGraph \cap \semantics{\aFormulab}_\aGraph$
  
  \item $\semantics{\aFormula \vee \aFormulab}_\aGraph = \semantics{\aFormula}_\aGraph \cup \semantics{\aFormulab}_\aGraph$
  
  \item $\semantics{\esDatoIgual{c}}_\aGraph = \{v \in V$ $|$ $D(v) = \esDato{c}$\}
  
  \item $\semantics{\esDatoDistinto{c}}_\aGraph = \{v \in V$ $|$ $D(v) \neq \esDato{c}$\}
  
  \item $\semantics{\comparacionCaminos{\aPath = \aPathb}}_\aGraph = \{v \in V$ $|$ $\exists v', v'' \in V$, $(v,v') \in \semantics{\aPath}_\aGraph$, $(v,v'') \in \semantics{\aPathb}_\aGraph$, $D(v') = D(v'')$\} 
  
  \item $\semantics{\comparacionCaminos{\aPath \neq \aPathb}}_\aGraph = \{v \in V$ $|$ $\exists v', v'' \in V$, $(v,v') \in \semantics{\aPath}_\aGraph$, $(v,v'') \in \semantics{\aPathb}_\aGraph$, $D(v') \neq D(v'')$\} 
  
\end{itemize}

We use \defstyle{$\aPath \entoncesCamino \aPathb$} to denote the path expression $\aPathb \pathUnion \pathComplement{\aPath}$, and \defstyle{$\aNodeExpression \entoncesNodo \aNodeExpressionb$} to denote the node expression $\aNodeExpressionb \lor \neg \aNodeExpression$. We also note a label \defstyle{$\aLabel$} as $\down_\esLabel{\aLabel}$ in order to easily distinguish the `path' fragment of the expressions. For example, the expression $\esLabel{son\_of} [\esDatoIgual{Maria}] \esLabel{sister\_of}$ will be noted as $\down_\esLabel{son\_of} [\esDatoIgual{Maria}] \down_\esLabel{sister\_of}$.

Naturally, the expression $\aPath \cap \aPathb$ can be rewritten as $\overline{\overline{\aPath} \cup \overline{\aPathb}}$ while preserving the semantics, and the same holds for operators $\wedge$ and $\vee$ for the case of node expressions using the $\lnot$ operator. In this grammar, all these operators are defined given that there are some natural restrictions of the language which have the same grammar except for some ``negative'' productions (like $\overline{\aPath}$), in which these simulations would not be possible, and therefore we need to consider them separately in the definition. Considering only the ``positive'' fragment of a language usually allows obtaining better complexity bounds in reasoning problems. This happens mainly because of the \textit{monotony} property: intuitively, a set of expressions from a language $\mathcal{L}$ over structures $\mathcal{H}$ is said to satisfy monotony if whenever $H \in \mathcal{H}$ satisfies $\nu \in \mathcal{L}$, then $H'$ satisfies $\nu$ for every $H \subseteq H'$.

In our case, \defstyle{\Gposregxpath} is defined as the subset of \Gregxpath expressions that do not use $\overline{\aPath}$ nor $\lnot \aFormula$. Thus, in \Gposregxpath we will not be able to simulate the $\cap$ operator unless it is present in the original \Gregxpath grammar. Moreover, it can be shown that the set of \Gposregxpath node expressions satisfy monotony in the following sense: if $v \in \semantics{\nu}_\aGraph$ for $\nu \in \Gposregxpath$, then $v \in \semantics{\nu}_{\aGraph'}$ for every data-graph $\aGraph' \supseteq \aGraph$.

Given a $\Gregxpath$ formula $\eta$, we denote by \defstyle{$\Sigma_n^\eta$} the set of data values that are mentioned in $\eta$. More precisely, $\Sigma_n^\eta = \{\esDato{c} \in \Sigma_n : \esDatoIgual{c} \text{ or } \esDatoDistinto{c} \text{ is a subexpression of }\eta\}$. Note that $|\Sigma_n^\eta| \leq |\eta|$.

%!TEX root = main.tex

\section{Data cleaning} \label{section:dataCleaning}

One of the intuitive ideas behind an EMDG $(\aProbabilisticDatabase,\aRealizationModel)$ is that $\aProbabilisticDatabase$ represents a known distribution of `possible worlds', while $\aRealizationModel$ represents how each of them could be `observed' in a noisy manner with the potential introduction of errors.
A PUDG $\aPUDG = (\aProbabilisticDatabase,\aRealizationModel,\aGraph')$ adds a concrete observation (namely $\aGraph'$) to our model, allowing us to actually reason about the `underlying reality' given our knowledge of possible worlds and the limits or deficiencies of our observation method. A central question that arises naturally in this interpretation, already mentioned in Example~\ref{example:BibliographicalRIntroducesAuthorship}, is the functional problem of \functionProblem{Data cleaning}, which aims to find the most probable clean world (i.e., a data-graph in $\aProbabilisticDatabase$) given the epistemic model of reality and the observation.

\begin{definition} [\functionProblem{Data cleaning}]
We define:
\begin{center}
\fbox{\begin{minipage}{30em}
  \textsc{Problem}: Most likely data-graph %given the EMDG $(\aProbabilisticDatabase, \aRealizationModel)$  

\textsc{Input}: A $\Pi$ PUDG $\aPUDG = (\aProbabilisticDatabase,\aRealizationModel,\aGraph')$

\textsc{Output}: A data-graph $I \in \aProbabilisticDatabase$ such that the probability of  $\aInverseRealizationModel_{\aProbabilisticDatabase, \aRealizationModel}(\aGraph')(I)$ is maximum. 
\end{minipage}}
\end{center}
\end{definition}

%\widesanti{No habría que maximizar $\aProbabilisticDatabase(I_m) \times \aRealizationModel(I_m)(G)$}

Recall that this data-graph with maximal probability exists since, by definition of $\aRealizationModel$, for a fixed $\aGraph'$, $\aInverseRealizationModel_{\aProbabilisticDatabase, \aRealizationModel}(\aGraph')$ is a probabilistic data-graph, and therefore there is a finite number of graphs with probability greater than~0. Note however that, if we consider unrestricted $\aProbabilisticDatabase$ or $\aRealizationModel$, then the problem might not be computable.  
%\sergio{Cambiado. La justificación es porque devuelve Probabilistic graph data-graph}

We also define a decision problem related to data cleaning, \decisionProblem{data cleaning lower bound}:

%\widesanti{Acá el problema de decisión podría ser el de decidir si cierto data-graph es maximal (o sea, si cierto data-graph dado es el clean). Es como repair checking, y ahorramos meternos con la representación de los racionales}

\begin{center}
\fbox{\begin{minipage}{30em}
  \textsc{Problem}: Is there a sufficiently likely clean data-graph? %given an unclean observation? %Data cleaning lower bound

\textsc{Input}: A $\Pi$ PUDG $\aPUDG = (\aProbabilisticDatabase,\aRealizationModel,\aGraph')$ and a (rational) bound $b \in [0,1]$

\textsc{Output}: Decide whether there exists a data-graph $I \in \aProbabilisticDatabase$ such that $\aInverseRealizationModel_{\aProbabilisticDatabase, \aRealizationModel}(\aGraph')(I) > b$. 
\end{minipage}}
\end{center}

Note that an algorithm for solving the data cleaning problem can be used to solve the decision version by checking whether the data-graph $I$ found by the algorithm satisfies $\aInverseRealizationModel_{\aProbabilisticDatabase, \aRealizationModel}(\aGraph')(I) > b$.

In what follows we study this problem considering Subset, Superset and Update PUDGs, respectively. For each of them we show that the problem proves intractable even when considering models that allow to efficiently compute \aInverseRealizationModel. Nonetheless, we also find some tractable restrictions that bound the number of data-graphs which derive into the unclean one.

\subsection{Data cleaning in Subset PUDG}

We start studying the data cleaning problem for Subset PUDGs. Suppose that we have a Subset PUDG $\aPUDG = (\aProbabilisticDatabase, \aRealizationModel, \aGraph')$, and we want to find the data-graph $I \in \aProbabilisticDatabase$ that maximizes $\aInverseRealizationModel(\aGraph')(I)$. Note that it follows from Definition~\ref{def:PUDG} that the cosupport of $\aGraph'$ is finite, however this condition does not prevent that its cardinal might be of arbitrary size, and thus generating it could result too expensive. Moreover, even if we assume that $\aProbabilisticDatabase$ and $\aRealizationModel$ can be computed in polynomial-time on the input size, this is not enough to deduce that $\aInverseRealizationModel$ is also efficiently computable. We prove that even in the case where the cosupports and $\aInverseRealizationModel$ are efficiently computable, the problem can still be intractable as the next theorem shows.

\begin{theorem}
\label{teo:subsetNpComplete}
There is a fixed EMDG $(\aProbabilisticDatabase,\aRealizationModel)$ with $\aProbabilisticDatabase$,  $\aRealizationModel$ and $\aInverseRealizationModel_{\aProbabilisticDatabase, \aRealizationModel}$ efficiently computable, for which the problem \decisionProblem{data cleaning lower bound} is \textsc{NP-complete}, even when considering Subset PUGDs with no node deletions (i.e., $\aRealizationModel(\aGraph)(\aGraph') > 0$ implies $V_\aGraph = V_\aGraph'$).
\end{theorem}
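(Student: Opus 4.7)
The plan is to show NP-membership and NP-hardness separately, with a single fixed EMDG used throughout the hardness reduction.

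For NP-membership, I would exploit that any candidate clean graph $I$ in the cosupport must satisfy $V_I = V_{\aGraph'}$ (no node deletions) and $E_I \supseteq E_{\aGraph'}$, so $I$ has at most $|V_{\aGraph'}|^2 \cdot |\Sigma_e|$ edges, hence polynomial size (since $\Sigma_e$ is fixed and finite). A nondeterministic algorithm can guess such $I$, then evaluate $\aInverseRealizationModel_{\aProbabilisticDatabase, \aRealizationModel}(\aGraph')(I)$ in polynomial time using the efficient-computability hypothesis, and compare it against the rational bound $b$.

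For NP-hardness I would reduce from $3$-SAT. The idea is to design a single fixed EMDG $(\aProbabilisticDatabase,\aRealizationModel)$ that encodes, in its structural support, the notion of ``a valid satisfying assignment accompanied by a witness of clause satisfaction''. Specifically, I would fix distinguished data values $\esDato{var},\esDato{clause},\esDato{T},\esDato{F}\in\Sigma_n$ and distinguished labels $\esLabel{posLit},\esLabel{negLit},\esLabel{assign},\esLabel{sat}\in\Sigma_e$, and define $\aProbabilisticDatabase$ to put positive probability exactly on graphs in which (i) there is a unique $\esDato{T}$-node and a unique $\esDato{F}$-node, (ii) each $\esDato{var}$-node has exactly one $\esLabel{assign}$ outgoing edge (to $\esDato{T}$ or $\esDato{F}$), (iii) each $\esDato{clause}$-node has at least one $\esLabel{sat}$ edge to a $\esDato{var}$-node whose assignment is consistent with a $\esLabel{posLit}$ or $\esLabel{negLit}$ edge from the clause. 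Within this support the exact probability weights are chosen to decay fast enough with graph size (e.g.\ proportional to $2^{-|V|^3}$ modulo structural isomorphism) to yield a well-defined distribution. The realization model $\aRealizationModel$ is the deterministic map that deletes every $\esLabel{assign}$ and $\esLabel{sat}$ edge. Both $\aProbabilisticDatabase$ and $\aRealizationModel$ are then clearly computable in polynomial time per input, and $\aInverseRealizationModel$ is computable in time polynomial in input plus output (enumerate the structurally valid extensions of $\aGraph'$).

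Given a $3$-CNF formula $\varphi$ with variables $x_1,\dots,x_n$ and clauses $C_1,\dots,C_m$, the reduction builds $\aGraph'$ by placing a $\esDato{var}$-node for each $x_i$, a $\esDato{clause}$-node for each $C_j$, the two distinguished $\esDato{T}$ and $\esDato{F}$ nodes, and edges $C_j\to x_i$ labelled $\esLabel{posLit}$ or $\esLabel{negLit}$ according to whether $x_i$ or $\neg x_i$ occurs in $C_j$; crucially no $\esLabel{assign}$ or $\esLabel{sat}$ edge is present. Setting $b=0$, the answer to \decisionProblem{data cleaning lower bound} is positive iff some $I\supseteq \aGraph'$ lies in the support of $\aProbabilisticDatabase$, which by construction is equivalent to the existence of a truth assignment and a per-clause witness, i.e.\ exactly to the satisfiability of $\varphi$. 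Note that $V_I=V_{\aGraph'}$ throughout, so the hardness indeed persists under the no-node-deletions restriction.

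The main obstacle I anticipate is technical rather than conceptual: one must carefully specify $\aProbabilisticDatabase$ so that it is a genuine probability distribution (the universe of data-graphs is countably infinite) and simultaneously verify that $\aProbabilisticDatabase$, $\aRealizationModel$, and $\aInverseRealizationModel$ all satisfy the paper's efficient-computability conventions. In particular, one should ensure that the normalization used for $\aProbabilisticDatabase$ does not sneak in an uncomputable factor, and that the output-sensitive computation of $\aInverseRealizationModel(\aGraph')$ does not require summing over a cosupport whose description would itself be hard, which is handled by making the cosupport structure locally decomposable (a variable-by-variable choice of $\esLabel{assign}$ target together with a clause-by-clause choice of $\esLabel{sat}$ witness).
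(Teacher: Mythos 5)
Your membership argument is fine and matches the paper's. Your hardness reduction also shares the paper's overall architecture (encode a 3-CNF formula together with an assignment as a data-graph; let $\aRealizationModel$ strip away the assignment; recover satisfiability as the existence of a sufficiently likely preimage). However, there is a genuine gap in how you instantiate $\aProbabilisticDatabase$: by putting positive probability \emph{only} on graphs that encode a satisfying assignment together with per-clause witnesses, and then taking $b=0$, you make $\aInverseRealizationModel_{\aProbabilisticDatabase,\aRealizationModel}$ \emph{not} efficiently computable, which is part of the statement being proved (and is the whole point of the theorem --- the paper stresses that efficient computability of $\aProbabilisticDatabase$ and $\aRealizationModel$ does not by itself give efficient computability of $\aInverseRealizationModel$). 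Concretely, the denominator $\sum_{J}\aProbabilisticDatabase(J)\,\aRealizationModel(J)(\aGraph')$ in the Bayes formula is, in your construction, a weighted count of satisfying assignments with clause witnesses, a $\#P$-hard quantity; worse, merely deciding whether $\aInverseRealizationModel(\aGraph')$ is the zero distribution (the case of small output, where output-sensitivity buys you nothing) is exactly SAT. Your claim that this is ``handled by making the cosupport structure locally decomposable'' fails precisely because the \texttt{assign} choices and the \texttt{sat}-witness choices are coupled through the consistency condition (iii): enumerating the valid extensions, or detecting that there are none, is the satisfiability problem itself.

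The paper avoids this trap by giving \emph{every} formula-plus-assignment graph positive probability, satisfying or not, and doubling the weight of satisfying ones via a two-label trick on a special $\bot\to\top$ edge (weights $\frac{1+\phi_v}{N(\phi)}$ and $\frac{1-\phi_v}{N(\phi)}$, which always sum to $\frac{2}{N(\phi)}$ per assignment). This makes the normalization factor a closed-form, polynomial-time computable quantity independent of how many assignments satisfy $\phi$, so $\aInverseRealizationModel$ is genuinely efficiently computable pointwise; the hardness is then carried entirely by the strict threshold $b$, chosen as the probability mass of a non-satisfying assignment, rather than by $b=0$. To repair your proof you would need an analogous device: keep the support of $\aProbabilisticDatabase$ large and ``assignment-complete'' so that the cosupport mass over $\aGraph'$ is computable in closed form, and encode satisfiability as a probability \emph{gap} above $b>0$ rather than as non-emptiness of the support.
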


\begin{proof}

The upper bound can be easily proven by noticing that, since we are not allowing node deletions, if there is an $I$ such that $\aInverseRealizationModel(\aGraph')(I) > b$ then $|I| = poly(|\aGraph'|)$ and $\aInverseRealizationModel(\aGraph')(I)$ can be computed in $poly(|\aGraph'|+|I|)$. Now, for the hardness part, we reduce 3-SAT to Subset PUGD with a fixed $\aProbabilisticDatabase$ and $\aRealizationModel$.

Intuitively, $\aProbabilisticDatabase$ will be a distribution over all data-graphs $\aGraph_{\phi,v}$ representing a Boolean formula $\phi$ alongside an assignment $v$ over the variables of $\phi$. %This is, we can see a data-graph $\aGraph$ as representing a formula $ \phi$ and its corresponding truth assignment $v_\phi$.  
Probability $\aProbabilisticDatabase(\aGraph_{\phi,v})$ will be defined so that it is higher when $v\models\phi$. On the other hand, $\aRealizationModel$ will be defined in such a way that
it will simply delete the part of the graph that represents the assignment. Then, given a data-graph $\aGraph'_\phi$ %$= \langle \phi, \emptyset \rangle$ 
representing a formula without an assignment, the data cleaning problem will consist in finding, if there exists, the assignment $v$ that makes $\phi$ evaluate to true. 

Formally, let $\Sigma_e=\{\esLabel{is\_literal},\esLabel{is\_literal\_negated},\esLabel{value},e_1,e_2\}$ and $\Sigma_n=\{\esDato{var},\esDato{clause},\top,\bot\}$. For every CNF formula $\phi$ of $n$ variables $x_1,\ldots,x_n$ and $m$ clauses $c_1,\ldots,c_m$ and %partial 
every assignment $v:var(\phi) \to \{\bot, \top\}$, 
the representation of $\phi$ and $v$ is the data-graph 
$\aGraph_{\phi, v}=(V,L,D)$, given by:
\begin{enumerate}
    \item $V=\{1,2,\ldots,n+m+2\}$. Elements from $1$ to $n$ represent the variables, elements from $n+1$ to $n+m$ represent the clauses, and $n+m+1,n+m+2$ are two special nodes. From now on, we refer to the $i$th $\esDato{var}$ node as the variable $x_i$, %node (this is, literally the node numbered $i$), 
    and to the $j$th $\esDato{clause}$ node as the $c_j$ clause.% (the node numbered $j + n$). 
    
    \item $D(x_i)=\esDato{var}$ for every $i=1,\ldots,n$, $D(c_j)=\esDato{clause}$ for every $j=1,\ldots,m$, $D(n+m+1)=\bot$ and $D(n+m+2)=\top$.
    % has $n$ nodes with data value $var$, $m$ nodes with data value $clause$ and two other nodes with data value $\bot$ and $\top$. The nodes indexed as $1 \ldots n$ of the data-graph are the $var$ nodes, and the next $m$ nodes indexed as $n+1 \ldots n + m$ are the $clause$ nodes. Also, there are nodes $\bot$ and $\top$ indexed by $n+m+1$ and $n+m+2$, respectively. 

    \item For each variable $x_i$ and clause $c_j$, $L(x_i,c_j)=\{\esLabel{is\_literal}\}$ if $x_i$ is a literal of $c_j$, $L(x_i,c_j)=\{\esLabel{is\_literal\_negated}\}$ if $\lnot x_i$ is a literal of $c_j$.
    
    \item\label{red:observed} For each variable $x_i$, $L(x_i,n+m+1)=\{\esLabel{value}\}$ if $v(x_i) = \bot$, or $L(x_i,n+m+2)=\{\esLabel{value}\}$ if $v(x_i) = \top$. 
    
    \item\label{red:topbot} $L(n+m+1,n+m+2)=\{e_i\}$ for some $i=1,2$. 
    %There is also an edge directed from the node with data value $\bot$ to the one with data value $\top$, with label $e_1$ or $e_2$, both different from labels \esLabel{is\_literal}, \esLabel{is\_literal\_negated} and \esLabel{value}.
   %\vani{Cuando se usan los labels \esLabel{value}?} 
    \item The data-graph has no other edges.
\end{enumerate}

Notice that the assignment $v$ is only relevant in rule (\ref{red:observed}). If we dropped this rule and (\ref{red:topbot}) we obtain for each $\phi$ a unique data-graph representation denoted by $G'_\phi$. 

Given an arbitrary data-graph $\aGraph$ it is possible to decide in polynomial time if $\aGraph$ has the form $\aGraph_{\phi, v}$ and obtain in polynomial time a formula $\phi$ and an assignment $v$ that it represents. Considering this, we define a distribution across the set of data-graphs in the following way:

\[ \mathcal{I}(\aGraph) = \begin{cases} 
     \frac{1+\phi_v}{N(\phi)} & \text{if }\aGraph = \aGraph_{ \phi, v} \text{ for some } \phi, v \text{ and } (\bot, e_1, \top) \in E_\aGraph \\
      \frac{1-\phi_v}{N(\phi)} & \text{if }\aGraph = \aGraph_{ \phi, v} \text{ for some } \phi, v \text{ and } (\bot, e_2, \top) \in E_\aGraph\\
      0 & \text{otherwise},
   \end{cases}
\]
where $N(\phi)$ is a normalization factor based on the number of variables, clauses and assignments of $\phi$, that allows the probability function to satisfy $\sum_{I \in U} \aProbabilisticDatabase(I) = 1$ and $\phi_v$ is either 1 or 0, depending on whether $v\models\phi$ or not, respectively. We now show a way to build this normalization factor so that it can be easily computed given a formula $\phi$.

First, we define $N(\phi)$ such that, given some natural numbers $n$ and $m$, the total probability assigned to the set of data-graphs representing formulas with $n$ variables and $m$ clauses is precisely the entry in column $n$ and row $m$ in Table~\ref{table:cantor}. We denote from now on this entry of the table by $T(n,m)$. Note that, if the probabilities are assigned in this way, then $\aProbabilisticDatabase$ is a well-defined probabilistic database. Also, any entry $T(n,m)$ can be computed in polynomial-time on $n$ and $m$.\footnote{It can be shown that the entry in row $m$ and column $n$ has value $2^{-(1 + \frac{n(n+1)}{2} + m'(n'+1) + \frac{m'(m' + 1)}{2})}$}

Now, we will distribute the probability evenly across all data-graphs that represent a formula of $n$ variables and $m$ clauses, and thus the final probability is obtained by multiplying this number by the corresponding value in the table. The final probability assigned is therefore $T(n,m) \times \frac{1}{C(n,m)}$, where\footnote{The formula can be derived in the following way: we need to account for the $2^n$ possible assignations of the $n$ variables, all the $8 \binom{n}{3}$ ways in which each clause can independently pick the literals that appear in it, and finally the 2 distinct labels that the edge between $\bot$ and $\top$ can have.} $C(n,m) = 2 \times 2^n (8\binom{n}{3})^m$.

%\begin{table}[H]
%\begin{tabular}{lllllll}
%                       &                        & n                        &                          &                        &                        &     \\ \cline{2-6}
%\multicolumn{1}{l|}{}  & \multicolumn{1}{l|}{}  & \multicolumn{1}{l|}{0}   & \multicolumn{1}{l|}{1}   & \multicolumn{1}{l|}{2} & \multicolumn{1}{l|}{3} & ... \\ \cline{2-6}
%\multicolumn{1}{l|}{m} & \multicolumn{1}{l|}{0} & \multicolumn{1}{l|}{$\frac{1}{2}$} & \multicolumn{1}{l|}{$\frac{1}{2^2}$} & \multicolumn{1}{l|}{$\frac{1}{2^4}$}  & %\multicolumn{1}{l|}{$\frac{1}{2^7}$}  &     \\ \cline{2-6}
%\multicolumn{1}{l|}{}  & \multicolumn{1}{l|}{1} & \multicolumn{1}{l|}{$\frac{1}{2^3}$} & \multicolumn{1}{l|}{$\frac{1}{2^5}$}    & \multicolumn{1}{l|}{$\frac{1}{2^8}$}  & \multicolumn{1}{l|}{}  &     \\ \cline{2-6}
%\multicolumn{1}{l|}{}  & \multicolumn{1}{l|}{2} & \multicolumn{1}{l|}{$\frac{1}{2^6}$}    & \multicolumn{1}{l|}{$\frac{1}{2^9}$}    & \multicolumn{1}{l|}{}  & \multicolumn{1}{l|}{}  &     \\ \cline{2-6}
%\end{tabular}
%\caption{Table showing how much of the ``probability weight'' is put into each set of data-graphs, for each value of $n$ and $m$.}
%\label{table:cantor}
%\end{table}
%\renewcommand{\arraystretch}{2}
\begin{table}[H]
\begin{tabular}{|l||ccccc|}
\hline
\diagbox{$m$}{$n$} & 0        & 1        & 2        & 3    & $\cdots$    \\
\hline\hline \rule{0pt}{15pt}
0   & $\frac{1}{2^1}$ & $\frac{1}{2^2}$ & $\frac{1}{2^4}$ & $\frac{1}{2^7}$ & \\
\rule{0pt}{15pt}
1   & $\frac{1}{2^3}$ & $\frac{1}{2^5}$ & $\frac{1}{2^8}$ &         &  \\ \rule{0pt}{15pt}
2   & $\frac{1}{2^6}$ & $\frac{1}{2^9}$ &          &     $\ddots$  & \\\rule{2pt}{0pt}
$\vdots$ &&&&& \\ 
\hline
\end{tabular}
\vspace*{5mm}
\caption{Table showing how much of the ``probability weight'' is assigned into each set of data-graphs, for each value of $n$ and $m$.}
\label{table:cantor}
\end{table}

Regarding $\aRealizationModel$, it maps each data-graph $\aGraph$ to a uniform distribution across every data-graph $H$ with the same nodes as $\aGraph$ but just a subset of its edges. That is, $\aRealizationModel(\aGraph)(H) = \frac{1}{2^{|E_\aGraph|}}$ if $V_H = V_\aGraph$ and $H \subseteq \aGraph$, and $\aRealizationModel(\aGraph)(H) = 0$ otherwise.
%\vani{Nunca se introdujo qué es $|E_\aGraph|$? Es la cantidad de ejes en $\aGraph$?} Recall we
%assume that $\aRealizationModel$ only deletes edges.

Finally, the reduction is as follows: given a 3-CNF formula $\phi$ of $n$ (ordered) variables and $m$ (ordered) clauses, we construct the data-graph $\aGraph'=\aGraph'_\phi$ and define $b = \frac{1}{2^{|E_{\aGraph'}|}} \times T(n, m) \times \frac{1}{C(n, m)}$. Clearly, the data-graph can be built in $poly(|\phi|)$ time, and each element in the product that composes $b$ can also be built in time $poly(n+m)$. It is clear that if there exists an assignment $v$ to the variables of $\phi$ that evaluates it to true, then $\aInverseRealizationModel(\aGraph'_\phi)(\aGraph_{\phi,v}) > b$, otherwise no data-graph in the cosupport of $\aGraph'_\phi$ will have a probability that is bigger than $b$.

\end{proof}

This theorem shows that even if we fix both functions $\aProbabilisticDatabase$ and $\aRealizationModel$ the problem remains intractable. Note that fixing this part of the input seems reasonable: both functions $\aProbabilisticDatabase$ and $\aRealizationModel$ capture domain knowledge and the semantics of the data the data-graph represents.

 There are, however, at least two ways in which we can simplify this problem to obtain a polynomial-time resolution.
 
 First, we could try and control the complexity that arises from the relationship between $\aGraph$ and the EMDG. It might be the case that we find an interesting restriction of the problem by considering some EMDG that captures topological aspects of the data and a particular subclass of data-graphs. For example, in~\cite{barcelo2017data} some database repairing problems are conceptually similar to the ones discussed here, and can be solved in polynomial time when considering graphs with bounded treewidth. 
 
 Second, we could control the complexity encapsuled in both $\aProbabilisticDatabase$ and $\aRealizationModel$ or by restricting the overall possibilities of those functions as input of the problem.%either by fixing a certain kind of functions,  
 %vani{Esto es demasiado abstracto no dice nada.}
 
For instance, if we condition the problem to obtain a polynomial bound on the cardinality of the data-graphs that are able to ``transitition'' to the observed state $\aGraph$ through $\aRealizationModel$, then it becomes tractable under the assumption that $\aInverseRealizationModel$ is efficiently computable. More precisely, we ask for the size of the cosupport of a data-graph $\aGraph'$, namely $\sigma(\aGraph')$, to be bounded by some polynomial function. We can do this by forbidding node deletions and by bounding the number of edges that can be added by a constant $k_e$. Formally, we assume that $\aRealizationModel(I)(\aGraph') = 0$ if $|E_I \setminus E_\aGraph'| > k_e$ or $V_I \neq V_{\aGraph'}$. Assuming these restrictions, we can show that $\sigma(\aGraph')$ is bounded by a polynomial on $n=|V_{\aGraph'}|$:
$$
|\sigma(\aGraph')| \leq \sum_{\substack{i=0 \\ i\leq \missing{\aGraph'}}}^{k_e} \binom{\missing{\aGraph'}}{i} \leq k_e (\missing{\aGraph'})^{k_e} = \bigO{n^{2 k_e}},
$$
where $\missing{\aGraph'} \eqdef |\Sigma_e| n^2 - |E_{\aGraph'}|$ counts the number of edges missing from $\aGraph'$, and therefore the ones that could have been deleted.

The next theorem 
follows immediately from the previous discussion. %\bignina{me parece que lo que queremos decir es que de la cuenta anterior se deduce este teorema, no?}

\begin{theorem}\label{teo:subset-bounded}
\functionProblem{Data cleaning} of Subset PUDG can be computed in polynomial time in the size of the observed data-graph $\aGraph'$ whenever the following conditions hold: (i) node deletions are not allowed, and (ii) the number of edges deleted is bounded by a constant.
\end{theorem}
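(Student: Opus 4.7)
The plan is to observe that the theorem is essentially a direct algorithmic corollary of the cosupport bound derived just before the statement, so the proof amounts to describing a brute-force enumeration that exploits that bound, together with the efficient-computability assumption on $\aInverseRealizationModel$. Concretely, I would solve \functionProblem{Data cleaning} by enumerating every candidate clean data-graph $I$ in the cosupport of $\aGraph'$, evaluating $\aInverseRealizationModel_{\aProbabilisticDatabase,\aRealizationModel}(\aGraph')(I)$ for each, and returning an argmax.

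First, I would use the two restrictions to enumerate the cosupport explicitly. Since node deletions are forbidden we must have $V_I = V_{\aGraph'}$, and since $\aRealizationModel$ is a Subset realization model with $|E_I \setminus E_{\aGraph'}| \le k_e$, every $I$ in the cosupport is obtained from $\aGraph'$ by adjoining at most $k_e$ edges drawn from the set of ``missing'' edges $\{(v,e,w) : v,w \in V_{\aGraph'}, e \in \Sigma_e\} \setminus E_{\aGraph'}$, whose size is at most $|\Sigma_e|\,n^2$ with $n = |V_{\aGraph'}|$. Iterating over all subsets of size at most $k_e$ of this set produces a list of at most $\sum_{i=0}^{k_e}\binom{|\Sigma_e|n^2}{i} = O(n^{2k_e})$ candidate graphs, which matches the cosupport bound stated in the discussion and can be generated in time polynomial in $n$ (since $k_e$ and $|\Sigma_e|$ are fixed).

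Next, for each candidate $I$ I would compute $\aInverseRealizationModel(\aGraph')(I)$. Here I invoke the standing assumption that the distribution $\aInverseRealizationModel(\aGraph')$ is computable in time polynomial in $|\aGraph'| + |\aInverseRealizationModel(\aGraph')|$: since the support of $\aInverseRealizationModel(\aGraph')$ is contained in the (polynomially bounded) cosupport just enumerated, and each $I$ in it satisfies $|I| = O(n^2)$ as $V_I = V_{\aGraph'}$, the output size $|\aInverseRealizationModel(\aGraph')|$ is polynomial in $|\aGraph'|$. Hence one call suffices to obtain the probability of every candidate, and comparing these rational numbers and returning the maximizer is then straightforward.

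The main subtlety, which I would spell out carefully, is that efficient computability of $\aInverseRealizationModel$ is stated with respect to input-plus-output size, not input alone; this is why the cosupport bound is indispensable, as it is what certifies the output of $\aInverseRealizationModel(\aGraph')$ is of polynomial size and hence that the Bayes normalization (\ref{eq:Bayes}) can indeed be evaluated in polynomial time. Beyond this observation there is no real obstacle: the correctness of returning an argmax is immediate from the definition of \functionProblem{Data cleaning}, and the overall running time is $O(n^{2k_e})$ enumeration times a polynomial per candidate, which is polynomial in $|\aGraph'|$ as required.
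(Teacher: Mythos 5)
Your proposal is correct and follows essentially the same route as the paper: the paper proves this theorem by the discussion immediately preceding it, which establishes the $\bigO{n^{2k_e}}$ bound on the cosupport via the same binomial count over missing edges, and then enumerates candidates and evaluates $\aInverseRealizationModel$ under the standing efficient-computability assumption. Your additional remark about why the polynomial cosupport bound is what makes the input-plus-output complexity convention usable is a faithful (and slightly more explicit) rendering of the paper's reasoning rather than a departure from it.
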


%\vani{Algo hay que decir de la prueba de esto. Aunque sea decir que es trivial dada la eq. de arriba}

Observe that, even though bounding the number of possible deletions is a considerable restriction to the problem, the resulting family of realization models are quite expressive, as they can assign probabilities based on topological properties and on the different present data values and edge labels.

\subsection{Data cleaning in Superset PUDG}

Now we consider the Superset PUDG version of the Data Cleaning problem. This is, given a Superset PUDG $\aPUDG = (\aProbabilisticDatabase, \aRealizationModel, \aGraph')$, we want to find the data-graph $I \in \aProbabilisticDatabase$ that maximizes $\aInverseRealizationModel_{\aProbabilisticDatabase, \aRealizationModel}(\aGraph')(I)$.
Recall that $\aGraph'$ is the observed unclean data-graph that may have loops and as many edges as distinct possible labels. Notice that, in this case, and disregarding whether we consider node and/or edge deletions, every graph in $I$ and in the cosupport of $\aGraph'$ is necessarily a subgraph of $\aGraph'$. Therefore, the set of data-graphs in $\aInverseRealizationModel(\aGraph')$ is finite, and its size is bounded by $\sum_{i=0}^n \binom{n}{i} 2^{i^{2}\times|\Sigma_e|} \leq 2^{n^{2}\times|\Sigma_e|+n}$ ($n=|V_{\aGraph'}|$) in the worst case scenario. %\sergio{$2^{n^2 \times |\Sigma_e|}$ para ejes ??? Partir de uno completo de $n$ nodos y ver todos los ejes que pueden sacarse}
%\santi{Siguiendo lo de Sergio: me parece que la cantidad de grafos puede escribirse como $\sum_{i=0}^{n} \binom{n}{i} 2^{i^2 \times|\Sigma_e|}$}
Moreover, if we only consider edge deletions, then $\sigma(\aGraph')$ is bounded by $2^{n^{2}\times|\Sigma_e|}$.

As in the case of Subset PUDGs, the fact that $\aInverseRealizationModel_{\aProbabilisticDatabase, \aRealizationModel}$ is efficiently computable is not enough to conclude that we can find in polynomial time the most probable clean data-graph:

\begin{theorem}\label{teo-superset-fixed}
There exists a fixed $\aProbabilisticDatabase$ and $\aRealizationModel$ such that $\aInverseRealizationModel_{\aProbabilisticDatabase, \aRealizationModel}$ is efficiently computable the problem Data Cleaning of Superset PUDGs is \textsc{NP-complete}, even when $\aRealizationModel$ does not allow node additions
\end{theorem}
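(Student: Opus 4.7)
The plan is to adapt the 3-SAT reduction of Theorem~\ref{teo:subsetNpComplete}, swapping the role of edge deletion for edge addition. For the NP upper bound, note that since node additions are forbidden, any candidate clean data-graph $I$ in the cosupport of $\aGraph'$ has $|V_I|=|V_{\aGraph'}|$ and hence polynomial size in $\aGraph'$; guessing $I$ and computing $\aInverseRealizationModel_{\aProbabilisticDatabase,\aRealizationModel}(\aGraph')(I)$ in polynomial time (by assumption on $\aInverseRealizationModel$) certifies membership in NP.

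For hardness, I would keep the alphabets $\Sigma_e=\{\esLabel{is\_literal},\esLabel{is\_literal\_negated},\esLabel{value},e_1,e_2\}$ and $\Sigma_n=\{\esDato{var},\esDato{clause},\bot,\top\}$ and, for every CNF $\phi$ with $n$ variables and $m$ clauses, assignment $v$, and $e\in\{e_1,e_2\}$, define a data-graph $\aGraph_{\phi,v,e}$ that looks exactly like the one used in Theorem~\ref{teo:subsetNpComplete}: nodes for variables, clauses, $\bot$, and $\top$; literal edges; \emph{one} \esLabel{value} edge per variable according to $v$; and a single $e$ edge between $\bot$ and $\top$. The distribution $\aProbabilisticDatabase$ is exactly the one of the Subset proof: $\aProbabilisticDatabase(\aGraph_{\phi,v,e_1})=(1+\phi_v)/N(\phi)$, $\aProbabilisticDatabase(\aGraph_{\phi,v,e_2})=(1-\phi_v)/N(\phi)$, and $0$ elsewhere, with $N(\phi)$ built from the same Cantor-style table so that the total mass is $1$ and $N(\phi)$ is polynomial-time computable. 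The realization model $\aRealizationModel$ is taken to be uniform over all supergraphs with the same node set: $\aRealizationModel(\aGraph)(H)=2^{-(|\Sigma_e|\cdot|V_\aGraph|^2-|E_\aGraph|)}$ whenever $V_H=V_\aGraph$ and $\aGraph\subseteq H$. This is a well-defined probability distribution, matches the Superset EMDG definition, and adds no nodes.

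The reduction from 3-SAT then sends $\phi$ to the observed graph $\aGraph'_\phi$ obtained from the above schema by adding \emph{both} \esLabel{value} edges for each variable and \emph{both} $e_1,e_2$ edges between $\bot$ and $\top$. Thus every $\aGraph_{\phi,v,e}$ is a subgraph of $\aGraph'_\phi$ on the same vertex set. The crucial observations are that (i) all graphs $\aGraph_{\phi,v,e}$ have the same number of edges, so the realization probability $\aRealizationModel(\aGraph_{\phi,v,e})(\aGraph'_\phi)$ is a constant $\alpha$ depending only on $\phi$; and (ii) the $(1+\phi_v)+(1-\phi_v)=2$ cancellation from the Subset proof reappears, making the Bayes denominator equal to $\alpha\cdot 2^{n+1}/N(\phi)$, which is polynomial-time computable. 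Consequently $\aInverseRealizationModel(\aGraph'_\phi)(\aGraph_{\phi,v,e_1})=(1+\phi_v)/2^{n+1}$, so setting $b=1/2^{n+1}$ yields a YES-instance iff $\phi$ is satisfiable. Efficient computability of $\aInverseRealizationModel$ on arbitrary pairs follows from the same formula.

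The main obstacle I anticipate is controlling the cosupport: a priori, many subgraphs of $\aGraph'_\phi$ are in $\aInverseRealizationModel(\aGraph'_\phi)$, not only those of the form $\aGraph_{\phi,v,e}$. I would handle this by defining $\aProbabilisticDatabase>0$ only on data-graphs whose clause nodes have exactly three incoming literal edges, and exactly one outgoing \esLabel{value} edge from every variable node, and exactly one $e_i$ edge between $\bot$ and $\top$; any subgraph of $\aGraph'_\phi$ meeting these rigidity constraints is forced to be some $\aGraph_{\phi,v,e}$, since the literal edges of $\aGraph'_\phi$ uniquely determine $\phi$. A related subtlety is that $N(\phi)$ must remain well-defined across all formulas simultaneously; this is resolved exactly as in Theorem~\ref{teo:subsetNpComplete} by reusing the table $T(n,m)$ and the counting $C(n,m)$ of admissible clean configurations, adjusted only by the constant factor $2$ coming from the two $e_i$ choices.
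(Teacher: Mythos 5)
Your proposal is correct and follows essentially the same route as the paper's own (much terser) appendix proof: reuse the prior $\aProbabilisticDatabase$ from Theorem~\ref{teo:subsetNpComplete}, replace the realization model by the uniform distribution over supergraphs on the same vertex set, and take as observed graph the one in which every variable is ``assigned'' both Boolean values. Your additional care about the $e_1/e_2$ edges, the constant realization probability $\alpha$, the explicit bound $b=1/2^{n+1}$, and the rigidity of the support (ruling out stray subgraphs in the cosupport) only makes explicit what the paper leaves implicit.
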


\begin{proof}
The proof is the same one used for Theorem \ref{teo:subsetNpComplete}. See the appendix for the details.
\end{proof}

As before, this result tells us that even though the search space is now clearly bounded we will need to impose some further restrictions in the structure of both $\aRealizationModel$ and $\aProbabilisticDatabase$.

If we bound the number of nodes and edges that can be added through $\aRealizationModel$ by some constant then the problem is tractable, assuming an efficiently computable $\aInverseRealizationModel$.

\begin{theorem}\label{teo:supersetBoundedByC}
    The Data Cleaning Superset PUDG problem can be solved in polynomial time if $\aRealizationModel(\aGraph)(\aGraph') > 0$ implies $|V_{\aGraph'} \setminus V_{\aGraph}| + |E_{\aGraph'} \setminus E_{\aGraph}| < c$ for some fixed $c\in\N$ and any pair of data-graphs $\aGraph, \aGraph'$
\end{theorem}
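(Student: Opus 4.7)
The plan is to exploit the bound on the number of additions to show that the cosupport of $\aGraph'$ has polynomially many candidates, enumerate all of them, and pick the most likely one.

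First, I would observe that since $(\aProbabilisticDatabase,\aRealizationModel)$ is a Superset EMDG, every $\aGraph \in U$ with $\aRealizationModel(\aGraph)(\aGraph')>0$ satisfies $\aGraph \subseteq \aGraph'$. Combined with the hypothesis $|V_{\aGraph'}\setminus V_{\aGraph}| + |E_{\aGraph'}\setminus E_{\aGraph}| < c$, any candidate $I$ in the cosupport of $\aGraph'$ is obtained from $\aGraph'$ by deleting strictly fewer than $c$ elements of $V_{\aGraph'}\cup E_{\aGraph'}$. Note that, since data values are attached to the nodes, once we decide which nodes to keep, their data values are inherited unambiguously from $\aGraph'$; hence no further choice is needed beyond the choice of the subsets $V_I \subseteq V_{\aGraph'}$ and $E_I \subseteq E_{\aGraph'}$ (with $E_I$ only containing edges between surviving nodes).

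Next, I would bound the number of such candidates. Letting $N = |V_{\aGraph'}| + |E_{\aGraph'}|$, the number of subsets of $V_{\aGraph'}\cup E_{\aGraph'}$ of size at most $c-1$ is at most $\sum_{i=0}^{c-1}\binom{N}{i} = O(N^{c-1})$, which is polynomial in $|\aGraph'|$ for fixed $c$. Each such subset can be turned into a candidate data-graph $I$ (discarding any edges incident to deleted nodes) in time polynomial in $|\aGraph'|$, and all of these candidates can be enumerated in polynomial time.

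Finally, I would invoke the standing assumption that $\aInverseRealizationModel_{\aProbabilisticDatabase,\aRealizationModel}$ is efficiently computable in the size of its input and output: for every candidate $I$, the value $\aInverseRealizationModel(\aGraph')(I)$ can be computed in time polynomial in $|\aGraph'|+|I|$, which is polynomial in $|\aGraph'|$. The algorithm computes $\aInverseRealizationModel(\aGraph')(I)$ for every enumerated candidate $I$ and returns the one attaining the maximum (breaking ties arbitrarily); graphs $I \in \aProbabilisticDatabase$ outside this enumeration necessarily satisfy $\aInverseRealizationModel(\aGraph')(I)=0$ and can be ignored. The overall running time is the product of two polynomial factors, hence polynomial.

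There is no real obstacle here; the only subtle points are the observation that in the Superset setting the cosupport already sits inside the finite lattice of subgraphs of $\aGraph'$ (so that ``bounded additions from $\aGraph$ to $\aGraph'$'' translates into ``bounded deletions from $\aGraph'$ to $\aGraph$''), and the implicit reliance on the blanket assumption that $\aInverseRealizationModel$ is efficiently computable, without which enumerating the candidates would not suffice to rank them.
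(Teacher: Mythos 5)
Your proposal is correct and follows essentially the same route as the paper: the paper's proof likewise reduces to observing that the cosupport of $\aGraph'$ consists of polynomially many subgraphs of $\aGraph'$ (obtained by deleting fewer than $c$ nodes/edges, with data values inherited), enumerating them, and selecting the one maximizing $\aInverseRealizationModel(\aGraph')(\cdot)$ under the standing efficient-computability assumption. Your explicit remark that surviving nodes inherit their data values from $\aGraph'$ (so no unbounded guessing is needed, unlike node deletions in the subset case) is exactly the point the paper emphasizes in its discussion following the theorem.
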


\begin{proof}
This can be proven in the same way as Theorem~\ref{teo:subset-bounded}. There is only a polynomial number of data-graphs that can be mapped to a positive probability through $\aRealizationModel$ given the observed data-graph $\aGraph$, and they can be easily enumerated.
\end{proof}

Observe that this theorem shows that in the superset case we can allow $\aRealizationModel$ to add nodes and still have a polynomial algorithm for solving the data cleaning problem. In the analogous situation of the subset case, we could not allow to remove nodes because when trying to explore the cosupport through $\aRealizationModel$ we would need to guess the data values of the nodes that were deleted, and the possibilities for these data values aren't bounded. 

\subsection{Data Cleaning in Update PUDGs}

As in the previous cases, if we consider efficiently computable $\aInverseRealizationModel$'s, then the problem can still be intractable:

\begin{theorem}\label{teo-update-fixed}
There exists a fixed $\aProbabilisticDatabase$ and $\aRealizationModel$ such that $\aInverseRealizationModel_{\aProbabilisticDatabase, \aRealizationModel}$ is efficiently computable and the problem Data Cleaning of Update PUDGs is \textsc{NP-complete}, even when $\aRealizationModel$ only allows updating edge labels.
\end{theorem}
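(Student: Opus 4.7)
The plan is to mirror the reduction used for Theorem \ref{teo:subsetNpComplete}, shifting the encoding of the SAT assignment from the \emph{presence} of certain edges to the \emph{labels} on a fixed set of edges. Membership in \textsc{NP} is immediate: because only edge labels may change, any candidate clean graph $I$ satisfies $V_I = V_{\aGraph'}$ and $|L_I(v,w)| = |L_{\aGraph'}(v,w)|$ for every pair, so $|I| = O(|\aGraph'|)$ and, since $\aInverseRealizationModel$ is efficiently computable, $\aInverseRealizationModel(\aGraph')(I)$ can be certified in time $\mathrm{poly}(|\aGraph'|)$.

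For the hardness part I would reduce from 3-SAT. Fix $\Sigma_e = \{T,F,\esLabel{is\_literal},\esLabel{is\_literal\_negated},e_1,e_2\}$ and $\Sigma_n=\{\esDato{var},\esDato{clause},\top,\bot\}$. Given a 3-CNF formula $\phi$ on $n$ variables and $m$ clauses together with an assignment $v$, define $\aGraph_{\phi,v}$ exactly as in Theorem~\ref{teo:subsetNpComplete} except that rule (\ref{red:observed}) is replaced by: for every variable $x_i$ there is an edge $(x_i, n{+}m{+}2)$ whose label is $T$ if $v(x_i)=\top$ and $F$ otherwise. Thus the vertex set, node data values and edge positions of $\aGraph_{\phi,v}$ depend only on $\phi$; the assignment $v$ is carried purely by the labels of the $n$ ``assignment'' edges. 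The observed graph $\aGraph'_\phi$ is obtained by labeling every assignment edge with the default $F$ (and fixing the edge $(\bot,\top)$ with label $e_1$).

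Define the realization model by $\aRealizationModel(\aGraph)(\aGraph') = 1/2^{n(\aGraph)}$ whenever $\aGraph'$ agrees with $\aGraph$ on vertices, data values, edge positions and all labels except possibly those of the $n(\aGraph)$ edges currently labeled $T$ or $F$, where $n(\aGraph)$ is the number of such edges; set it to $0$ otherwise (extended in some canonical way to graphs outside the encoding). This clearly satisfies the hypothesis of an Update EMDG with only edge-label modifications, is a well-defined distribution (there are exactly $2^{n(\aGraph)}$ label reassignments, each of mass $1/2^{n(\aGraph)}$), and crucially $\aRealizationModel(\aGraph_{\phi,v})(\aGraph'_\phi) = 1/2^n$ is the same for every $v$. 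Take $\aProbabilisticDatabase$ identical to that of Theorem~\ref{teo:subsetNpComplete}, including the use of the $e_1/e_2$ trick and the cantor-table normalization $N(\phi)$. Since $\sum_v [(1+\phi_v)+(1-\phi_v)] = 2\cdot 2^n$ independently of the number of satisfying assignments, the normalizing denominator in (\ref{eq:Bayes}) collapses to a closed form in $n,m$ and $|E_{\aGraph'_\phi}|$ that is computable in polynomial time, which is what makes $\aInverseRealizationModel$ efficiently computable. Setting the threshold $b$ exactly as in Theorem~\ref{teo:subsetNpComplete}, we obtain $\aInverseRealizationModel(\aGraph'_\phi)(\aGraph_{\phi,v}) > b$ iff $v \models \phi$, so satisfiability of $\phi$ reduces to the decision version of data cleaning.

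The step I expect to require the most care is verifying that the normalization factors behave as in the subset reduction: in that proof the observed graph and every graph in its cosupport have the same number of edges (so $\aRealizationModel(\cdot)(\aGraph'_\phi)$ is constant across the cosupport), and here the same must hold because $|E_{\aGraph_{\phi,v}}| = |E_{\aGraph'_\phi}|$ for every $v$ (only labels differ). Once this is observed, the posterior becomes proportional to $\aProbabilisticDatabase(\aGraph_{\phi,v})$, and the rest of the argument -- ensuring $\aProbabilisticDatabase$ is a valid polynomially computable distribution and that $\aInverseRealizationModel$ is efficiently computable -- is an essentially verbatim copy of the corresponding parts of Theorem~\ref{teo:subsetNpComplete}.
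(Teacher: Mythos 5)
Your overall strategy---re-encoding the SAT assignment in the labels of a fixed set of edges and reusing the prior and normalization machinery of Theorem~\ref{teo:subsetNpComplete}---is the same one the paper follows (its appendix proof uses \esLabel{chosen}/\esLabel{unchosen} labels on edges from each variable node to $\bot$ and $\top$ in place of your $T/F$ labels). However, there is a genuine gap precisely in the step you flag as delicate: the normalization factor. The $e_1/e_2$ trick yields the clean sum $\sum_v\bigl[(1+\phi_v)+(1-\phi_v)\bigr]=2\cdot 2^n$ only because, in the subset reduction, \emph{both} the $e_1$-version and the $e_2$-version of every $\aGraph_{\phi,v}$ lie in the cosupport of the observed graph (there the observed graph has no edge between $\bot$ and $\top$, and the subset observer may delete either variant). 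In your construction the observed graph pins the $(\bot,\top)$ edge to the label $e_1$, while your realization model is only permitted to relabel edges currently carrying $T$ or $F$; hence every $e_2$-version has $\aRealizationModel(\cdot)(\aGraph'_\phi)=0$ and drops out of the denominator of (\ref{eq:Bayes}). What survives is proportional to $\sum_v(1+\phi_v)=2^n+\#\mathrm{SAT}(\phi)$, which is not computable in polynomial time unless $\mathrm{FP}=\#\mathrm{P}$. This breaks the requirement that $\aInverseRealizationModel$ be efficiently computable (which is part of the statement being proved, and which your \textsc{NP}-membership argument needs in order to verify $\aInverseRealizationModel(\aGraph')(I)>b$), and it also invalidates the separation: the posterior of $\aGraph_{\phi,v}$ becomes $(1+\phi_v)/(2^n+\#\mathrm{SAT}(\phi))$, whose comparison against any precomputed threshold $b$ depends on the unknown count (e.g., it degenerates when $\phi$ is unsatisfiable or a tautology).

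The repair is local: let the observer also randomize the label of the $(\bot,\top)$ edge between $e_1$ and $e_2$---that is, treat it as one more uncertain edge, so that each of the $2^{n+1}$ relabelings receives mass $2^{-(n+1)}$---with the observed graph fixing that label to, say, $e_1$. Then both variants of every $\aGraph_{\phi,v}$ re-enter the cosupport with equal realization probability, the denominator collapses to $1/N(\phi)$ as you intended, and the rest of your argument goes through with $b=(1/N(\phi))\cdot 2^{-(n+1)}\cdot$ (the appropriate prior factor). The paper's own version achieves the same effect differently: its observer may rewrite \emph{any} label other than \esLabel{unchosen} to \esLabel{unchosen}, so the fully ``erased'' observed graph is reachable from both labelings of that edge.
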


\begin{proof}
Again, the proof on this theorem is similar to the one used for Theorem~\ref{teo-superset-fixed}. See the appendix for the details.
\end{proof}

As before, we can bound the size of the cosupport of  $\aGraph'$ in order to deduce a restriction of the problem that can be solved in polynomial time. We can do this in a natural way by bounding the number of nodes whose data values could be changed by $\aRealizationModel$, in a similar way as in Theorem~\ref{teo:supersetBoundedByC}. Moreover, for every data value $\esDato{c}$, we need a bound on the number of data values that can be updated to $\esDato{c}$ through $\aRealizationModel$. We can formalize this by requiring the existence of a constant $k$ and a \textbf{$k$-data-prior} function $f:\Sigma_n \to \mathcal{P}(\Sigma_n)$ computable in polynomial time on the representation of the input data value such that $|f(\esDato{c})| \leq k$, for every $\esDato{c} \in \Sigma_n$. Then, given a function $f$ with the previous restrictions, we can consider Update EMDGs $(\aProbabilisticDatabase, \aRealizationModel)$ such that $\aRealizationModel$ agrees with $f$ in the following way: if $\aRealizationModel(\aGraph)(\aGraph') > 0$, then for every $v \in V_{\aGraph'}$ holds that $D_\aGraph(v) \in f(D_{\aGraph'}(v))$. Hence, we can prove the following result:

\begin{theorem}\label{teo:update-bounded}
Given a $k$-data-prior function $f$ as input such that $\aRealizationModel$ depends on $f$, if $\aRealizationModel$ is allowed to update only a constant number $z$ of data values, then the Data Cleaning Node Update PUDG problem can be solved in polynomial time. % if $\aRealizationModel$ is allowed to update only a fixed number $k$ of data values, and $\aRealizationModel$ depends on a \textbf{$k$-data-prior function} $f$, given as input.
\end{theorem}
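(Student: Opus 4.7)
The plan is to reduce the problem to an enumeration over a polynomially bounded candidate set, in direct analogy with the proofs of Theorems~\ref{teo:subset-bounded} and~\ref{teo:supersetBoundedByC}. Because $(\aProbabilisticDatabase,\aRealizationModel)$ is a Node-Update EMDG, any data-graph $I$ in the cosupport of the observed $\aGraph'$ must satisfy $V_I = V_{\aGraph'}$ and $L_I = L_{\aGraph'}$; only the data-function $D$ can differ. Thus candidates are fully specified by their choice of data-values at the nodes of $\aGraph'$.

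First, I would bound the cosupport. Let $n = |V_{\aGraph'}|$. Since $\aRealizationModel$ updates at most $z$ data-values, for any $I$ in the cosupport of $\aGraph'$ the set $S = \{v \in V_{\aGraph'} : D_I(v) \neq D_{\aGraph'}(v)\}$ has size at most $z$. There are $\sum_{i=0}^{z}\binom{n}{i} = O(n^z)$ such subsets. Moreover, since $\aRealizationModel$ agrees with the $k$-data-prior function $f$, for every $v \in S$ the value $D_I(v)$ must lie in $f(D_{\aGraph'}(v))$, a set of cardinality at most $k$. Hence the total number of candidates is bounded by
\[
\sum_{i=0}^{z}\binom{n}{i}\,k^{i} \;=\; O(n^z \cdot k^z),
\]
which is polynomial in $n$ because both $z$ and $k$ are constants.

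Next, I would enumerate these candidates explicitly: for each subset $S$ of $V_{\aGraph'}$ of size at most $z$, and each assignment $g : S \to \Sigma_n$ with $g(v) \in f(D_{\aGraph'}(v))$, build the candidate $I$ with $V_I = V_{\aGraph'}$, $L_I = L_{\aGraph'}$, $D_I(v) = g(v)$ for $v \in S$, and $D_I(v) = D_{\aGraph'}(v)$ otherwise. Since $f$ is computable in polynomial time on its input and $|f(D_{\aGraph'}(v))| \leq k$, building each candidate is polynomial in $|\aGraph'|$. For each $I$ so constructed, compute $\aInverseRealizationModel_{\aProbabilisticDatabase,\aRealizationModel}(\aGraph')(I)$ using the standing assumption that $\aInverseRealizationModel$ is computable in time $\mathrm{poly}(|\aGraph'| + |\aInverseRealizationModel(\aGraph')|)$; here $|I| = O(|\aGraph'|)$, so each evaluation is polynomial. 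Finally, return the candidate maximizing this probability (candidates outside the support of $\aProbabilisticDatabase$ simply yield probability $0$ and are discarded).

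The only point requiring care is arguing that this enumeration is \emph{complete}, i.e., every $I$ with $\aInverseRealizationModel(\aGraph')(I) > 0$ arises in the search. This follows immediately from the defining properties of Node-Update EMDGs together with the $f$-agreement of $\aRealizationModel$ and the $z$-update bound. There is no serious obstacle; the argument is a straightforward adaptation of the cosupport-enumeration strategy already used for Subset and Superset PUDGs, with the $k$-data-prior function $f$ playing the role that the bounded number of candidate edges played in those earlier theorems.
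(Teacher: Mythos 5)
Your proposal is correct and follows essentially the same route as the paper's proof: bound the cosupport of $\aGraph'$ by $\binom{n}{z}k^z = O(n^z)$ candidates using the $z$-update bound and the $k$-data-prior function $f$, enumerate them explicitly, evaluate $\aInverseRealizationModel$ on each, and keep the maximizer. Your write-up merely spells out the enumeration and the completeness argument in more detail than the paper does.
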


\begin{proof}
    Given the observed unclean data-graph $\aGraph'$, we generate the cosupport of $\aGraph'$ given $\aRealizationModel$ using the $k$-data-prior function $f$. In particular, the size of the cosupport is bounded by $\binom{n}{z} k^z = \bigO{n^z}$, and it can be generated in polynomial time on its size. Then, we only need to evaluate $\aInverseRealizationModel$ and keep the most probable one.
\end{proof}

\subsection{Global Cardinality Constraints}

We now focus on some other restrictions of the Data Cleaning problem that are particularly interesting within the framework of data-graphs as they allow modifying the data values on the nodes. Therefore, in this section we consider Node-Update PUDGs.

Suppose that we have a prior on how many times each edge label $\esLabel{e}$ or data value $\esDato{c}$ is present in the original data-graph. In this scenario, we consider as a preferred answer those data-graphs that are closer to the prior.
In other words, we consider cardinality constraints surrounding the number of appearances of one or more predetermined data subsets of values and/or edge labels.

First, we present some tractable versions based on cardinality constraints. Afterward, we explore other restrictions to the problem that make it remain hard. 
This allows us to shed light on the line between tractability and hardness for Node-Update EMDG. 
%%%%%%%

\subsubsection{\nombreRooksSection}
Let us consider the following restriction of the general Node-Update EMDG model, where data in nodes are unique and belong to a predefined subset of ``valid'' data values. Let $\aGraph'=(V,L,D)$ be our observed unclean data-graph, suppose that $|\Sigma_n|= |V| = N$ and $D(V)=\Sigma_n$%\edwin{la condición que se quiere no debería ser $D(V)=\Sigma_n$? O $D$ no distingue los datos que estén fuera del rango de $f$? Sería bueno aclarar esto.}
, and that $\Sigma_n$ is part of the input. 
Let $m<N$, and let $f_{\aGraph'}: [1, m] \to V$ be an injective function that indicates for which nodes of the observed graph we know with certainty that their data-values are correct. More precisely, $\aInverseRealizationModel(\aGraph')(I) = 0$ if $I$ modifies a node in the image of $f$. In other words, we have certainty over the data values for precisely $m$ nodes of $\aGraph'$, whereas for the remainder of the nodes we do not.
%We consider a requirement that every data value is assigned to precisely one node as if the data value was a key.
Suppose we are given a weight function such that these weights determine the probability of a certain assignment of data-values to each node. More precisely, let $w: \Sigma_n \times V \rightarrow \left[0,1\right]$ such that $\sum_{I \in \aProbabilisticDatabase} \prod^{N}_{i,j=1} w_{ij} p^{(I)}_{ij} = 1$, where 
$$ 
p^{(I)}_{ij} = \begin{cases} 1 & \mbox{if the data-value } i \mbox{ has been assigned to the node } j \\
0 & \mbox{if not}
\end{cases}
$$
and $\sum^{N}_{i=1} p^{(I)}_{ij} = 1$ for every $j=1, \ldots, N$ for every $I\in\aProbabilisticDatabase$. Moreover, suppose that $\aRealizationModel(H)(G')=1$ for every $H \in \aProbabilisticDatabase$. Considering all these assumptions, the value $\aInverseRealizationModel(\aGraph')(I)$ is: %\edwin{el producto del final hay que truncarlo sobre el dominio de $w$}
\begin{small}
$$
\aInverseRealizationModel(\aGraph')(I) = \frac{\aProbabilisticDatabase(I) \times \aRealizationModel(I)(\aGraph')}{\sum \aProbabilisticDatabase(H) \times \aRealizationModel(H)(\aGraph')} = \frac{\aProbabilisticDatabase(I)}{\sum \aProbabilisticDatabase(H)} = \aProbabilisticDatabase(I)= \prod_{(i,j) \in \Sigma_n \times (V \setminus Im(f))} w_{ij} p^{(I)}_{ij}.
$$
\end{small}

Since we are looking for the graph with the maximum probability, this is equivalent to finding those weights for which the product is maximum.
This allows us to take into consideration the adjacencies and labels in our observed unclean data-graph when assigning the weights.
Additionally, we could also ask for an extra feature assuming that $\sum^{N}_{j=1} w_{ij} = 1$ for every $j=1, \ldots, N$. This condition can be interpreted as: ``all the possible data value assignments for each node determine a probability by itself''.
In summary, we define the problem \nombreRooks\  as follows: %\decisionProblem{data cleaning} considering the following restrictions for the input: a set \IDSet of node identifiers, a set of data-values $\Sigma_n$, a $\Sigma_n$-\IDSet-PUDG $\aPUDG = (\aProbabilisticDatabase,\aRealizationModel,\anIDGraph{\IDSet}')$, an injective function $f_{\aGraph'}: [1, m] \to V_{\aGraph'}$ as above, and a weight function $w: \Sigma_n \times V(\aGraph') \rightarrow \left[0,1\right]$ as above.

\begin{center}
\fbox{\begin{minipage}{25em}
  \textsc{Problem}: \nombreRooks
  
\textsc{Input}: A set of data-values $\Sigma_n$, a PUDG $\aPUDG = (\aProbabilisticDatabase,\aRealizationModel,\aGraph')$ over the set of data-values $\Sigma_n$, an injective function $f_{\aGraph'}: [1, m] \to V_{\aGraph'}$, and a weight function $w: \Sigma_n \times V(\aGraph') \rightarrow \left[0,1\right]$ as described above.

\textsc{Output}: A data-graph $I_m \in \aProbabilisticDatabase$ such that the probability of  $\aInverseRealizationModel(\aGraph')(I_m)$ is maximum. 
\end{minipage}}
\end{center}

\begin{theorem}
The problem {\nombreRooks} can be solved in $\bigO{n^3}$~time.
\end{theorem}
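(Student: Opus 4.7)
The plan is to reduce the problem to the classical maximum-weight bipartite matching (assignment) problem, which admits an $O(n^3)$ solution via the Hungarian algorithm. The name \nombreRooks{} suggests exactly this viewpoint: choosing an assignment of distinct data values to nodes corresponds to placing non-attacking rooks on an $n \times n$ board whose cells carry the weights $w_{ij}$, with the goal of maximizing the product of the selected weights.

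First, I would unpack the combinatorial structure of the support of $\aProbabilisticDatabase$ under the stated hypotheses. Since $|\Sigma_n| = |V| = N$ and $D_{\aGraph'}(V) = \Sigma_n$, the data function of every data-graph in the support is a bijection between $V$ and $\Sigma_n$; and because we are in the Node-Update setting, the underlying graph structure is preserved. The function $f_{\aGraph'}$ fixes the values of $m$ distinguished nodes, so each candidate $I \in \aProbabilisticDatabase$ is determined by a bijection $\sigma_I: V \setminus \mathrm{Im}(f_{\aGraph'}) \to \Sigma_n \setminus D_{\aGraph'}(\mathrm{Im}(f_{\aGraph'}))$.

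Second, I would simplify the expression for $\aInverseRealizationModel(\aGraph')(I)$. Given the stated normalization and the fact that $\aRealizationModel(H)(\aGraph') = 1$ for every $H$ in the support, the formula collapses to $\aInverseRealizationModel(\aGraph')(I) = \prod_{j \in V \setminus \mathrm{Im}(f_{\aGraph'})} w_{\sigma_I(j),\, j}$, since the indicators $p^{(I)}_{ij}$ select exactly one pair per free node. Thus maximizing $\aInverseRealizationModel(\aGraph')(I)$ over $I$ is exactly the problem of finding a perfect matching in the bipartite graph on free nodes and free data values, with edge weights $w_{ij}$, that maximizes the product of weights on the matching.

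Finally, I would invoke standard results. Taking logarithms (with $\log 0 = -\infty$, treating zero-weight edges as forbidden) converts the maximum-product matching into a maximum-sum assignment problem with at most $n - m \leq n$ rows and columns; the Hungarian algorithm solves this in $O((n-m)^3) = O(n^3)$ time, and the resulting $\sigma$ yields the desired $I_m$. The only technical care needed is the handling of zero weights and the (routine) verification that the log-transformed optimum corresponds to the product optimum; I do not anticipate a substantive obstacle here, since both transformations preserve the argmax. The real content of the theorem is the identification of this clean combinatorial reformulation of what is otherwise an opaque probabilistic optimization.
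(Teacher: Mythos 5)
Your proposal is correct and follows essentially the same route as the paper: both reduce the problem to a maximum-weight bipartite matching between the free nodes and the free data values with weights $w_{ij}$, solved in $\bigO{n^3}$ time. Your explicit logarithmic transformation from product-maximization to the standard sum-maximization assignment problem is a small technical refinement that the paper leaves implicit, but it does not change the argument.
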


\begin{proof}
To determine the complexity of this problem, we can reinterpret it as an $N$-rooks problem as follows. 
Consider an $N$-square board, where the ranks (i.e.\ the rows) represent the data values and the files (i.e.\ the columns) represent the nodes of $\aGraph'$. 
We place a rook in square $(i,j)$ if we are certain that the node $j$ has the data value $i$. Thus, there are $m$ rooks with a fixed position in our board, all of them in non-attacking positions since we assumed that these $m$ values are distinct and assigned to different nodes. Now, consider that our board has assigned weights $w_{ij}$ to each square, %, and recall that $\sum^{|\Sigma_n|}_{i=1} w_{ij} = 1$ for every $j=1, \ldots, N$. In other words, these
which represents the probability of the assignment of each data value to each node.
We can determine the remaining positions of the $N-m$ rooks by considering an auxiliary graph $H$ as follows. Let $H$ be a graph that has its vertices partitioned into $V_1$ and $V_2$, having $N$ vertices each: one of them --let us say $V_1$-- has one vertex for each data value, and the other one has one vertex for each node. There are no edges between vertices in the same partition, and there is an edge between $x_i$ in $V_1$ and $y_j$ in $V_2$ if and only if the data value $i$ can be assigned to the node $j$ with positive probability. More precisely, we consider that the edge $x_i y_j$ has a weight $w_{ij}$, this is the weights of the edges between vertices in $V_1$ and $V_2$ are now the probabilities determined on each square.
Hence, solving Node-Update in this case derives from finding a maximal weighted matching in the bipartite graph $H'$, which can be done in $\bigO{n^3}$~time~\cite{haynes1998domination}. 
\end{proof}

\subsubsection{Weak cardinality constraint over data values}\label{section:localAndGlobal}

We could consider an alternative scenario in which the function $\aProbabilisticDatabase$ codifies a cardinality preference criteria by having a function $T: U \times \Sigma_n \to \N_0$ such that $\sum_{\esDato{c} \in \Sigma_n}T(\aGraph, \esDato{c}) = |V_\aGraph|$ for all $\aGraph$. This function $T$ tells us the number of data values that we want to have for each data value $\esDato{c}$. Naturally, $T(\aGraph, \esDato{c})$ will be bigger than $0$ for at most $|V_\aGraph|$ different data values. Given a function $T$ and a data-graph $H$ we define the penalization factor of $H$ with respect to $T$ as

$$
d_T(H) = \sum_{\esDato{c} \in \Sigma_n} |T(H, \esDato{c}) - count(H, \esDato{c})|
$$
where $count(H, \esDato{c})$ denotes the number of times the data value $\esDato{c}$ is present in $H$. We say that \emph{$\aProbabilisticDatabase$ depends on $T$} if for every pair of data-graphs with the same set of nodes it is true that $\aProbabilisticDatabase(H) > 0 $ if and only if $d_T(H) = 0$ and $\aProbabilisticDatabase(H)>0,\aProbabilisticDatabase(H') > 0$ implies $\aProbabilisticDatabase(H) = \aProbabilisticDatabase(H')$. This basically means that $\aProbabilisticDatabase$, given a set of nodes $V$, distributes the probability evenly accross all data-graphs with a set of nodes that also satisfy the constraints defined by the function $T$.

Notice that this prioritizes data-graphs that follow a cardinality criterion over the set $\Sigma_n$, completely ignoring the uncleaned observed data-graph $\aGraph$. We add a simple local realization model $\aRealizationModel$ between clean and unclean data-graphs by using a function $\delta: \Sigma_n \times \Sigma_n \to \N_0$ that defines, for each pair of data values $(\esDato{c},\esDato{d}) \in \Sigma_n^2$, the `cost' of changing $\esDato{c}$ to $\esDato{d}$. Naturally, we ask that $\delta(\esDato{c},\esDato{c}) = 0$ for every $\esDato{c} \in \Sigma_n$. In other words, $\aRealizationModel$ prioritizes those data-graphs $H$ such that the cost of the data values transitions required to go from $H$ to the unclean data-graph $\aGraph$ is minimized. 

More formally, given two data-graphs $H$ and $\aGraph$ such that $V_H=V_\aGraph, L_H=L_\aGraph$ and a weight function $\delta$ between data values, we define the \defstyle{cost} of going from $H$ to $\aGraph$ as
$$
c_\delta(H,\aGraph) = \sum_{v \in V_H} \delta(D_H(v),D_\aGraph(v)).
$$

We say that \emph{$\aRealizationModel$ depends on $\delta$} if $\aRealizationModel(H)(\aGraph) > \aRealizationModel(H')(\aGraph')$ if and only if $c_\delta(H,G) < c_\delta(H',G')$.

%We can prove the following result:
Observe that the following theorem is (almost) a generalization of the scenario described in the previous section. In that case, the weight $w$ of the transition between data values could also depend on the involved vertex.
%\textit{Proof of Theorem~\ref{Theorem:WeakCardinalityConstraints}}

\begin{theorem} \label{Theorem:WeakCardinalityConstraints}
The \functionProblem{Data cleaning} problem can be solved in polynomial time if $\aProbabilisticDatabase$ depends on a given function $T$ and $\aRealizationModel$ depends on a given local function $\delta$.\footnote{Here, we assume that $T$ and $\delta$ are functions that can be computed in polynomial time.}

\end{theorem}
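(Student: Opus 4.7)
The plan is to recast the task as a minimum-weight perfect matching in a bipartite graph. Under the Node-Update restriction, the vertex set $V$ and labeling $L$ of any candidate clean graph $H$ in the cosupport of $\aGraph'$ coincide with those of $\aGraph'$, so choosing $H$ reduces to choosing a data-value assignment $D_H : V_{\aGraph'} \to \Sigma_n$. Since $\aProbabilisticDatabase$ depends on $T$, it is uniform and positive exactly on those $H$ whose histograms match the targets $T(H,c)$, and zero otherwise. Because the structural skeleton is fixed across this support, these targets collapse to a common profile $(t_c)_{c \in \Sigma_n}$ with $\sum_c t_c = |V|$, which we can compute in polynomial time by evaluating $T$ at a convenient witness data-graph. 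Since $\aProbabilisticDatabase$ is uniform on its support and $\aRealizationModel(H)(\aGraph')$ is monotonically decreasing in $c_\delta(H,\aGraph')$ by hypothesis, the posterior $\aInverseRealizationModel(\aGraph')(H)$ is maximized precisely among those $H$ that meet the cardinality targets while minimizing $c_\delta(H,\aGraph') = \sum_{v \in V_{\aGraph'}} \delta(D_H(v), D_{\aGraph'}(v))$.

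Next I would build a bipartite graph $B = (X \cup Y, E)$ where $X = V_{\aGraph'}$ and $Y$ contains $t_c$ copies of each $c \in \Sigma_n$, so that $|X| = |Y| = |V|$. Between $v \in X$ and each copy of $c \in Y$, place an edge of weight $\delta(c, D_{\aGraph'}(v))$. A perfect matching of $B$ corresponds bijectively to an admissible data-value assignment $D_H$ respecting the cardinality profile, and its total weight equals $c_\delta(H, \aGraph')$. Thus the data cleaning problem reduces to minimum-weight perfect bipartite matching, solvable in $\bigO{n^3}$ time via the Hungarian algorithm, in the same spirit as the reduction used for \nombreRooks. At most $|V|$ data values have nonzero target, so $B$ has polynomial size; infeasibility (no $H$ with the fixed skeleton satisfying the $T$-constraints exists) is detected when $B$ admits no perfect matching, in which case $\aInverseRealizationModel(\aGraph')$ is identically zero and the problem has no solution.

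The main obstacle I expect is the apparent circularity in the definition of $T(H, c)$: the targets are nominally a function of $H$, yet the matching construction requires them in advance. The Node-Update restriction resolves this, since $H$ and $\aGraph'$ share their full structural skeleton and differ only in data values; for $\aProbabilisticDatabase$ to be a well-defined probabilistic database under the criterion ``$d_T(H) = 0$'', the profile $(t_c)$ must be consistent on the support, and it can be read off by evaluating $T$ at any candidate graph and validating consistency in polynomial time. Once this subtlety is handled, the remainder of the argument is a routine reduction to a textbook polynomial-time combinatorial optimization problem.
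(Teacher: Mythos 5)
Your proposal is correct and follows essentially the same route as the paper: both reduce the problem to a minimum-weight perfect matching in a bipartite graph whose sides are the nodes of $\aGraph'$ and $T(\cdot,\esDato{c})$ copies of each data value $\esDato{c}$, with edge weights given by $\delta$. Your additional remarks on reading off a consistent target profile from $T$ and on detecting infeasibility are sensible housekeeping that the paper leaves implicit, but they do not change the argument.
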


\begin{proof}
Given the Update PUDG $(\aProbabilisticDatabase,\aRealizationModel, \aGraph)$ where $\aProbabilisticDatabase$ depends on $T$ and $\aRealizationModel$ depends on $\delta$, we know that the most probable clean data-graph will be the one satisfying the constraints defined by $T$ and that minimizes the cost of the transitions as defined in $\delta$. We define the complete bipartite undirected graph $I=(V_I,E_I)$ in such a way that a minimum weighted perfect matching in $I$ codifies the cleanest data-graph~$H$:
\begin{align*}
& V_I  = \{v : v \in V_\aGraph\}  \cup %\bigcup_{\esDato{c} \in \Sigma_n : T(\aGraph,\esDato{c}) > 0} 
\{u_{\esDato{c},i} : \esDato{c} \in \Sigma_n, T(\aGraph,\esDato{c}) > 0, 1 \leq i \leq T(\aGraph,\esDato{c})\}\\
\
& E_I = \bigcup_{v \in V_\aGraph} \{(v,\delta(D_\aGraph(v),\esDato{c}),u_{\esDato{c},i}) : v,u_{\esDato{c},i}\in V_I\}
\end{align*}

$I$ has two sets of nodes: the ones from $\aGraph$ and new $T(\aGraph,\esDato{c})$ nodes for every $\esDato{c}\in \Sigma_n$ such that $T(\aGraph,\esDato{c}) > 0$ (it follows from $\sum_{\esDato{c}\in\Sigma_n}T(\esDato{c})=N$ that there are $N$ nodes of the form $u_{\esDato{c},i}$). Observe that a matching in $\aProbabilisticDatabase$ is just an assignation of the nodes from $\aGraph$ to the desired data values.

It is easy to see that the assignation with the lowest cost corresponds to the matching with minimum weight.
\end{proof}

This model does not interact with the edges of the data-graph but simple interactions with the edges can be imposed by increasing the expressive power of the realization model: for example, we could consider local functions $\delta$ that take as input the edges labels outgoing the nodes being changed, instead of only the data values. As far as these interactions are independent of the data values, the modelling through weighted bipartite matching will remain valid.

\subsubsection{\nombreHittingSetSection}\label{hittingSetSection}

For this next restricted version, let us consider the alphabet $\Sigma_n$ of data values and, for each node $v$ in $\aGraph'$, suppose we are given a subset of data values $X_v \subseteq \Sigma_n$. We may consider that these subsets comprise all the valid data values for each node, trusting in some pre-processing or prior knowledge on the data.
For our problem, we assume that the data value of each $v$ in $\aGraph'$ is missing.
However, we do know that data-graphs $\aGraph$ such that $\aInverseRealizationModel(\aGraph')(\aGraph) >0$ are precisely those isomorphic to $\aGraph'$ for which the data values of each $v$ lies in $X_v$, following the intuitive idea that these data values represent the only valid assignments for each node.
Furthermore, those data-graphs in the cosupport of $\aGraph'$ that have the highest probability are those that have fewer distinct data values.

We define the problem \nombreHittingSet when the input is a PUDG $\aPUDG = (\aProbabilisticDatabase,\aRealizationModel,\aGraph')$ with an $\aRealizationModel$ as above, 
and a family $\{X_v\}_{v\in \aGraph'}$ such that $X_v \subseteq \Sigma_n$ and $|X_v|>0$.

\begin{comment}
\begin{center} 
\fbox{\begin{minipage}{30em}
  \textsc{Problem}: \nombreHittingSet \bignina{poner descripción/nombre piola}
  
\textsc{Input}: A PUDG $\aPUDG = (\aProbabilisticDatabase,\aRealizationModel,\aGraph')$, and a family $\{X_v\}_{v\in \aGraph'}$ s.t. $X_v \subseteq \Sigma_n$ and $|X_v|>0$.

\textsc{Output}: A data-graph $I_m \in \aProbabilisticDatabase$ such that the probability of  $\aInverseRealizationModel(\aGraph')(I_m)$ is maximum. 
\end{minipage}}
\end{center}
\end{comment}

\begin{theorem}
The problem {\nombreHittingSet} is NP-hard even if $|X_v|\leq 2$ for every $v \in \aGraph$.

\end{theorem}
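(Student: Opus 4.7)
The approach will be a polynomial-time reduction from the (NP-hard) minimum vertex cover decision problem, exploiting the classical fact that minimum hitting set restricted to instances where every $X_v$ has size at most $2$ is equivalent to vertex cover: a hitting set for a family of edges is exactly a vertex cover of the underlying graph.

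Given an instance $(G,k)$ of vertex cover with $G=(V_G,E_G)$, I would set $\Sigma_n = V_G$ and build the observed data-graph $\aGraph'$ with exactly one node $v_e$ for each edge $e \in E_G$ (the edge/label structure of $\aGraph'$ plays no role here, so take $\aGraph'$ edgeless). For each $e = \{a,b\} \in E_G$, define $X_{v_e} = \{a,b\}$, which has size~$2$. By the defining property of \nombreHittingSet, every data-graph $I$ in the cosupport of $\aGraph'$ is isomorphic to $\aGraph'$ with $D_I(v_e) \in \{a,b\}$ for each $e$. Choosing such an $I$ therefore amounts to selecting, for every edge of $G$, one of its two endpoints, and the set $S(I)\subseteq V_G$ of data values actually appearing in $I$ is precisely a vertex cover of $G$. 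Conversely, any vertex cover $C$ of $G$ yields at least one compatible assignment whose used-value set is contained in $C$.

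Since the most probable $I$ is, by assumption on this realization model, one that minimizes the number of distinct data values $N(I) = |S(I)|$, any algorithm solving \nombreHittingSet returns an $I$ with $|S(I)|$ equal to the minimum vertex cover number of $G$. We then decide the vertex cover instance by checking whether the number of distinct data values in the returned $I$ is at most $k$. The whole reduction is clearly polynomial in $|G|$.

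To make this a legitimate reduction within the PUDG framework, I would fix concrete $\aProbabilisticDatabase$ and $\aRealizationModel$ that faithfully implement the informal specification. A simple choice is to let $\aRealizationModel(I)(\aGraph') = 1$ for every $I$ in the desired cosupport (and $0$ otherwise), and $\aProbabilisticDatabase(I) \propto \alpha^{-N(I)}$ on this cosupport for a constant $\alpha$ large enough (e.g.\ $\alpha = |V_{\aGraph'}| + 2$) that the posterior $\aInverseRealizationModel(\aGraph')(I)$ is strictly monotonically decreasing in $N(I)$, with normalization constants given by explicit closed forms in terms of the $X_v$'s. The one mild obstacle is this last bookkeeping step: verifying that the chosen $\aProbabilisticDatabase$ and $\aRealizationModel$ indeed normalize to valid probability distributions and yield a unique optimum at the minimum-$N$ graphs; this is routine but must be spelled out carefully so that the oracle's answer is unambiguously the hitting set we want.
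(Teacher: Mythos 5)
Your proof is correct and follows essentially the same route as the paper's: the paper observes that the problem is, by definition, equivalent to minimum Hitting Set for the family $\{X_v\}$, whose decision version is NP-complete even with $|X_v|\leq 2$ (i.e., Vertex Cover), which is exactly the reduction you spell out. Your extra care in instantiating concrete $\aProbabilisticDatabase$ and $\aRealizationModel$ is a reasonable elaboration of details the paper leaves implicit in the problem's informal specification, but it does not change the argument.
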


\begin{proof}
It follows from the definition that finding a solution of maximum probability for this version of Node-Update EMDG problem is equivalent to finding a Hitting Set for the family $\{ X_v\}_{v\in \aGraph'}$, since we are looking for a minimum subset of data values $X \subset \Sigma_n$ such that $X \cap X_v \neq \emptyset$ for every $v \in \aGraph'$. Moreover, the decision version of this problem is NP-complete even if $|X_v| \leq 2$~\cite{garey1979computers}.
\end{proof}

Notice that we are considering a quite restricted version of the general Node-Update EMDG problem, still focusing the restrictions on cardinality-based constraints. And even in the very specific case in which we consider every subset $X_v$ of size smaller than 2, we obtain an NP-complete version of our original problem.

We could also consider the analogous version for Edge-Update EMDG. More precisely, we are given subset of edge-labels $X_{v,w} \subseteq \Sigma_e$ for each edge $(v, \esLabel{e}, w)$ in $E(\aGraph')$. As explained in the previous paragraphs, we consider that the subset $X_{v,w}$ comprises all those valid edge-label values for each edge $(v, \esLabel{e}, w)$, once again trusting in some pre-processing or prior knowledge.
Moreover, we assume that all the edge-labels are missing, and that the data-graphs $\aGraph$ for which $\aInverseRealizationModel(\aGraph')(\aGraph) >0$ are precisely those isomorphic to $\aGraph'$ such that the edge-labels $\esLabel{e}$ of each edge between $v$ and $w$ lie in $X_{v,w}$.
Furthermore, those data-graphs in the preimage of $\aGraph'$ that have the highest probability are those that have fewer distinct edge-labels. The hardness of this version follows analogously as in the Node-Update case.

\begin{theorem}
The problem {\nombreHittingSetEdge} is NP-hard even if $|X_{v,w}|\leq 2$ for every $(v, \esLabel{e}, w) \in E_\aGraph$.
\end{theorem}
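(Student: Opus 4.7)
The plan is to mimic the reduction used in the node-update version, adapting it to work with edge labels instead of node data values. The reduction goes from the Hitting Set problem restricted to families of sets of size at most $2$ (equivalently, Vertex Cover), which is well known to be $\textsc{NP}$-hard.

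Given an instance of Hitting Set consisting of a universe $U$ and a family of subsets $S_1,\ldots,S_k\subseteq U$ with $|S_i|\le 2$, I would build a data-graph $\aGraph'$ with $2k$ fresh nodes $v_1,w_1,\ldots,v_k,w_k$ and, for each $i$, an edge $(v_i,\esLabel{e},w_i)$ for some dummy placeholder label $\esLabel{e}$ (which will be regarded as missing, per the set-up above the statement). Take $\Sigma_e\supseteq U$ so that each universe element serves as a potential edge label, and set $X_{v_i,w_i} := S_i$, so $|X_{v_i,w_i}|\le 2$. The distribution $\aProbabilisticDatabase$ and realization $\aRealizationModel$ are defined exactly as dictated by the statement of the problem: graphs in the cosupport of $\aGraph'$ are the isomorphic copies of $\aGraph'$ where each edge $(v_i,w_i)$ receives a label lying in $X_{v_i,w_i}$, and the more labels a candidate graph reuses across edges (equivalently, the fewer distinct labels it uses in total), the higher the probability it receives under $\aInverseRealizationModel(\aGraph')$.

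With this set-up, a data-graph $\aGraph$ in the cosupport of $\aGraph'$ is determined by a choice of a label $\ell_i\in S_i$ for each $i$, and the set $H(\aGraph)=\{\ell_1,\ldots,\ell_k\}$ is precisely a hitting set for the family $\{S_i\}_{i=1}^k$. Conversely, every hitting set $H$ induces a cosupport element of $\aGraph'$ by picking, for each $i$, some $\ell_i\in S_i\cap H$. Since the probability $\aInverseRealizationModel(\aGraph')(\aGraph)$ is monotone decreasing in the number of distinct edge-labels used, the most probable clean data-graph corresponds exactly to a minimum hitting set for $\{S_i\}$. Hence an algorithm for \nombreHittingSetEdge would solve Hitting Set with sets of size at most $2$ in the same asymptotic time (up to the polynomial cost of the reduction), giving $\textsc{NP}$-hardness.

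The main technical point is just to verify that the probabilistic and realization models described informally in the paragraphs introducing the problem can be instantiated concretely so that $\aInverseRealizationModel(\aGraph')(\aGraph)$ is indeed a strictly decreasing function of the number of distinct edge-labels appearing in $\aGraph$; this can be done, for instance, by mirroring whatever explicit weighting was used in the Node-Update proof (e.g.\ assigning weights proportional to $\alpha^{-|\{\text{distinct edge-labels of }\aGraph\}|}$ for some $\alpha>1$), and the rest of the argument is then immediate. Since this is structurally identical to the node version, I do not expect any genuine obstacle beyond carefully tracking that edge-labels (as opposed to node data values) are what plays the role of the universe elements.
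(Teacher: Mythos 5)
Your proposal is correct and follows essentially the same route as the paper: the paper's (one-line, by-analogy) argument also reduces Hitting Set with sets of size at most $2$, identifying the most probable cosupport element with a minimum hitting set over the edge-label sets $X_{v,w}$. Your extra care in making the gadget and the monotone-in-distinct-labels weighting explicit is a useful elaboration but not a different proof.
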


%\subsection{\nombreColoring}\label{coloringSection}

%We can simulate the graph coloring problem by using some quite simple $\aRealizationModel$ and $\aProbabilisticDatabase$. Given a graph $\aGraph$ and a natural number $k$ we define the PUDG $(\aProbabilisticDatabase, \aRealizationModel_{k}, H_{ID})$, where we consider $H$ to be a copy of $\aGraph$ having all its vertices colored with the same color --let us say with color $0$--, $\aProbabilisticDatabase$ is a uniform distribution over every $i$-colorings of $\aGraph$ for $i \in \{1,\ldots, |V(\aGraph)|\}$ (ignoring whether the coloring respects adjacencies), and $\aRealizationModel$ assigns higher probability to those graphs $H'$ such that $H'$ is the same data-graph $H$ up to a change in the data values, and $H'$ represents a proper coloring using at most $k$ colours. It can be shown that:

%\begin{theorem}
%A graph $\aGraph$ admits a $k$ coloring if and only if there exists a data-graph $H'$ with probability bigger than a certain fixed bound $b$ in the Update EMDG $(\aProbabilisticDatabase, \aRealizationModel_{k}, H_{ID})$ defined before.
%\end{theorem}

%\begin{proof}
%See Appendix~\ref{Appendix:Proofs} for the details of the construction.
%\end{proof}

%Observe that the modeling is intuitive, and also $\aRealizationModel$ does not need to solve any isomorphism problem because we are giving IDs to the observed data-graph $H$.

%%%%%%%%%%%%

\section{PUDG with \Gregxpath constraints}

In this section, we consider alternative ways to define a system of priorities by using $\Gregxpath$ expressions to add constraints.

%%%%%%%%%%%%%%%%%%%%%%%%

\subsection{Adding constraints}\label{section:constraints}
While a particular EMDG $(\aProbabilisticDatabase, \aRealizationModel)$ can be used to encode information about our epistemic understanding, sometimes it makes sense to restrict our space of possible worlds in particular ways. Depending on the field of application, these restrictions could arise from obtaining a deeper theoretical understanding, as a way to explore multiple hypotheses or from practical limitations on the world-models that can be effectively analyzed.

The concept of hard and soft constraints from the area of database repairing \cite{barcelo2017data, tenCate:2012, arenas1999consistent} can be incorporated to our framework of data-graphs, for both problems of Data Cleaning and PQA.
 
\begin{definition}[Restrictions and consistency] 
Let $\aGraph$ be a data-graph and $\aRestrictionSet = \aRestrictionSetPaths \union \aRestrictionSetNodes$ a set of \defstyle{restrictions} (also called \defstyle{constraints}), where $\aRestrictionSetPaths$ consists of path expressions and $\aRestrictionSetNodes$ of node expressions. 
We say that $\aGraph$ \defstyle{violates} $\aNodeExpression$, if $\aNode \not\in \semantics{\aNodeExpression}_\aGraph$ for some $ \aNode \in V_\aGraph$. Otherwise, we say that $\aGraph$ \defstyle{satisfies} $\aNodeExpression$, and we denote it by $\aGraph \models \aNodeExpression$.
Similarly, we say that $\aGraph$ \defstyle{violates} $\aPath$ if $(\aNode,\aNodeb) \not\in \semantics{\aPath}_\aGraph$ for some $\aNode,\aNodeb \in V_\aGraph$, and otherwise we say that $\aGraph$ \defstyle{satisfies} $\aPath$ ($\aGraph \models \aPath$).

If $\aGraph$ does not violate $\aFormulaNodeOrPath$ for every $\aFormulaNodeOrPath \in \aRestrictionSet$, we say that $\aGraph$ is \defstyle{consistent} w.r.t.\ $\aRestrictionSet$, and we note this as $\aGraph \models \aRestrictionSet$.

We say that $\aGraph,\aNode$ \defstyle{violates} $\aNodeExpression$, if $\aNode \not\in \semantics{\aNodeExpression}_\aGraph$. 
Similarly, we say that $\aGraph, \aNode, \aNodeb$ \defstyle{violates} $\aPath$ if $(\aNode,\aNodeb) \not\in \semantics{\aPath}_\aGraph$.
\end{definition}

This notion of consistency is useful for dividing the set of data-graphs into two natural groups: those who satisfy the constraints and those who do not. However, this stark division can be generalized.

Given a set of constraints $\aRestrictionSet$, we can assign a weight $0 \le w_\aFormulaNodeOrPath < 1$ to each $\aFormulaNodeOrPath \in \aRestrictionSet$ by considering a function $\aRestrictionSetWithWeights: \aRestrictionSet \to \left[0,1 \right)$.

%If we do so for every formula in the set, we call the resulting set $\aRestrictionSetWithWeights \eqdef \{(\aFormulaNodeOrPath,w_\aFormulaNodeOrPath) \mid  \aFormulaNodeOrPath \in \aRestrictionSet\}$ a \defstyle{weighted restriction set}, which we may also think of as a function with domain  $\aRestrictionSet$. 
A value of $0$ can be thought of as a \defstyle{hard (exclusionary) constraint}: %\sergio{Quizás mencionar Denial constraints \cite{chu2013discovering}. Solo que en las denials constraints, la semántica quiere que pasen, solo que ellas tienen la negación en su sintaxis}:
we want our clean models to never satisfy the formula; these constraints can be thought of as a type of denial constraints~\cite{Bertossi_2011_book, Fan_2012}. On the other hand, a value between $0$ and $1$ is a \defstyle{soft constraint}: the models that satisfy the formula are penalized in their probability, but not absolutely. %nota: con 0 < w_form \le 1 me quedaba una formulación no muy linda % 
%\sergio{Quizás conviene otra formulación. Así como está, estamos restringiéndonos a la negación de estas fórmulas. Obs: Las dos opciones no son duales, a causa de estar incluyendo tanto node como path expressions.
%Dejar en claro que esto en realidad es lo dual de Restriction... quizás se puede poner otro nombre}

A weight function $\aRestrictionSetWithWeights$ defines a preference relation among constraints. Given $(\aProbabilisticDatabase, \aRealizationModel)$ an EMDG,
we would like to
lift this preference relation to a preference relation among the
data-graphs $\aGraph$ in $\aProbabilisticDatabase$, combining
its associated probability with
the weights from
each constraint in
$\aRestrictionSet$
that is satisfied by $\aGraph$. We formalize this as follows: %\edwin{Super masivo y recontra chocante abuso de notación con $w^C$} %Más que nada por el segundo, no?

\[
\aRestrictionSetWithWeights(\aGraph) \eqdef \aProbabilisticDatabase(\aGraph) \times \prod_{\aFormulaNodeOrPath \in \aRestrictionSet} \aRestrictionSetEvaluated(\aGraph, \aFormulaNodeOrPath)
\]

\noindent where $\aRestrictionSetEvaluated(\aGraph, \aFormulaNodeOrPath)$ is defined as: 

\[ \aRestrictionSetEvaluated(\aGraph, \aFormulaNodeOrPath) \eqdef \begin{cases} 
     \aRestrictionSetWithWeights(\aFormulaNodeOrPath) & \mbox { if $\aGraph \models \aFormulaNodeOrPath$}\\
     1  & \mbox{ otherwise}
   \end{cases}
\]

Given $(\aProbabilisticDatabase, \aRealizationModel)$ an EMDG, we can now
define the new EMDG $(\aProbabilisticDatabase_\aRestrictionSetWithWeights, \aRealizationModel)$
which is an EMDG where the weights given by $\aRestrictionSetWithWeights$ have been incorporated and normalized from the original values of $\aProbabilisticDatabase$.

\[
\aProbabilisticDatabase_\aRestrictionSetWithWeights(\aGraph) 
\eqdef \frac{\aRestrictionSetWithWeights(\aGraph)}{\displaystyle\sum_{\aGraphb \in \aProbabilisticDatabase} \aRestrictionSetWithWeights(\aGraphb)}  
\]

Notice that incorporating $\aRestrictionSetWithWeights$ does not introduce modifications in $\aRealizationModel$, but it does change $\aInverseRealizationModel$ (see~(\ref{eq:Bayes}) for the formal definition of $\aInverseRealizationModel$). 

\subsubsection{Complexity considerations}
Given an EMDG $(\aProbabilisticDatabase, \aRealizationModel)$ and a weighted restriction set $\aRestrictionSetWithWeights$, how much harder are the problems of Data cleaning and PQA over $\aProbabilisticDatabase_\aRestrictionSetWithWeights$?
Observe that, since $\aProbabilisticDatabase$ could contain an infinite number of data-graphs of positive probability, computing $\aProbabilisticDatabase_\aRestrictionSetWithWeights$ explicitly might not be possible. 

\begin{example}
Consider a simple case where $\aRestrictionSet$ consists of a single node or a single path expression, this is, $\aRestrictionSet = \{\aFormulaNodeOrPath\}$. Furthermore, assume that we know the total probability mass $p_\aFormulaNodeOrPath$ of data-graphs in $\aProbabilisticDatabase$ where $\aFormulaNodeOrPath$ holds. Then if follows from a simple calculation that: 
$$\aProbabilisticDatabase_\aRestrictionSetWithWeights(\aGraph) = \frac{\aProbabilisticDatabase(\aGraph) \times \aRestrictionSetEvaluated(\aGraph, \aFormulaNodeOrPath)}{p_\aFormulaNodeOrPath \times \aRestrictionSetWithWeights(\aFormulaNodeOrPath) + (1 - p_\aFormulaNodeOrPath) }$$

Indeed, this is a consequence of the definition of $\aProbabilisticDatabase_\aRestrictionSetWithWeights(\aGraph)$ and rewriting the denominator as follows:

\begin{align*} %Demasiado extendido. Muchos pasos se pueden omitir
\sum_{\aGraphb \in \aProbabilisticDatabase} \aRestrictionSetWithWeights(\aGraphb) &= \sum_{\aGraphb \in \aProbabilisticDatabase\, \mid\, \aGraphb \models \aFormulaNodeOrPath} \aRestrictionSetWithWeights(\aGraphb) + \sum_{\aGraphb \in \aProbabilisticDatabase\, \mid\, \aGraphb \not\models \aFormulaNodeOrPath} \aRestrictionSetWithWeights(\aGraphb)\\
& = \sum_{\aGraphb \in \aProbabilisticDatabase\, \mid\, \aGraphb \models \aFormulaNodeOrPath} (\aProbabilisticDatabase(\aGraphb) \times \aRestrictionSetEvaluated(\aGraphb, \aFormulaNodeOrPath)) + \sum_{\aGraphb \in \aProbabilisticDatabase\, \mid\, \aGraphb \not\models \aFormulaNodeOrPath} (\aProbabilisticDatabase(\aGraphb) \times\aRestrictionSetEvaluated(\aGraphb, \aFormulaNodeOrPath)) \\
& = \sum_{\aGraphb \in \aProbabilisticDatabase\, \mid\, \aGraphb \models \aFormulaNodeOrPath} (\aProbabilisticDatabase(\aGraphb) \times \aRestrictionSetWithWeights(\aFormulaNodeOrPath)) + \sum_{\aGraphb \in \aProbabilisticDatabase\, \mid\, \aGraphb \not\models \aFormulaNodeOrPath} (\aProbabilisticDatabase(\aGraphb) \times 1) \\
& =  \aRestrictionSetWithWeights(\aFormulaNodeOrPath) \times \sum_{\aGraphb \in \aProbabilisticDatabase\, \mid\, \aGraphb \models \aFormulaNodeOrPath} \aProbabilisticDatabase(\aGraphb)+ \sum_{\aGraphb \in \aProbabilisticDatabase\, \mid\, \aGraphb \not\models \aFormulaNodeOrPath} \aProbabilisticDatabase(\aGraphb) \\
& =  \aRestrictionSetWithWeights(\aFormulaNodeOrPath) \times p_\aFormulaNodeOrPath + (1-p_\aFormulaNodeOrPath)
\end{align*}
\end{example}
Notice that the same idea applies for more complex sets of constraints, as long as we know the probability masses in $\aProbabilisticDatabase$ for each combination of Boolean values of the formulas in $\aRestrictionSet$ (e.g., if $\aRestrictionSet = \{\aNodeExpression, \aNodeExpressionb\}$, we ask for $p_{\aNodeExpression, \aNodeExpressionb}, p_{\lnot\aNodeExpression, \aNodeExpressionb}, p_{\aNodeExpression, \lnot\aNodeExpressionb}, p_{\lnot\aNodeExpression, \lnot\aNodeExpressionb}$, respectively meaning the total probability of data-graphs satisfying the two formulas in the sub-index).

%\widesanti{Muy copada esta sección, y me parece muy piola la definición de $(\aProbabilisticDatabase_{w^C}, \aRealizationModel)$.}

%%%%%%%%%%%%%%%%%%%%%%%%%%%%%%%%%%%%%%%%%%%%%%%%%%%%%%%%%%%%%

\subsection{Expression-based Constraints in Node-Update PUDGs}\label{Section:ExpressionBasedConstraints}

We now explore some particular instances of the Data cleaning problem in Node-Update PUDGs. As already mentioned, we focus on this kind of PUDGs given that data values are one of the main features of our data-graph model.

An alternative way to define a system of priorities with the distribution $\aProbabilisticDatabase$ is by using a set of \textit{path expressions} or \textit{integrity constraints} that can be evaluated on a data-graph. For example, we consider an expression $\aFormulaNodeOrPath$ from a language such as $\Gregxpath$ (see Section~\ref{Section:Definitions} for the basic definitions) and define $\aProbabilisticDatabase_\aFormulaNodeOrPath$ in some way such that $\aProbabilisticDatabase_\aFormulaNodeOrPath(\aGraph) < \aProbabilisticDatabase_\aFormulaNodeOrPath(\aGraph')$ whenever $\aGraph \not\models \aFormulaNodeOrPath$ and $\aGraph' \models \aFormulaNodeOrPath$. Then, we prioritize those data-graphs that satisfy some topological structure that we are expecting to see in the clean database, which is expressible through some language $\mathcal{L}$.

We may also `evaluate' the expression $\aFormulaNodeOrPath$ only in a specific node (sometimes called the origin $o$) or in a pair of nodes, as it was first proposed in~\cite{abiteboul1999regular} regarding path constraints. This restriction might turn reasoning problems easier (see for example~\cite{barcelo2017data}). From now on, we denote the universe of data-graphs that have a distinguished node from where node expressions can be evaluated as $U_o$, and given $G \in U_o$ we denote this distinguished node as $o_\aGraph$. 

Nonetheless, we point out some observations regarding the relationship between the semantics when evaluating an expression in all nodes and when only considering an origin $o$.

\begin{observation}\label{observation:equivalence}
Relation between global and origin semantics.

\begin{enumerate}
    \item There exists a function $f:U \times \Gposregxpath \to U_o' \times \Gposregxpath$ computable in polynomial time such that for any data-graph $\aGraph$ and $\Gposregxpath$ node expression $\aNodeExpression$ is true that $\aGraph \models \aNodeExpression$ if and only if $\aGraph', o_{\aGraph'} \models \aNodeExpression'$, where $f(\aGraph, \aNodeExpression) = (\aGraph', \aNodeExpression')$.
    
    \item There exists a function $g:U_o' \times \Gposregxpath \to U \times \Gposregxpath$ computable in polynomial time such that for any $\aGraph \in U_o$ and $\Gposregxpath$ expression $\aNodeExpression'$ is true that $\aGraph', o_{\aGraph'} \models \aNodeExpression'$ if and only if $\aGraph \models \aNodeExpression$, where $g(\aGraph', \aNodeExpression') = (\aGraph, \aNodeExpression)$
\end{enumerate}

\end{observation}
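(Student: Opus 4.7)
The plan is to prove both parts by explicit polynomial-time constructions, exploiting a fundamental asymmetry: origin satisfaction is an existential single-point condition that is naturally expressible in $\Gposregxpath$, whereas global satisfaction $\aGraph \models \aNodeExpression$ is a universal condition across all nodes which, by the monotonicity of $\Gposregxpath$ noted in Section~\ref{Section:Definitions}, cannot be expressed as a single-node positive formula. Part (2) therefore admits a direct structural reduction, while part (1) will be handled by exploiting the polynomial-time evaluability of $\Gposregxpath$.

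For part (2), given $(\aGraph', \aNodeExpression')$, let $\aGraph$ be obtained from $\aGraph'$ by adding a fresh edge label $\esLabel{r} \notin \Sigma_e$ and, for every $v \in V_{\aGraph'}$, the edge $(v, \esLabel{r}, o_{\aGraph'})$; the node set and data function are left unchanged. Let $\widehat{\aNodeExpression'}$ denote the polynomial-size rewrite of $\aNodeExpression'$ in which every occurrence of $\labelComodin$ is replaced by $\bigcup_{\esLabel{a} \in \Sigma_e} \esLabel{a}$, which preserves membership in $\Gposregxpath$ and has the same meaning over $\aGraph'$ since $\Sigma_e$ is finite. Finally, set
\[
\aNodeExpression \;:=\; \comparacionCaminos{\,\esLabel{r} \cdot \expNodoEnCamino{\widehat{\aNodeExpression'}}\,}.
\]
I would then check that (a) since $\esLabel{r}$ is fresh and $\labelComodin$ has been eliminated, $\semantics{\widehat{\aNodeExpression'}}_\aGraph = \semantics{\aNodeExpression'}_{\aGraph'}$; and (b) the only $\esLabel{r}$-successor of any $v \in V_\aGraph$ is $o_{\aGraph'}$, so $v \in \semantics{\aNodeExpression}_\aGraph$ iff $o_{\aGraph'} \in \semantics{\widehat{\aNodeExpression'}}_\aGraph$. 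Combining these, $v \in \semantics{\aNodeExpression}_\aGraph$ for every $v \in V_\aGraph$ iff $o_{\aGraph'} \in \semantics{\aNodeExpression'}_{\aGraph'}$, which is the desired equivalence $\aGraph \models \aNodeExpression \Leftrightarrow \aGraph', o_{\aGraph'} \models \aNodeExpression'$.

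For part (1), a structural encoding of ``every node satisfies $\aNodeExpression$'' at a single origin is not available in $\Gposregxpath$, as it would require a universal quantifier that lies outside the positive fragment. Instead, I would use that $\Gposregxpath$ node expressions are evaluable in combined polynomial time, so the predicate $V_\aGraph \subseteq \semantics{\aNodeExpression}_\aGraph$ is polynomial-time decidable. Define $f(\aGraph, \aNodeExpression)$ to compute this bit and then emit a canonical pair in $U_o \times \Gposregxpath$: a fixed one-node graph $\aGraph_0$ with origin $o$ together with $\aNodeExpression' := \comparacionCaminos{\epsilon}$ (which holds at $o$) when $\aGraph \models \aNodeExpression$, and paired with $\aNodeExpression' := \esDatoIgual{\esDato{c_1}} \wedge \esDatoIgual{\esDato{c_2}}$ for two fixed distinct $\esDato{c_1}, \esDato{c_2} \in \Sigma_n$ (which fails at every node) otherwise. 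Both candidate expressions lie in $\Gposregxpath$, and the equivalence $\aGraph \models \aNodeExpression \Leftrightarrow \aGraph_0, o \models \aNodeExpression'$ holds by construction.

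The main obstacle is conceptual rather than technical: the monotonicity of $\Gposregxpath$ forces the asymmetric treatment of the two directions. Part (2) admits a faithful structural reduction that preserves the graph and rewrites the formula locally, while any direct structural attempt at part (1) would require encoding a universal quantifier at the origin, which is not available without $\pathComplement$ or $\lnot$; the pre-evaluation trick is precisely what allows the reduction for part (1) to stay inside $\Gposregxpath$ while running in polynomial time.
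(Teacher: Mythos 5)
Your part (2) is correct and is essentially the paper's construction with a different gadget: the paper attaches a $\esLabel{loop}$ self-edge to every non-origin node and uses the disjunction $\comparacionCaminos{\down_{\esLabel{loop}}} \vee f(\aNodeExpression')$, while you route a fresh edge $\esLabel{r}$ from every node to the origin and use $\comparacionCaminos{\down_{\esLabel{r}}\expNodoEnCamino{\widehat{\aNodeExpression'}}}$; both are faithful structural reductions and both handle the wildcard the same way.

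Part (1) is where the problem lies, and it is a conceptual one: your claim that ``a structural encoding of `every node satisfies $\aNodeExpression$' at a single origin is not available in $\Gposregxpath$'' is false, and the paper's proof is precisely such an encoding. Monotonicity only constrains what a fixed formula can say across graph extensions; it does not forbid a reduction that \emph{modifies the graph}. The paper orders the nodes $p_1,\ldots,p_n$, adds a fresh node $p_0$ carrying a self-edge with fresh label $\esLabel{loop}$, threads a Hamiltonian cycle $p_0 \to p_1 \to \cdots \to p_n \to p_0$ on a second fresh label $\epsilon$, and evaluates $\comparacionCaminos{(\down_{\epsilon}\expNodoEnCamino{f(\aNodeExpression)})^{*}\down_{\epsilon}\down_{\esLabel{loop}}}$ at $p_0$: the only way to reach the $\esLabel{loop}$ edge is to traverse the entire cycle, which forces the nested test $f(\aNodeExpression)$ to hold at every original node. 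This uses only positive operators (the Kleene star over a compound path expression is available in $\Gposregxpath$; it is only the $\Gcorexpath$-style fragments that restrict $*$ to atomic labels). Your substitute --- pre-evaluating $\aGraph \models \aNodeExpression$ and emitting a canonical trivial instance --- does satisfy the literal biconditional in the statement, but it discards the graph and makes the output formula depend on the truth value rather than on $\aNodeExpression$ alone. That breaks the downstream use of the observation: the paper immediately relies on the fact that the formula produced in part (1) has \emph{fixed size when $\aNodeExpression$ is fixed} and that $\aGraph'$ is a polynomial-size extension of $\aGraph$ preserving the space of data-value repairs, in order to transfer the NP-hardness of \decisionProblem{Isomorphic-repair} from global to origin semantics. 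Your reduction cannot support that corollary, so the missing idea is exactly the Hamiltonian-cycle-plus-Kleene-star gadget.
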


\begin{proof}
For (1), we can construct $\aGraph'$ in the following way: consider an arbitrary order $p_1,\ldots,p_n$ of the nodes of $\aGraph$ and a new node $p_0$ with any data value. Now, add $p_0$ to $\aGraph$ with a loop on a fresh edge label $loop$ and define a Hamiltonian cycle over $\aGraph$ by using another new edge label $\epsilon$ and adding edges $(p_i, \epsilon, p_{i+1})$ for every $0\leq i \leq n$ where $p_{n+1} = p_0$. Observe that $\aGraph \models \aFormulaNodeOrPath$ if and only if $\aGraph',p_0 \models \comparacionCaminos{(\down_{\epsilon} [f(\aFormulaNodeOrPath)])^* \down_\epsilon \down_{loop}}$, where $f(\aNodeExpression)$ denotes the expression $\aNodeExpression$ after replacing any $\_$ subformula by $\bigcup\limits_{l: \Sigma_e} \down_l$.\footnote{This is necessary to prevent the formula $\aNodeExpression$ from using the edges $\down_e$ in $\aGraph'$.}

For (2), we build $\aGraph$ by adding to every node of $\aGraph'$ different from $o$ a loop with edge label $loop$. It follows that $\aGraph', o \models \aNodeExpression'$ if and only if $\aGraph \models \down_{loop} \cup [f(\aNodeExpression')]$ where $f$ is the same function as in (1) that replaces the use of the wildcard $\_$.\footnote{In both directions, the sets $\Sigma_n$ and $\Sigma_e$ might need to be extended in order to find a fresh data value or edge label}
\end{proof}

\begin{remark}
When considering path expressions instead of node expressions, there is a result analogous to Observation~\ref{observation:equivalence} that relates global semantics with bi-pointed semantics. 
Indeed, it is easy to see that $\aGraph, \aNode, \aNodeb \models \aPathExpression$ if and only if $\aGraph' \models \aPathExpression'$, where $\aGraph'$ is constructed by adding fresh edge labels $\esLabel{i}, \esLabel{f}$ (expanding $\Sigma_e$), and adding an edge $\esLabel{i}$ from all the nodes to $\aNode$, and an edge $\esLabel{f}$ from $\aNodeb$ to all the nodes. Then, we consider $\aPathExpression' = \downarrow_\esLabel{i} \aPathExpression \downarrow_{\esLabel{f}}$. % Si no se entiende se puede poner el punto de concatenación, o aclarar que es un abuso de notación común.
%= \expNodoEnCamino{\comparacionCaminos{\downarrow_{\esLabel{n}}} \esDatoIgual{\esDato{i}}}\aPathExpression \expNodoEnCamino{\comparacionCaminos{\downarrow_{\esLabel{n}}} \esDatoIgual{\esDato{f}}} $.

For the other direction, we have that $\aGraph \models \aPathExpression$ if and only if $\aGraph', \aNode, \aNodeb \models \aPathExpression'$, where $\aGraph$ is modified into $\aGraph'$ by adding two fresh nodes $\aNode$ and $\aNodeb$, adding edges with a fresh label $\esLabel{i}$ from $\aNode$ towards all the nodes in $\aGraph$, and an edge with fresh label $\esLabel{f}$ from all the nodes in $\aGraph$ to $\aNodeb$. Then, we consider 
$\aPathExpression' = \complementoCamino{\downarrow_{\esLabel{i}} \complementoCamino{\aPathExpression} \downarrow_\esLabel{f}}$.

Notice, nonetheless, that for these results we had to use negation to build the new expression. Then, if we want to remain in the positive fragment, we need to consider only node expressions.

%\widesergio{7/12 Mencionar que estamos viendo semántica global, pero se puede traducir fácilmente a semántica centrada en un punto y viceversa.  ${\uparrow{\esLabel{i}}}\downarrow_{\esLabel{i}}\complementoCamino{(\alpha \downarrow_{\esLabel{f}})}$}  
\end{remark}

Observation~\ref{observation:equivalence} implies that we may interchange the semantics considered for our reasoning problems, as long as the new node expression belongs to the same set of expressions $\mathcal{L}$ and we consider Node-Update PUDGs. Notice that, to go from the global case to the origin one --when considering node expressions-- we had to use the Kleene operator over a path expression involving nested sub-expressions and a data test. Any language that is useful for defining constraints in data-graphs must have data-tests, and therefore a reasonable set of expressions to study over the origin semantics whose results do not follow naturally from the global scenario could be the subset of \Gposregxpath formulas that only use the Kleene star in restricted cases. In fact, we will show that considering a restricted version of this operator allows for polynomial algorithms to solve the \decisionProblem{Data cleaning} problem in the context of Node-Update PUDGs.

Naturally, the difficulty of finding the clean database $H$ that maximizes $\aInverseRealizationModel_{\aProbabilisticDatabase, \aRealizationModel}(H, \aGraph)$ will depend on the complexity of the language $\mathcal{L}$ used to define the expressions. For example, if $\Gposregxpath \subseteq \mathcal{L}$ then the problem is intractable:

\begin{theorem}\label{theorem:expressionIntractable}

There are fixed finite sets $\Sigma_n$ and $\Sigma_e$, and a fixed $\Gposregxpath$ formula $\aFormulaNodeOrPath$ such that given a data-graph $\aGraph$, the problem of deciding if there exists a data-graph $H$ isomorphic to $\aGraph$ up to its data values such that $H \models \mu$ is \textsc{NP-complete}.
\end{theorem}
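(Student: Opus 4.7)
The plan is to establish NP membership and NP-hardness separately. Since $\Sigma_n$ is fixed and finite and the formula $\aFormulaNodeOrPath$ is also fixed, any candidate $H$ (which differs from $\aGraph$ only in the data function) is described by an assignment $V_\aGraph \to \Sigma_n$ of polynomial size; checking $H \models \aFormulaNodeOrPath$ is standard model checking for $\Gregxpath$, which runs in polynomial time when the formula is fixed. This places the problem in \textsc{NP}.

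For hardness I would reduce from $3$-SAT, fixing $\Sigma_n := \{T, F\}$ and $\Sigma_e := \{\esLabel{var}, \esLabel{pos}, \esLabel{neg}\}$. Given a 3-CNF $\phi$ with variables $x_1, \ldots, x_n$ and clauses $C_1, \ldots, C_m$, I would build $\aGraph_\phi$ with one node $v_i$ per variable, one node $c_j$ per clause, and an auxiliary node $d$, adding the edges $(v_i, \esLabel{var}, d)$ for every $i$, the self-loop $(d, \esLabel{var}, d)$, an edge $(c_j, \esLabel{pos}, v_i)$ whenever $x_i$ appears positively in $C_j$, and an edge $(c_j, \esLabel{neg}, v_i)$ whenever $\neg x_i$ appears in $C_j$; the initial data values are irrelevant. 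The fixed formula would be the $\Gposregxpath$ node expression
\[
\aFormulaNodeOrPath \;=\; \comparacionCaminos{\down_{\esLabel{var}}} \;\lor\; \comparacionCaminos{\down_{\esLabel{pos}} [\esDatoIgual{T}]} \;\lor\; \comparacionCaminos{\down_{\esLabel{neg}} [\esDatoIgual{F}]},
\]
which uses only disjunction, concatenation, node tests and the existential projection operator, so it is indeed positive.

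To finish, I would verify both directions of the equivalence. Since any admissible $H$ keeps the edge structure of $\aGraph_\phi$, each variable node retains its outgoing $\esLabel{var}$-edge and $d$ retains its self-loop, so both satisfy the first disjunct independently of $D_H$. Each clause node $c_j$ has no outgoing $\esLabel{var}$-edge by construction, so $c_j \in \semantics{\aFormulaNodeOrPath}_H$ reduces exactly to ``some $\esLabel{pos}$-successor of $c_j$ has data $T$ or some $\esLabel{neg}$-successor has data $F$'', i.e.\ to the satisfaction of $C_j$ under the assignment $\alpha(x_i) := D_H(v_i)$. Hence $H \models \aFormulaNodeOrPath$ for some admissible $H$ iff $\phi$ is satisfiable. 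The main obstacle is that $\Gposregxpath$ lacks negation, so one cannot directly write ``if $v$ is a clause node then $v$ is satisfied''; the role of the $\esLabel{var}$-edges and the self-loop on $d$ is precisely to simulate this implication by making the first disjunct fire for free on exactly those nodes that should be exempt from the clause-satisfaction constraint.
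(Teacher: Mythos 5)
Your proof is correct and follows essentially the same route as the paper: NP membership via a polynomial-size witness, and NP-hardness via a reduction from 3-SAT that encodes the clause structure in the edges, reads the truth assignment off the data values of the variable nodes, and uses a disjunctive $\Gposregxpath$ node expression in which non-clause nodes escape via a structural disjunct while clause nodes are forced into the clause-satisfaction disjuncts. The only (cosmetic) difference is your gadget for exempting non-clause nodes — an auxiliary node reached by a $\esLabel{var}$-edge and a two-valued $\Sigma_n$, versus the paper's self-loops, a $[\esDatoDistinto{clause}]$ test, and a second conjunct pinning the clause nodes' data values.
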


\begin{proof}
The problem is in $\textsc{NP}$: we can guess a data-graph $H$ and check that $H \models \aFormulaNodeOrPath$ and $\aGraph \equiv H$. Now, we reduce our problem to \textsc{3SAT} to prove \textit{hardness}.\footnote{Here we use the assumption that data values are $O(1)$ to find a polynomial-sized witness. Nevertheless, this theorem remains true in a more general scenario where we allow the size of data values to grow logarithmically because, if there is a repair of $\aGraph$ that works as a witness, then there is another that uses the data values in $\Sigma_n^\aFormulaNodeOrPath$ and at most $|V_\aGraph|$ more, where these other data values not in $\Sigma_n^\aFormulaNodeOrPath$ can be chosen arbitrarily (see Lemma~\ref{lemma:boundedNodes}).}

Given a 3SAT boolean formula $\phi$ of $n$ variables $x_1\ldots x_n$ and $m$ clauses $c_1 \ldots c_m$ we will build a data-graph $\aGraph$ and a $\Gposregxpath$ node expression $\aNodeExpression$ such that $\phi$ is satisfiable if and only if there exists a data-graph $\aGraph'$ such that $\aGraph' \equiv \aGraph$ and $\aGraph' \models \aNodeExpression$.

Consider $\Sigma_n = \{\esDato{var}, \esDato{clause}, \bot, \top \}$ and $\Sigma_e = \{ \down_-,\down_+,\down_{\esLabel{notClause}} \}$ and define $\aGraph=(V,L,D)$ as:

\begin{align*}
& V  = \{x_i:1\leq i \leq n\} \cup \{c_j : 1\leq j \leq m\} \\
& L(x_i,c_j)  = \{\down_+\} \iff x_i \mbox{ is a literal from } c_j \\
& L(x_i,c_j)  = \{\down_-\} \iff \lnot x_i \mbox{ is a literal from } c_j \\
& L(x_i,x_i)  = \{\down_{\esLabel{notClause}}\} \\
& L(z_1,z_2)  = \emptyset \mbox{ for any other case} \\
& D(c_i)  = \esDato{clause} \\
& D(x_i)  = \esDato{var} 
\end{align*}

This data-graph encodes the formula $\phi$ through the edges $\down_+$ and $\down_-$. Observe that, given a valuation of the variables of $\phi$, we can set the data values of the node variables of $\aGraph$ accordingly to $\bot$ or $\top$ to represent that valuation. Moreover, we can build a formula that only captures those data-graphs representing an assignment that satisfies $\phi$ with the node expression 
$$
\aNodeExpression_1 = \comparacionCaminos{\down_+^-[\top]} \vee \comparacionCaminos{\down_-^-[\bot]} \vee [\esDatoDistinto{clause}].
$$

This formula forces every node with data value \esDato{clause} to have a path through edges $\down_{-}^-$ and $\down_{+}^-$ to a value $\bot$ or $\top$, respectively.

We use the $\esLabel{notClause}$ edges to forbid any isomorphic data-graph $H$ from changing the data value of the clause nodes in the following way:
$$
\aNodeExpression_2 = \comparacionCaminos{\down_{\esLabel{notClause}}} \vee [\esDatoIgual{clause}]
$$

That is, we only want to consider data-graphs in which the data values that change are those from the variable nodes, which should remain the same, or change to either $\top$ or $\bot$.

Finally, we set $\aNodeExpression = \aNodeExpression_1 \wedge \aNodeExpression_2$. It follows that $\phi$ is satisfiable if and only if there exists a data-graph $\aGraph'$ such that $\aGraph' \equiv \aGraph$ and $\aGraph \models \aNodeExpression$. Observe that given $\phi$ we can build $\aGraph$ in linear time, and therefore we reduced \textsc{3SAT} to our problem.
\end{proof}

Given a data-graph $\aGraph$, the problem of deciding the existence of an isomorphism $\aGraph'$ of $\aGraph$ that satisfies a certain formula $\aFormulaNodeOrPath$ may be interpreted as a particular relaxation of the data cleaning problem, in which the only data-graphs with positive weights are those that satisfy $\aFormulaNodeOrPath$ and $\aRealizationModel(\aGraph')(\aGraph) = 1$ for every data-graph $H$. Moreover, these data-graphs also have the same weight, and therefore are ``equally likely'' to be the clean data-graph. To understand this ``lower bound''' to the data cleaning problem in the context of expression-based constraints we study the following problem:

\begin{center}
\fbox{\begin{minipage}{25em}
  \textsc{Problem}: \decisionProblem{Isomorphic-repair}

\textsc{Input}: A data-graph $\aGraph$ and a formula $\aFormulaNodeOrPath$ from some set of expressions $\mathcal{L}$

\textsc{Output}: Decide whether there exists a data-graph $H$ such that $H \equiv \aGraph$ and $H \models \aFormulaNodeOrPath$.
\end{minipage}}
\end{center}

This problem is similar to the \textit{repair computing} problem, deeply studied in the database theory community (see~\cite{Bertossi_2011_book}).%~\cite{bertossi2011database}).

We already showed that the problem is intractable if $\Gposregxpath \subseteq \mathcal{L}$. Furthermore, in Theorem~\ref{theorem:expressionIntractable} we considered the node expression \textit{fixed}. This is a common simplification, usually referred to as the \textit{data complexity} of the problem \cite{Vardi82thecomplexity}. Also, notice that in Observation~\ref{observation:equivalence} the formula constructed in the first remark has fixed size if $\aFormulaNodeOrPath$ is fixed. Therefore, we conclude that: 

\begin{corollary}

The problem \decisionProblem{Isomorphic repair} is \textsc{NP-complete}, even if the node expression $\aFormulaNodeOrPath \in \Gposregxpath$ is fixed and the expression is evaluated only in an origin $o$ from $V_\aGraph$.

\end{corollary}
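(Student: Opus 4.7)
The plan is to obtain this corollary by combining Theorem~\ref{theorem:expressionIntractable} with part (1) of Observation~\ref{observation:equivalence}, without needing any new construction. Membership in \textsc{NP} is straightforward: given $(\aGraph, \aFormulaNodeOrPath, o)$, nondeterministically guess a data-relabeling $H$ of $\aGraph$ (by Lemma~\ref{lemma:boundedNodes} it suffices to consider data values in $\Sigma_n^\aFormulaNodeOrPath$ together with at most $|V_\aGraph|$ fresh values, so the witness is polynomial), then verify $H, o \models \aFormulaNodeOrPath$ using the standard polynomial-time evaluation for \Gposregxpath. This gives the upper bound.

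For hardness I would reduce from the problem shown \textsc{NP}-hard in Theorem~\ref{theorem:expressionIntractable}, namely deciding, for a \emph{fixed} \Gposregxpath formula $\aFormulaNodeOrPath$ evaluated under global semantics, whether an input data-graph $\aGraph$ admits an isomorphism-repair satisfying $\aFormulaNodeOrPath$. Given such an input $\aGraph$, apply the polynomial-time transformation $f$ from Observation~\ref{observation:equivalence}(1) to obtain $(\aGraph', \aNodeExpression')$ where $\aGraph'$ has a distinguished origin $p_0$ carrying a fresh $\esLabel{loop}$-edge and a Hamiltonian $\esLabel{\epsilon}$-cycle through all nodes, and where $\aNodeExpression' = \comparacionCaminos{(\down_{\esLabel{\epsilon}}[\,\widehat{f}(\aFormulaNodeOrPath)\,])^* \down_{\esLabel{\epsilon}} \down_{\esLabel{loop}}}$ (with $\widehat{f}$ the wildcard-replacement from that observation). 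Crucially, since $\aFormulaNodeOrPath$ and $\Sigma_e$ are fixed, $\aNodeExpression'$ is also a fixed \Gposregxpath formula, so the reduction produces inputs to \decisionProblem{Isomorphic-repair} of the required form.

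The correctness claim to verify is the equivalence: $\aGraph$ admits a data-relabeling $H$ with $H \models \aFormulaNodeOrPath$ if and only if $\aGraph'$ admits a data-relabeling $H'$ with $H', p_0 \models \aNodeExpression'$. The forward direction extends any witness $H$ to $H'$ by assigning an arbitrary data value to $p_0$ and keeping the $\esLabel{loop}$/$\esLabel{\epsilon}$ scaffolding verbatim; since data-isomorphisms only alter the function $D$ and leave $V$ and $L$ fixed, the scaffolding persists, and Observation~\ref{observation:equivalence}(1) yields $H', p_0 \models \aNodeExpression'$. The converse direction drops $p_0$ from any witness $H'$ and reads off the data-values of the original nodes; again the scaffolding is common to every data-isomorphism of $\aGraph'$, so the restriction is an isomorphism of $\aGraph$, and Observation~\ref{observation:equivalence}(1) gives $H \models \aFormulaNodeOrPath$.

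The only mildly delicate point — and the main thing to argue carefully — is this invariance of the added scaffolding across \emph{all} data-relabelings of $\aGraph'$: one must check that the semantics of $\aNodeExpression'$ at $p_0$ depends only on the evaluation of $\widehat{f}(\aFormulaNodeOrPath)$ at the original nodes and not on the (arbitrary) data value assigned to $p_0$. This is immediate from the shape of the formula (the Kleene-starred block consumes one $\down_{\esLabel{\epsilon}}$ before the first nested test, so $p_0$'s data value is never inspected), and because the $\esLabel{loop}$-edge forces the unique terminating position to be $p_0$. With this equivalence in hand, the reduction is polynomial, the formula is fixed, the origin $o = p_0$ is part of the constructed input, and \textsc{NP}-hardness transfers from Theorem~\ref{theorem:expressionIntractable}, completing the proof.
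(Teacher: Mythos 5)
Your proposal is correct and follows essentially the same route as the paper: the paper's own justification is precisely to combine Theorem~\ref{theorem:expressionIntractable} with Observation~\ref{observation:equivalence}(1), noting that the transformed formula remains fixed when $\aFormulaNodeOrPath$ is fixed. You merely make explicit the (correct) verification that the scaffolding added by $f$ is invariant under data-relabelings, so witnesses transfer in both directions — a detail the paper leaves implicit.
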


As previously remarked, one way to weaken the language $\mathcal{L}$ in order to avoid concluding intractability as a consequence of Theorem~\ref{theorem:expressionIntractable} and Observation~\ref{observation:equivalence}, would be to limit the cases in which we allow the Kleene star. If we completely remove it, then the problem remains hard, but only under very particular assumptions:

\begin{theorem}
The \decisionProblem{Isomorphic-repair} is \textsc{NP-complete}, even if we only evaluate the expression at an origin $o$, as long as we allow:

\begin{itemize}
    \item The set of data values to be infinite (i.e. $|\Sigma_n| > \infty$)
    
    \item The expression not to be fixed, and the use of the $\comparacionCaminos{\cdot}$ operator and equality data tests ($\esDatoIgual{\esDato{c}})$.
\end{itemize}
\end{theorem}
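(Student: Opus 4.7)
The plan is to prove membership in \textsc{NP} by a standard witness-and-check argument, and hardness by a reduction from \textsc{3SAT} that exploits the infinite data alphabet to give each clause a unique ``identifier''. For the upper bound, I rely on the observation already invoked in the proof of Theorem~\ref{theorem:expressionIntractable} (cf.\ Lemma~\ref{lemma:boundedNodes}): although $\Sigma_n$ is infinite, one may always look for a witness $H \equiv \aGraph$ whose data values are drawn from $\Sigma_n^\aFormulaNodeOrPath$ together with at most $|V_\aGraph|$ further pairwise fresh labels, so $H$ has polynomial size and $H,o \models \aFormulaNodeOrPath$ can be checked in polynomial time.

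For hardness, given a 3-CNF formula $\phi$ with variables $x_1,\ldots,x_n$ and clauses $c_1,\ldots,c_m$, I would construct a data-graph $\aGraph$ over $\Sigma_e = \{\esLabel{C},\esLabel{P},\esLabel{N}\}$ containing an origin $o$, one ``clause'' node $C_j$ per clause and one ``variable'' node $V_i$ per variable, with an edge $o \to C_j$ labeled $\esLabel{C}$ for every $j$, and an edge $C_j \to V_i$ labeled $\esLabel{P}$ (resp.\ $\esLabel{N}$) whenever $x_i$ occurs positively (resp.\ negatively) in $c_j$; the data values in $\aGraph$ itself are irrelevant, since only those in $H$ matter. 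Picking $m+2$ pairwise distinct fresh constants $\esDato{T},\esDato{F},\esDato{cl_1},\ldots,\esDato{cl_m}$ from $\Sigma_n$, I would define the node expression
\[
\aNodeExpression \ \eqdef\ \bigwedge_{j=1}^{m} \comparacionCaminos{\, \down_\esLabel{C}\, [\, \esDatoIgual{\esDato{cl_j}} \,\wedge\, (\, \comparacionCaminos{\down_\esLabel{P}\,[\esDatoIgual{\esDato{T}}]} \,\vee\, \comparacionCaminos{\down_\esLabel{N}\,[\esDatoIgual{\esDato{F}}]} \,)\, ]\,}\,,
\]
which uses only edge labels, path concatenation, the node filter $[\cdot]$, the Boolean connectives $\wedge,\vee$, the operator $\comparacionCaminos{\cdot}$ and equality data tests; in particular no Kleene star, no negation and no path (in)equality.

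The correctness argument hinges on the following: each conjunct forces a $\down_\esLabel{C}$-successor of $o$ carrying the specific constant $\esDato{cl_j}$, and because the $m$ required constants are pairwise distinct, any witness $H$ is forced to assign $\esDato{cl_1},\ldots,\esDato{cl_m}$ bijectively to the $m$ clause nodes $C_1,\ldots,C_m$. Hence every clause node must independently satisfy the inner disjunction, which by the choice of $\esLabel{P}/\esLabel{N}$ labels is equivalent to the corresponding clause of $\phi$ being made true by the assignment $\alpha(x_i)\eqdef \mathrm{T}$ iff $D_H(V_i)=\esDato{T}$ (variable nodes receiving data values other than $\esDato{T}$ or $\esDato{F}$ simply fail to ``satisfy'' any literal, but do not cause harm either). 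Conversely, any satisfying assignment of $\phi$ directly yields such an $H$. The main obstacle I anticipate is precisely the absence of negation and of the path-inequality operator $\aPath \neq \aPathb$: in the purely positive fragment there is no direct way to express the universal statement ``every clause is satisfied'', and the key trick is to smuggle this universal quantification in through the infinite alphabet, using the $m$ distinct constants as forced ``names'' of the clauses so that a conjunction of existential tests effectively pigeonholes the clause nodes.
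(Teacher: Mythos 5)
Your proof is correct, but it takes a genuinely different route from the paper's. The paper reduces from \textsc{Hamiltonian Path}: it labels every node of the input graph with the same data value and uses the single star-free path test $\comparacionCaminos{[\esDatoIgual{1}] \down [\esDatoIgual{2}] \cdots \down [\esDatoIgual{n}]}$, so that relabelling the nodes with the forced pairwise-distinct constants $1,\ldots,n$ is possible exactly when a Hamiltonian path from the origin exists. You reduce from \textsc{3SAT} instead, using $m$ distinct clause-identifier constants to pigeonhole the $m$ clause nodes and thereby smuggle the universal statement ``every clause is satisfied'' into a conjunction of existential tests; your correctness argument is sound (the induced bijection between identifiers and clause nodes need not be the identity, but since it is a permutation, every clause node is still forced to satisfy the inner disjunction, which is all that is needed). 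Both constructions exploit the same underlying resource --- an unbounded supply of pairwise-distinct constants hard-coded into a non-fixed expression --- but they buy slightly different refinements: the paper's witness expression is a single $\comparacionCaminos{\cdot}$ over one long concatenation with no Boolean connectives at all, so it pins the hardness on path length alone; yours uses $\wedge$ and $\vee$ (still within the positive, star-free fragment, so the theorem as stated is proved) but keeps every path subexpression of constant length, showing that hardness already arises at nesting depth two with bounded-length paths. Your treatment of \textsc{NP} membership via the bounded-fresh-values observation matches the paper's appeal to Theorem~\ref{theorem:expressionIntractable} and Lemma~\ref{lema:constraints1}.
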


\begin{proof}
Membership in \textsc{NP} follows from  Theorem~\ref{theorem:expressionIntractable}. We prove the hardness by reducing \decisionProblem{HAMILTONIAN PATH} from a distinguished vertex $v$ to our particular case of \decisionProblem{Isomorphic repair}.

Let $\Sigma_n = \N_0$ and $\Sigma_e = \{\esLabel{down}\}$. Given a directed graph $\aGraph = (V,E)$ we construct the data-graph $\aGraph' = (V',L',D')$ with origin $v$ as follows:
\begin{align*}
& V' = V \\
& L(z, w) = \{\esLabel{down}\} \iff (z, w) \in E \text{    }\forall z,w \in V \\
& D(z) = 0 \text{    }\forall z \in V.
\end{align*}

We define the node expression:
$$
\aNodeExpression = \comparacionCaminos{[\esDatoIgual{1}] \down_\esLabel{down} [\esDatoIgual{2}] \down_\esLabel{down} [\esDatoIgual{3}]
\ldots
\down_\esLabel{down} [\esDatoIgual{n}]}
$$

It is easy to see that we can find a new data-graph $H$ equal to $\aGraph$ up to its data values such that $H, v \models \aNodeExpression$ if and only if $\aGraph$ has a Hamiltonian path starting at $v$.
\end{proof}

The formula used for this proof is extremely simple, however it does depend on the input graph and it strongly relies on $\Sigma_n$ being infinite. 

Now, let us consider a weaker family of expressions, namely the ones defined by the grammar:

\begin{center}
    $\aPath, \aPathb$ = $\epsilon$ $|$ $\aLabel$ $|$ $\aLabel^{-}$ $|$ $\expNodoEnCamino{\aFormula}$ $|$ $\aPath$ . $\aPathb$ $|$ $\aPath \pathUnion \aPathb$ $|$ $\aPath \pathIntersection \aPathb$ $|$ $\aLabel^{*}$ $|$ $\aLabel^{-*}$ $|$ $\aPath^{n,m}$
\end{center}  

\begin{center}
    $\aFormula, \aFormulab$ = $\aFormula \wedge \aFormulab$ $|$ $\comparacionCaminos{\aPath}$ $|$ $\esDatoIgual{\aData}$ $|$ $\esDatoDistinto{\aData}$ $|$ $\comparacionCaminos{\aPath = \aPathb}$ $|$ $\comparacionCaminos{\aPath \neq \aPathb}$ $|$
    $\aFormula \vee \aFormulab$
\end{center}

This is a subset of \Gregxpath called \Gposcoreregxpath~\cite{libkin2016querying}.

Observe that we removed the operation that allowed us to use the Kleene star over any path expression. Now, we can only use it on atomic edge labels, or inverse of edge labels. In particular, this implies that we cannot ask for the transitive closure of path expressions that interact with data.

We make the following observation:

\begin{observation}\label{observation:deleteKleene}
 There is a function $f:U \times \Gposcoreregxpath \to U \times \Gposregxpath$ computable in polynomial time such that $\semantics{\aGraph}_\aFormulaNodeOrPath = \semantics{\aGraph'}_{\aFormulaNodeOrPath'}$, $\aGraph'$ has polynomial size on $\aGraph$, $\aFormulaNodeOrPath'$ has polynomial size on $\aFormulaNodeOrPath$ and $\aFormulaNodeOrPath'$ does not use the $*$ operator.
\end{observation}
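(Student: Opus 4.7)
The plan is to eliminate every occurrence of the Kleene star by precomputing its meaning on $\aGraph$ and encoding the result via fresh atomic edge labels, exploiting the fact that in \Gposcoreregxpath the $*$ operator can only be applied to a single edge label $\aLabel$ or its inverse $\aLabel^{-}$. Consequently, each $*$-subexpression denotes a binary relation on $V_\aGraph$ which is just the reflexive transitive closure of (the graph of) $\aLabel$ or of $\aLabel^{-}$, and hence can be computed in polynomial time on $|\aGraph|$.

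Concretely, I would enumerate the distinct $*$-subexpressions $\aLabel_1^*, \aLabel_1^{-*}, \ldots, \aLabel_k^*, \aLabel_k^{-*}$ occurring in $\aFormulaNodeOrPath$ (there are at most $|\aFormulaNodeOrPath|$ such subexpressions). For each one, say $\aLabel_i^*$, introduce a fresh edge label $\ell_i \notin \Sigma_e$ (extending $\Sigma_e$ if necessary), compute the relation $R_i = \semantics{\aLabel_i^*}_\aGraph$ via standard transitive-closure, and for every $(v,w) \in R_i$ add the edge $(v, \ell_i, w)$ to the data-graph under construction; the data function is left untouched, as is $V_\aGraph$. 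This yields a data-graph $\aGraph'$ of size $O(|V_\aGraph|^2 \cdot |\aFormulaNodeOrPath|)$. Finally, the formula $\aFormulaNodeOrPath'$ is obtained from $\aFormulaNodeOrPath$ by replacing every occurrence of $\aLabel_i^*$ with the atomic path expression $\down_{\ell_i}$ (and analogously for $\aLabel_i^{-*}$), so $|\aFormulaNodeOrPath'| \leq |\aFormulaNodeOrPath|$ and $\aFormulaNodeOrPath' \in \Gposregxpath$ uses no $*$.

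The main technical step is to verify $\semantics{\aFormulaNodeOrPath}_\aGraph = \semantics{\aFormulaNodeOrPath'}_{\aGraph'}$ (and analogously for pairs of nodes in the path-expression case), which I would establish by mutual structural induction on path and node subexpressions. The base cases $\epsilon$, $\esDatoIgual{\aData}$, $\esDatoDistinto{\aData}$ are immediate since $V$ and $D$ are preserved; for $\down_{\aLabel}$ and $\down_{\aLabel}^{-}$ with $\aLabel$ an original edge label, the added edges carry only fresh labels and hence are invisible; for the replaced base cases $\aLabel_i^*$ and $\aLabel_i^{-*}$, the equivalence holds by construction of $R_i$; and the inductive steps for $\cdot$, $\cup$, $\cap$, $[\aFormula]$, $\comparacionCaminos{\cdot}$, $\comparacionCaminos{\aPath = \aPathb}$, $\comparacionCaminos{\aPath \neq \aPathb}$, $\wedge$, $\vee$, $\aPath^{n,m}$ follow directly from the inductive hypothesis.

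The main obstacle (and the reason this construction works specifically for \Gposcoreregxpath) is that the absence of the wildcard $\labelComodin$, negation $\lnot$, and path complement $\pathComplement{\cdot}$ guarantees that adding edges with fresh labels cannot perturb the semantics of any un-replaced subexpression: every remaining atomic path sub-piece names a specific label from the original $\Sigma_e$, and every atomic node sub-piece tests data values, which we leave unchanged. Had the language included $\labelComodin$ or any complementation, the fresh edges could be matched unintentionally and the inductive argument would fail; this observation essentially isolates why the restriction on the grammar is required.
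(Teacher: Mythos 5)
Your construction is essentially identical to the paper's proof: both introduce a fresh edge label per starred atomic label, precompute the reflexive transitive closure of that label's relation on $\aGraph$, materialize it as new edges, and replace each $\down_l^*$ (resp.\ $\down_l^{-*}$) by the corresponding fresh atomic label. Your additional remarks on the size bound, the structural induction, and why the absence of wildcard and complementation in \Gposcoreregxpath keeps the fresh edges invisible are correct elaborations of what the paper leaves implicit.
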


\begin{proof}

For every edge label $l \in \Sigma_e$, we create a new label $l^*$ and add the edges $(v,l^*,w)$ to $\aGraph$ for each pair of nodes $v,w \in V_\aGraph$ such that $(v,w) \in \semantics{\aGraph}_{\down_{l}^*}$. This is our desired $\aGraph'$. To define $\aFormulaNodeOrPath'$, we only need to change any subformula $\down_l^*$ to $\down_{l^*}$ in $\aFormulaNodeOrPath$.
The same can be done for the subformulas of the form $\down_l^{-*}$.
\end{proof}

Actually, this procedure may be utilized even when the Kleene star is applied to arbitrary path expressions $\aPath$, by defining a new edge label $\down_{\aPath^{\star}}$ and joining every pair of nodes $v,w$ such that $(v,w) \in \semantics{\aPath^*}$. However, the difference is that if we only use the $\ast$ operator in atomic expressions, then the fact that $(v,w) \in \semantics{\aPath^*}$ is independent of the data values of the data-graph. Due to this fact, we deduce that:\footnote{It is enough to restrict the $\ast$ operator to be used only in subexpressions $\aFormulaNodeOrPath$ that do not contain any data operation (i.e. a datatest $[\esDatoIgual{c}]$ or $[\esDatoDistinto{c}]$ or a data comparison like $\comparacionCaminos{\aPath = \aPathb}$ or $\comparacionCaminos{\aPath \neq \aPathb}$).}

\begin{lemma}\label{lemma:boundedNodes}
 Given a \Gposregxpath formula $\aFormulaNodeOrPath$ that does not use the Kleene star, there is a constant $c_\aFormulaNodeOrPath$ such that if $(v,w) \in \semantics{\aFormulaNodeOrPath}_\aGraph$ for some data-graph $\aGraph$, then there is a subset of $\aGraph$ with at most $c_\aFormulaNodeOrPath$ nodes $V_\aFormulaNodeOrPath$ such that $(v,w) \in \semantics{\aFormulaNodeOrPath}_{\aGraph_{V_\aFormulaNodeOrPath}}$, where $\aGraph_X$ is the subgraph of $\aGraph$ induced by the nodes $X$.
\end{lemma}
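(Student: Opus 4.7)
The plan is to prove the statement by simultaneous structural induction on path expressions $\aPath$ and node expressions $\aNodeExpression$ of the star-free fragment of \Gposregxpath, establishing:
\begin{itemize}
    \item[(a)] For each such $\aPath$, there exists $c_\aPath$ such that $(v,w) \in \semantics{\aPath}_\aGraph$ implies the existence of $V_\aPath \subseteq V_\aGraph$ with $|V_\aPath| \le c_\aPath$, $v,w \in V_\aPath$, and $(v,w) \in \semantics{\aPath}_{\aGraph_{V_\aPath}}$.
    \item[(b)] For each such $\aNodeExpression$, there exists $c_\aNodeExpression$ such that $v \in \semantics{\aNodeExpression}_\aGraph$ implies the existence of $V_\aNodeExpression \subseteq V_\aGraph$ with $|V_\aNodeExpression| \le c_\aNodeExpression$, $v \in V_\aNodeExpression$, and $v \in \semantics{\aNodeExpression}_{\aGraph_{V_\aNodeExpression}}$.
\end{itemize}
The base cases are immediate: for $\epsilon$ take $\{v\}$; for an atomic label $\aLabel$ (or $\aLabel^{-}$, or $\labelComodin$) take $\{v,w\}$; for a data test $\esDatoIgual{\aData}$ or $\esDatoDistinto{\aData}$ take $\{v\}$, noting that the induced subgraph preserves the data value of $v$.

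For the inductive step, I would handle each constructor as follows. For $\aPath \cdot \aPathb$: pick an intermediate node $u$, apply induction to obtain witnesses $V_1, V_2$, and set $V_{\aPath\cdot\aPathb} = V_1 \cup V_2$; the monotony property of \Gposregxpath (stated earlier in the paper) ensures both factors still evaluate correctly in $\aGraph_{V_1 \cup V_2}$, yielding a bound $c_\aPath + c_\aPathb$. The union case ($\aPath \pathUnion \aPathb$ or $\aFormula \vee \aFormulab$) takes the witness of whichever disjunct holds. The intersection / conjunction case ($\aPath \pathIntersection \aPathb$ or $\aFormula \wedge \aFormulab$) uses monotony directly: the union of the two inductive witness sets witnesses both simultaneously. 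For $\expNodoEnCamino{\aFormula}$ use the node-expression witness. For $\comparacionCaminos{\aPath}$, $\comparacionCaminos{\aPath = \aPathb}$, and $\comparacionCaminos{\aPath \neq \aPathb}$, include the witness nodes for the path(s); note that since $v', v''$ belong to the chosen witness set, their data values are preserved in the induced subgraph, so the equality / inequality data comparison still holds.

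The interesting case is the bounded iteration $\aPath^{n,m}$: if $(v,w) \in \semantics{\aPath^{n,m}}_\aGraph$, then there exist $u_0 = v, u_1, \dots, u_k = w$ with $n \le k \le m$ and $(u_{i-1}, u_i) \in \semantics{\aPath}_\aGraph$. Applying induction at each step yields witness sets $V_i$ of size at most $c_\aPath$, and $V_{\aPath^{n,m}} = \bigcup_{i=1}^{k} V_i$ has size at most $m \cdot c_\aPath$. Monotony again ensures each step is preserved in the union, so $(v,w) \in \semantics{\aPath^{n,m}}_{\aGraph_{V_{\aPath^{n,m}}}}$.

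The main obstacle is that the bound $m$ in $\aPath^{n,m}$ is part of the formula, so the overall constant is not small (it can be exponential in formula size), but it is nonetheless a constant depending only on $\aFormulaNodeOrPath$, which is exactly what the lemma requires. Crucially, the absence of the unrestricted Kleene star $\aPath^{*}$ is what prevents an unbounded number of iterations, and the positivity of the language (no $\pathComplement{\cdot}$ or $\neg$) is what makes monotony available to combine witnesses without risking that a subformula becomes falsified when nodes are added back in.
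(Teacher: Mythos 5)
Your proof is correct and follows essentially the same route as the paper's: a structural induction that assigns each constructor an additive or multiplicative contribution to $c_\aFormulaNodeOrPath$, uses the monotony of \Gposregxpath{} to combine witness sets across subexpressions, and isolates $\aPath^{n,m}$ as the case responsible for the bound $m \cdot c_\aPath$ (and hence the possible exponential blow-up in $|\aFormulaNodeOrPath|$). Your write-up is in fact more explicit than the paper's terse sketch, and correctly takes the maximum over disjuncts in the union case.
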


\begin{proof}
We prove this by induction on the formula's structure, in order to obtain a tight bound. 

The base cases are trivial. This is, $c_\epsilon = 1$, $c_{\down_l} = 2$, $c_{[\esDatoIgual{d}]} = 1$, and so on.

For the inductive case, it is easy to see that $c_{\aPath . \aPathb} = c_\aPath + c_\aPathb - 1$, $c_{\aPath \cup \aPathb} = \min(c_\aPath, c_\aPathb)$, $c_{\aPath \cap \aPathb} = c_\aPath + c_\aPathb - 1$, and so on. The most interesting case is $\aPath^{n,m}$, where we have that $c_{\aPath^{n,m}} = m \times c_{\aPath}$.
\end{proof}

Observe that $c_\aFormulaNodeOrPath$ might be exponential with respect to $|\aFormulaNodeOrPath|$, since in $\aPath^{n,m}$ the size of $m$ is $\log(m)$.

Notice that, given a $\Gposcoreregxpath$ formula without the Kleene star, now we can obtain a bound for the number of nodes that interact with the formula by evaluating it at an origin. Hence, to decide whether a data-graph $\aGraph$ satisfies a given formula $\aFormulaNodeOrPath \in \Gposcoreregxpath$ from an origin $o$, it suffices to evaluate it in every data-graph $\aGraph' \subseteq \aGraph$ where $|V_{\aGraph'}| < |c_\aFormulaNodeOrPath|$ and $o \in V_{\aGraph'}$. It follows that, if the formula is satisfied by any $\aGraph'$ then $\aGraph$ will also satisfy it (due to the monotonicity of \Gposregxpath, see Section~\ref{Section:Definitions}). 

\noindent We will show a polynomial restriction of the \decisionProblem{isomorphic-repair} problem, but first we need the following lemma:

\begin{lemma}\label{lema:constraints1}
Let $A$ be a subset of $n$ distinct data-values that do not appear in $\eta$. If a data-graph of $n$ nodes satisfies $\eta$, then there is another data-graph that satisfies $\eta$ and only uses data-values from $\Sigma_n^\eta \cup A$.  
\end{lemma}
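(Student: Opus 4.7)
The plan is to construct the desired data-graph by renaming, in a controlled way, the data values of $G$ that do not appear in $\eta$, replacing them with fresh values from $A$ while respecting equality patterns. Concretely, assume $G = (V,L,D)$ is a data-graph with $|V| = n$ nodes and $G \models \eta$. Let $D(V) \setminus \Sigma_n^\eta = \{d_1, \ldots, d_k\}$ be the distinct data values of $G$ that are not mentioned in $\eta$; since $|V| = n$, we have $k \le n = |A|$. Fix an injection $\iota \colon \{d_1, \ldots, d_k\} \to A$, and define $h \colon \Sigma_n \to \Sigma_n^\eta \cup A$ by $h(c) = c$ if $c \in \Sigma_n^\eta$, $h(d_i) = \iota(d_i)$, and $h(c)$ arbitrary (say any fixed element of $A$) otherwise. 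The candidate graph is $G' = (V, L, h \circ D)$, which uses only data values from $\Sigma_n^\eta \cup A$.

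The key step is to show that $G \models \eta$ implies $G' \models \eta$, which I would prove by establishing the stronger claim that $\semantics{\psi}_G = \semantics{\psi}_{G'}$ for every subexpression $\psi$ of $\eta$ (both node and path subexpressions), by mutual structural induction. The only nontrivial base cases are the data-dependent ones:
\begin{itemize}
\item For $[\esDatoIgual{c}]$ with $c \in \Sigma_n^\eta$: $D(v) = c$ iff $h(D(v)) = c$, because $h$ fixes $\Sigma_n^\eta$ pointwise and maps everything else into $A$, which is disjoint from $\Sigma_n^\eta$. Analogously for $[\esDatoDistinto{c}]$.
\item For $\comparacionCaminos{\alpha = \beta}$ and $\comparacionCaminos{\alpha \neq \beta}$: by the inductive hypothesis, $\semantics{\alpha}_G = \semantics{\alpha}_{G'}$ and likewise for $\beta$, so it suffices to show that $D(v') = D(v'') \iff h(D(v')) = h(D(v''))$ for all $v', v'' \in V$. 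This is the place where the construction of $h$ does its work: if both $D(v'), D(v'') \in \Sigma_n^\eta$, equality is preserved since $h$ is the identity there; if both lie outside $\Sigma_n^\eta$, equality is preserved since $\iota$ is injective; and if exactly one of them lies in $\Sigma_n^\eta$ and the other outside, then they are unequal in $G$ and remain unequal in $G'$ because $h$ maps the former to itself in $\Sigma_n^\eta$ and the latter into $A$, which is disjoint from $\Sigma_n^\eta$.
\end{itemize}

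The inductive cases are routine since all remaining operators ($\epsilon$, $\labelComodin$, $\aLabel$, $\aLabel^-$, composition, union, intersection, Kleene star, complement, negation, conjunction, disjunction, $\expNodoEnCamino{\cdot}$, $\comparacionCaminos{\cdot}$) depend on the edge structure $L$ or propagate the semantics of smaller subexpressions, and the edge structure is unchanged between $G$ and $G'$. I expect the main obstacle to be a clean bookkeeping of the trichotomy in the data-equality case above; once that is set up, the induction itself is essentially mechanical. Applying the equality of semantics to $\eta$ itself yields $G' \models \eta$, and since the data values of $G'$ all lie in $\Sigma_n^\eta \cup A$ by construction, $G'$ is the required data-graph.
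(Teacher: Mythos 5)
Your proposal is correct and follows essentially the same route as the paper's own proof: both construct the new data-graph by injectively renaming the data values outside $\Sigma_n^\eta$ into the fresh set $A$ while fixing the mentioned values, and both establish $\semantics{\psi}_G = \semantics{\psi}_{G'}$ for all subexpressions by mutual structural induction, with the only substantive work in the data tests and the $\comparacionCaminos{\alpha = \beta}$ / $\comparacionCaminos{\alpha \neq \beta}$ cases. Your explicit trichotomy for preservation of (in)equality of data values is exactly the observation the paper relies on, just spelled out more carefully.
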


\begin{proof}
    The main idea behind this lemma is that the ``identity'' of those data values that are not in $\Sigma_n^\eta$ is not important, since they only satisfy expressions of the form $[c]^\neq$, $\langle \aPath \neq \aPathb \rangle$ or $\langle \aPath = \aPathb \rangle$. Therefore, if we modify them still preserving the equalities between them and the fact that they are not mentioned in $\eta$, the satisfiability of $\eta$ is not affected. The details are in the Appendix~\ref{section:Appendix}.
\end{proof}

Based on Lemmas~\ref{lemma:boundedNodes} and \ref{lema:constraints1}, we obtain the following result:

\begin{theorem}\label{theorem:repairTratable}

Let $\aNodeExpression \in \Gposcoreregxpath$ be a fixed node expression that is evaluated in an origin, then the \decisionProblem{Isomorphic-repair} problem can be solved in polynomial time.
%The \decisionProblem{Isomorphic-repair} problem can be solved in polynomial time when considering a fixed node expression $\aNodeExpression \in \Gposregxpath$ and the expression is evaluated in an origin.

\end{theorem}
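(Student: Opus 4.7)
The plan is to combine the three tools developed earlier in this section so that the search collapses to constant size once we fix a candidate ``core'' of the witness. First I apply Observation~\ref{observation:deleteKleene} to $(\aGraph,\aNodeExpression)$, obtaining in polynomial time a data-graph $\widetilde{\aGraph}$ with the same node set as $\aGraph$ (but enriched with extra edges encoding the transitive closures of each atomic label) and a Kleene-free formula $\widetilde{\aNodeExpression}\in\Gposregxpath$; since $\aNodeExpression$ is fixed, so is $\widetilde{\aNodeExpression}$. Because that enrichment depends only on the graph structure and not on its data values, repairs of $\aGraph$ correspond bijectively to repairs of $\widetilde{\aGraph}$, and under this correspondence $H,o\models\aNodeExpression$ iff $\widetilde{H},o\models\widetilde{\aNodeExpression}$. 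Hence it suffices to decide \decisionProblem{Isomorphic-repair} for the instance $(\widetilde{\aGraph},\widetilde{\aNodeExpression},o)$.

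Next I invoke Lemma~\ref{lemma:boundedNodes} (specialized to node expressions evaluated at an origin, by the same structural induction used in its proof) to extract a constant $c=c_{\widetilde{\aNodeExpression}}$ such that any witnessing repair $\widetilde{H}$ admits a subset $V_0 \ni o$ of size at most $c$ for which $\widetilde{H}_{V_0},o\models\widetilde{\aNodeExpression}$. Applying Lemma~\ref{lema:constraints1} to this bounded subgraph, I may further assume the data values used in $\widetilde{H}_{V_0}$ lie in $\Sigma_n^{\widetilde{\aNodeExpression}}\cup A$ for some fresh set $A$ with $|A|\le c$, i.e.\ in a fixed-size alphabet. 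The algorithm then enumerates all subsets $V_0\subseteq V_{\widetilde{\aGraph}}$ with $o\in V_0$ and $|V_0|\le c$ in $O(n^{c-1})$ time, and for each $V_0$ enumerates the constantly many assignments $V_0\to\Sigma_n^{\widetilde{\aNodeExpression}}\cup A$, answering YES iff the induced constant-size data-graph satisfies $\widetilde{\aNodeExpression}$ at $o$ for some such pair; each individual check runs in constant time because both the formula and the candidate subgraph have constant size.

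Correctness splits into two directions. If the algorithm returns YES, then by the monotonicity of \Gposregxpath noted in Section~\ref{Section:Definitions} the partial assignment on $V_0$ may be extended arbitrarily to all of $V_{\widetilde{\aGraph}}$, yielding an actual repair $\widetilde{H}$ witnessing $\widetilde{H},o\models\widetilde{\aNodeExpression}$ and hence, via the bijection from the first step, a repair $H\equiv\aGraph$ with $H,o\models\aNodeExpression$. Conversely, if any repair exists, the two lemmas above guarantee that the algorithm will hit some valid pair $(V_0,f)$. The main obstacle is checking that these three ingredients compose soundly: that Lemma~\ref{lemma:boundedNodes} does lift from pair semantics to node-at-origin semantics (this is immediate from its inductive proof), that the closure-based enrichment of Observation~\ref{observation:deleteKleene} commutes with the isomorphism-up-to-data-values relation, and that monotonicity truly allows the partial assignment on $V_0$ to be completed to the full graph without destroying satisfaction.
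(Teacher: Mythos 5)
Your proposal is correct and follows essentially the same route as the paper's proof: remove the Kleene star via Observation~\ref{observation:deleteKleene}, bound the witness to a constant-size induced subgraph containing the origin via Lemma~\ref{lemma:boundedNodes}, restrict the candidate data values to $\Sigma_n^{\aNodeExpression}$ plus a constant-size fresh set via Lemma~\ref{lema:constraints1}, enumerate the $O(n^{c})$ subsets and constantly many assignments, and use monotonicity of \Gposregxpath{} to extend a satisfying partial assignment to a full repair. The compatibility checks you flag at the end (origin semantics in Lemma~\ref{lemma:boundedNodes}, the closure enrichment being data-independent) are also implicitly relied upon, and hold, in the paper's argument.
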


\begin{proof}

Given the data-graph $\aGraph$ and the expression $\aNodeExpression$, we start by removing the Kleene operator from $\aNodeExpression$ as in Observation~\ref{observation:deleteKleene}. Let $\aNodeExpression'$ be this new formula, alongside our new data-graph $\aGraph'$. To decide whether there exists a data-graph $H$ such that $H \equiv \aGraph'$ and $H, o \models \aNodeExpression$, it is enough to find a sub data-graph $H'$ induced by some nodes of $\aGraph'$ containing $o$ with size less or equal to $c_{\aNodeExpression'}$ such that $H', o \models \aNodeExpression'$.

Notice that we only need to consider $\bigO{n^{c_{\aNodeExpression'}}}$ such sub data-graphs. For each of these data-graphs, we should consider every possible assignment of data values to the nodes. Apart from those data values in $\Sigma_n^\aNodeExpression$, we might need some extra data values, since it could be the case that we need to satisfy some subformula of the form $\bigwedge_{\esDato{c} \in \Sigma_n^\aNodeExpression} \esDato{c}^\neq$; or also an inequality of data values such as $\comparacionCaminos{\aPath_1 \neq \aPath_2}$.

By Lemma~\ref{lema:constraints1}, it suffices to choose a set $A$ of $c_\aNodeExpression$ arbitrary data values not mentioned in $\aNodeExpression$. Then, for every sub data-graph $H'$ of at most $c_\aNodeExpression$ nodes we need to consider each data value assignment of its nodes that only uses data values from $\aNodeExpression \cup A$. We can bound the number of such assignments by $(|\aNodeExpression| + c_\aNodeExpression)^{c_\aNodeExpression}$. Since $\aNodeExpression$ is fixed, this bound is $\bigO{1}$. Finally, noticing that checking if a certain data-graph satisfies a given node expression at an origin $o$ can be done in polynomial time on the data-graph and formula size, we conclude that the algorithm that checks every possible data value assignment for each possible sub data-graph $H'$ of at most $c_\aNodeExpression$ nodes of $\aGraph'$ runs in poly($|\aGraph|$).
\end{proof}

Now we focus our attention on the \decisionProblem{data cleaning} problem for the particular relaxation in which the \decisionProblem{Isomorphic-repair} is tractable. First, we state a lemma that will be useful in the proof of the main result, which is then presented in Theorem~\ref{teo:datacleaning_origin_poly}.

\begin{lemma}\label{lemma:optimumDatavalues}
    Let $\aGraph=(V_\aGraph, L_\aGraph, D_\aGraph)$ be a data-graph of $n$ nodes with an origin $o$, let $\aNodeExpression$ be a node expression from $\Gposcoreregxpath$ that does not contain any formula $\langle \aPath = \aPathb \rangle$ as subexpression, and let $\delta: \Sigma_n \times \Sigma_n \to \N_0$ be a function such that, for each fixed $\esDato{c} \in \Sigma_n$, we can efficiently enumerate the set $\esDato{d}_1, \esDato{d}_2, \ldots $ where $\delta(\esDato{d}_i, \esDato{c}) \leq \delta(\esDato{d}_{i+1}, \esDato{c})$, and every data value $\esDato{d}$ belongs to the enumeration.
    
    If there is a data-graph $H=(V_\aGraph, L_\aGraph, D_H)$ that satisfies $\aNodeExpression$ and minimizes $\sum\limits_{v \in V_\aGraph} \delta(D_H(v), D_\aGraph(v))$, then there is another data-graph $H' = (V_\aGraph, L_\aGraph, D_{H'})$ %that also satisfies $\aNodeExpression$, minimizes $\sum\limits_{v \in V_\aGraph} \delta(D_H(v), D_G(v))$, 
    that fulfills the same properties as $H$ and such that, for each data value $\esDato{c}$ in $\aGraph$ that was modified in $H'$ to some data value $\esDato{d}$ that is not mentioned in $\aNodeExpression$, satisfies that $\esDato{d}=\esDato{d}_j$ for $j \leq n + |\Sigma_n^\aNodeExpression|$ where $\esDato{d}_i$ is the $i$th data value in the efficient enumeration obtained for $\esDato{c}$ with respect to $\delta$. %$d_1, d_2, \ldots$ such that $\delta(d_i, c) \leq \delta(d_{i+1}, c)$.
\end{lemma}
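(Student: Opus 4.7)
The plan is to establish the lemma by an iterative exchange argument starting from the given optimal $H$ and repairing one violating node at a time. Call a node $v \in V_\aGraph$ \emph{violating} if, writing $\esDato{c} = D_\aGraph(v)$ and $\esDato{d} = D_H(v)$, we have $\esDato{d} \neq \esDato{c}$, $\esDato{d} \notin \Sigma_n^\aNodeExpression$, and $\esDato{d} = \esDato{d}_j$ with $j > n + |\Sigma_n^\aNodeExpression|$ in the enumeration for $\esDato{c}$. For each violating $v$ I will build $\tilde H$ that agrees with $H$ everywhere outside $v$ and sets $D_{\tilde H}(v) = \esDato{d}'$ for a carefully chosen $\esDato{d}'$ among the first $n + |\Sigma_n^\aNodeExpression|$ elements of the enumeration for $\esDato{c}$, and then verify both $\tilde H \models \aNodeExpression$ and that the total cost does not grow. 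Composing such swaps over the at most $n$ violating nodes produces the required $H'$.

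To pick $\esDato{d}'$, apply a pigeonhole count. Two families of values must be avoided: the elements of $\Sigma_n^\aNodeExpression$ (at most $|\Sigma_n^\aNodeExpression|$) and the data values $D_H(u)$ for $u \neq v$ (at most $n-1$). Together they rule out at most $n-1 + |\Sigma_n^\aNodeExpression|$ values, so among the first $n + |\Sigma_n^\aNodeExpression|$ entries of the enumeration for $\esDato{c}$ at least one passes both filters; take it as $\esDato{d}'$. Since the enumeration is sorted by $\delta(\cdot, \esDato{c})$ and $\esDato{d}$ lies strictly beyond position $n + |\Sigma_n^\aNodeExpression|$, we obtain $\delta(\esDato{d}', \esDato{c}) \leq \delta(\esDato{d}, \esDato{c})$; by optimality of $H$ this inequality must actually be an equality, so $\tilde H$ is still optimal.

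The main obstacle, and the step where the hypothesis excluding $\comparacionCaminos{\aPath = \aPathb}$ subexpressions is essential, is proving $\tilde H \models \aNodeExpression$. I will show by simultaneous structural induction on subexpressions the stronger inclusions $\semantics{\aFormula}_H \subseteq \semantics{\aFormula}_{\tilde H}$ for every node subexpression $\aFormula$ and $\semantics{\aPath}_H \subseteq \semantics{\aPath}_{\tilde H}$ for every path subexpression $\aPath$ of $\aNodeExpression$. The cases of $\epsilon$, atomic edge labels, their inverses, and Kleene stars thereof are immediate since these fragments are data-independent. For $[\esDatoIgual{c}]$ every constant $c$ that can appear lies in $\Sigma_n^\aNodeExpression$, and since both $\esDato{d}, \esDato{d}' \notin \Sigma_n^\aNodeExpression$ the test is false at $v$ in both graphs; for $[\esDatoDistinto{c}]$ the test is true at $v$ in both. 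The Boolean connectives $\wedge, \vee$, the navigational operators $\cdot, \cup, \cap$, the node-test wrapper $[\cdot]$, and $\comparacionCaminos{\cdot}$ preserve the inclusion by routine monotonicity on the inductive hypotheses. The critical case $\comparacionCaminos{\aPath \neq \aPathb}$ is handled precisely by the choice of $\esDato{d}'$: any witness $(u, v', v'')$ for the inequality in $H$ satisfies $(u,v') \in \semantics{\aPath}_{\tilde H}$ and $(u,v'') \in \semantics{\aPathb}_{\tilde H}$ by induction, and since $\esDato{d}'$ was chosen different from every $D_H(w)$ with $w \neq v$, if one of $v', v''$ equals $v$ we still have $D_{\tilde H}(v') \neq D_{\tilde H}(v'')$, so the witness persists; the hypothesis ruling out $\comparacionCaminos{\aPath = \aPathb}$ is exactly what prevents the symmetric failure where an equality witness at $v$ would be destroyed. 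Termination is immediate: each swap touches only one node's data value, so no other node becomes newly violating, and the number of violating nodes drops by one per step, yielding the desired $H'$ after finitely many iterations.
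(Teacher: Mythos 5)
Your proposal is correct and follows essentially the same strategy as the paper's proof: repair one offending node at a time, use a pigeonhole count over the first $n + |\Sigma_n^\aNodeExpression|$ entries of the enumeration to find a replacement value that is neither mentioned in $\aNodeExpression$ nor used elsewhere in the current graph, and verify by structural induction (with $\comparacionCaminos{\aPath \neq \aPathb}$ as the only delicate case) that satisfaction and optimality are preserved. The only cosmetic difference is that you prove a one-directional semantic inclusion where the paper's auxiliary claim establishes equality; both suffice.
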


\begin{proof}
    The main idea is that those nodes that are mapped to data values outside the set $\Sigma_n^\aNodeExpression$ can be remapped to others (and preferably those that appear first in the enumeration $\esDato{d}_1, \esDato{d}_2, \ldots$), as long as this does not provoke that a subexpression of the form $\langle \aPath \neq \aPathb \rangle$ is not satisfied. We can guarantee that this will not happen by choosing a data value between the first $n + |\Sigma_n^\aNodeExpression|$, since there will always be some data value here that is not used in the data-graph. The details are in the Appendix~\ref{section:Appendix}.
\end{proof}

\begin{theorem}\label{teo:datacleaning_origin_poly}
The \decisionProblem{Data cleaning} problem can be solved in polynomial time when considering a PUDG $(\aProbabilisticDatabase, \aRealizationModel, \aGraph)$ and an origin $o$ such that:

\begin{itemize}
    \item $\aProbabilisticDatabase$ depends on a fixed node expression $\aNodeExpression \in \Gposcoreregxpath$ that does not contain a data comparison with equality (i.e., the formula $\langle \aPath = \aPathb \rangle$) as a subexpression, and such that $\aProbabilisticDatabase(H) > 0$ if and only if $H, o \models \aNodeExpression$ and $\aProbabilisticDatabase(H) = \aProbabilisticDatabase(H')$ if $H,o \models \aNodeExpression$ and $H',o \models \aNodeExpression$.\footnote{At least one data-graph $H$ must exist for this distribution to be well defined.}
    
    \item $\aRealizationModel$ depends on a local function $\delta$, as those considered in Section~\ref{section:localAndGlobal}, such that, given a data value $\esDato{c}$ we can efficiently enumerate all data values $\esDato{d}_1,\esDato{d}_2,\esDato{d}_3...$ so that $\delta(\esDato{d}_i,\esDato{c}) \leq \delta(\esDato{d}_{i+1},\esDato{c})$.
    
\end{itemize}
\end{theorem}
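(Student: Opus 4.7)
The plan is to reduce the problem to a polynomial enumeration over a bounded family of partial data-value assignments, combining Observation~\ref{observation:deleteKleene} with Lemmas~\ref{lemma:boundedNodes}, \ref{lema:constraints1} and~\ref{lemma:optimumDatavalues}. First I would reformulate: because $\aProbabilisticDatabase$ is uniform on its support and $\aRealizationModel$ is strictly monotone in the cost $c_\delta(H,\aGraph) = \sum_{v} \delta(D_H(v), D_\aGraph(v))$, maximizing $\aInverseRealizationModel_{\aProbabilisticDatabase,\aRealizationModel}(\aGraph)(H)$ is equivalent to finding a Node-Update repair $H$ (i.e.\ $V_H = V_\aGraph$, $L_H = L_\aGraph$) with $H, o \models \aNodeExpression$ of minimum cost $c_\delta(H,\aGraph)$.

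Next I would apply Observation~\ref{observation:deleteKleene} to the pair $(\aGraph, \aNodeExpression)$, obtaining $(\aGraph', \aNodeExpression')$ with $\aNodeExpression'$ free of the Kleene star (and, by inspection of the construction, still free of $\comparacionCaminos{\aPath = \aPathb}$). The crucial point is that the fresh edges added to $\aGraph'$ depend only on $L_\aGraph$, so Node-Update repairs of $\aGraph$ correspond bijectively to Node-Update repairs of $\aGraph'$ with the same cost, allowing us to work with $(\aGraph',\aNodeExpression')$. Now, by Lemma~\ref{lemma:boundedNodes}, if some repair $H'$ satisfies $\aNodeExpression'$ at $o$, then so does $H'_{V_W}$ for some $V_W \subseteq V_\aGraph$ with $o \in V_W$ and $|V_W| \le c_{\aNodeExpression'}$, a constant since $\aNodeExpression$ is fixed. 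The algorithm therefore enumerates all $O(n^{c_{\aNodeExpression'}})$ such subsets; for each $V_W$ and each $v \in V_W$ it uses the pool $P_v \eqdef \Sigma_n^{\aNodeExpression} \cup \{\esDato{d}_1^v, \ldots, \esDato{d}_{n+|\Sigma_n^{\aNodeExpression}|}^v\}$, where $\esDato{d}_1^v, \esDato{d}_2^v, \ldots$ is the efficient enumeration of $\Sigma_n$ ordered by $\delta(\cdot, D_\aGraph(v))$; it enumerates all assignments $V_W \to \bigcup_{v\in V_W} P_v$ (polynomially many), checks in polynomial time whether $\aGraph'_{V_W}$ with the chosen values satisfies $\aNodeExpression'$ at $o$, and, if so, completes it to a full repair by setting $D_H(v) \eqdef D_\aGraph(v)$ on $V \setminus V_W$ (contributing $0$ to the cost since $\delta(\esDato{c},\esDato{c})=0$). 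Finally return the candidate of minimum total cost.

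Soundness is the easier direction: by monotonicity of $\Gposregxpath$, whenever $\aGraph'_{V_W}$ satisfies $\aNodeExpression'$ at $o$ under a chosen assignment, the extended $H'$ (which contains $\aGraph'_{V_W}$ as a subgraph in the sense defined in Section~\ref{Section:Definitions}, since the data values on $V_W$ agree) also satisfies $\aNodeExpression'$ at $o$. Completeness is the main obstacle and is exactly where the hypothesis that $\aNodeExpression$ has no $\comparacionCaminos{\aPath = \aPathb}$ subexpression is indispensable: starting from an arbitrary optimal repair $H^*$, Lemma~\ref{lemma:optimumDatavalues} lets us rewrite $H^*$ into an equivalently optimal $H^{**}$ whose modified data values all lie in $\Sigma_n^{\aNodeExpression} \cup \{\esDato{d}_1^v, \ldots, \esDato{d}_{n+|\Sigma_n^{\aNodeExpression}|}^v\}$ for each affected $v$; then Lemma~\ref{lemma:boundedNodes} provides a small witness $V_W$ for $H^{**}$, and Lemma~\ref{lema:constraints1} ensures that the restriction of $H^{**}$ to $V_W$ remains representable within the pools $P_v$. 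The delicate step is verifying that these three lemmas compose cleanly, in particular that replacing $D_{H^*}$ with the cheaper pool-values on non-witness nodes cannot both break satisfaction (ruled out because $\aNodeExpression'$ uses no data equality, so only inequalities matter and the pool is large enough to accommodate them) and cannot increase the cost (by the ordering of the enumeration, the minimum cost on $V \setminus V_W$ is attained by keeping $D_\aGraph$).
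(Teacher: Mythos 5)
Your proposal is correct and follows essentially the same route as the paper's proof: reduce to finding a minimum-$c_\delta$ Node-Update repair satisfying $\aNodeExpression$ at $o$, eliminate the Kleene star via Observation~\ref{observation:deleteKleene}, enumerate the $O(n^{c_{\aNodeExpression'}})$ witness subsets from Lemma~\ref{lemma:boundedNodes}, restrict candidate data values via Lemma~\ref{lemma:optimumDatavalues}, and keep $D_\aGraph$ outside the witness since $\delta(\esDato{c},\esDato{c})=0$. The only cosmetic difference is that your per-node candidate pool is of size $O(n + |\Sigma_n^{\aNodeExpression}|)$ rather than the constant-size pool the paper uses, which still yields a polynomial bound.
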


\begin{proof}
We will assume that $\aNodeExpression$ does not use the Kleene star operator, otherwise we apply first the transformation of Observation~\ref{observation:deleteKleene}.

We need to find a data-graph $H$ such that $H,o \models \aNodeExpression$, and $H$ minimizes the cost of the data modifications between $H$ and $\aGraph$. Naturally, we only need to modify the data values of the sub data-graph that satisfies $\aNodeExpression$, since leaving the remaining data values as they are is the most inexpensive strategy\footnote{Notice that this property of local functions as defined in Section~\ref{section:localAndGlobal} is not really needed for this algorithm.}.
To do this, we compute the cost of the transition for every sub data-graph and every assignment of data values in the algorithm given in Theorem~\ref{theorem:repairTratable}. Notice that, if any set of nodes is mapped to a set of data values that are not mentioned in $\aNodeExpression$, then we use the property imposed for $\delta$ to compute the most inexpensive data values from $\Sigma_n \setminus \Sigma_n^\aNodeExpression$. 

It follows from Lemma~\ref{lemma:boundedNodes} that there is a data-graph that coincides with $\aGraph=(V_G, L_G, D_G)$ except for its data values, and satisfies $\aNodeExpression$ if and only if there is a subset $V_\aNodeExpression \subseteq V(\aGraph)$ such that $|V_\aNodeExpression|= c_\aNodeExpression$, $o \in V_\aNodeExpression$, and the sub data-graph induced by $V_\aNodeExpression$ satisfies $\aNodeExpression$, where some of the data-values may have changed. Notice that there are at most $\binom{n}{c_{\aNodeExpression}} = \mathcal{O}(n^{c_{\aNodeExpression}})$ such subsets, which is polynomial on the input size.

As proposed in Theorem~\ref{theorem:repairTratable}, we could iterate on each of these subsets and consider every possible assignment of data values that only uses data values from $\Sigma_n^\aNodeExpression$ or an arbitrarily chosen set $A$ of data values that is not mentioned in $\aNodeExpression$ with $|A|=c_\aNodeExpression$. However, in this case, such set $A$ might not contain the optimum transitions based on the transitions costs defined by the function $\delta$. Nonetheless, we can use Lemma~\ref{lemma:optimumDatavalues} to show that there is a bounded number of assignments that we need to consider in order to find the optimum.
Therefore, we can adapt the algorithm of Theorem~\ref{theorem:repairTratable} to consider a specific set of data values assignments that we have shown contain an optimum assignment over $\delta$ and also satisfies $\aNodeExpression$ (if any such assignment exists). Observe that there are $|\Sigma_n^\aNodeExpression| + 2c_\aNodeExpression$ options for each node, thus there is a total of $(|\Sigma_n^\aNodeExpression| + 2c_\aNodeExpression)^{c_\aNodeExpression}$ assignments to consider. Finally, we keep the best assignment, which requires $O(n^{c_\aNodeExpression})$ operations.
\end{proof}

%!TEX root = main.tex

\section{Probabilistic query answering} \label{section:PQA}

% \begin{definition}
% %\bigsergio{Estoy pensando si se puede poner antes o después de la def algo de la discusión sobre la definición donde [] es 0 o 1 vs la que, para queries globales, cuenta proporción de elementos que cumplen }
% Given a boolean query $\aQuery$ over data-graphs and a probabilistic data-graph $\aProbabilisticDatabase$ we define the probability of $\aQuery$ over $\aProbabilisticDatabase$ as: 

% $$ P_\aProbabilisticDatabase(\aQuery)=\sum_{\aGraph \in \aProbabilisticDatabase} \aProbabilisticDatabase(\aGraph) \queryTruthValue{\aQuery(\aGraph)} $$
% \end{definition}

% Directamente lo comenté porque no estoy viendo que se esté usando...

In this section, we focus on the problem of probability query answering for PUDGs. We first define the problem
in general terms and later study the complexity for the case of subset and superset PUDG, respectively.

\begin{definition}\label{def:proba_de_nu_sobre_U}
Given $\aPUDG = (\aProbabilisticDatabase,\aRealizationModel,\aGraph')$ and a node or path expression $\aFormulaNodeOrPath$, we define \defstyle{the probability of $\aFormulaNodeOrPath$ over $\aPUDG$} (or over $\aProbabilisticDatabase$ given $\aGraph'$ and $\aRealizationModel$) as: 
$$\sum_{\aGraph \in \aProbabilisticDatabase} A_{\aGraph, \aFormulaNodeOrPath} \times  \aInverseRealizationModel(\aGraph')(\aGraph),$$
where $A_{\aGraph, \aFormulaNodeOrPath}$ is the binary value of ``$\aFormulaNodeOrPath$ holds over all nodes (or pairs of nodes) in $\aGraph$''.
\end{definition}

\begin{definition} [\functionProblem{Global PQA} over data-graphs] %\sergio{Así como está, la query no es parte del input; fijarse qué formulación queremos}
We define the probabilistic query answering problem for graph-databases: 
\begin{center} 
\fbox{\begin{minipage}{30em}
  \textsc{Problem}: The probability that a node [path] query $\aFormulaNodeOrPath$ holds over all nodes [pairs of nodes].  %node[path] para representar brevemente dos versiones de la def, evitando () porque son más usados para aclaraciones

\textsc{Input}: A PUDG $\aPUDG = (\aProbabilisticDatabase,\aRealizationModel,\aGraph')$

\textsc{Output}: The probability of $\aFormulaNodeOrPath$ over $\aProbabilisticDatabase$, given $\aGraph'$ and $\aRealizationModel$.
\end{minipage}}
\end{center}
\end{definition}

\begin{definition} [\functionProblem{Existential PQA} over data-graphs]
Similarly, we could define the probabilistic query answering for graph-databases: 
\begin{center} 
\fbox{\begin{minipage}{30em}
  \textsc{Problem}: The probability that a node [path] query $\aFormulaNodeOrPath$ holds over \emph{at least} one node [pair of nodes].  

\textsc{Input}: A PUDG $\aPUDG = (\aProbabilisticDatabase,\aRealizationModel,\aGraph')$

\textsc{Output}: The probability that $\aFormulaNodeOrPath$ holds over at least one node [pair of nodes] in the data-graphs from $\aProbabilisticDatabase$, given $\aGraph'$ and $\aRealizationModel$.
\end{minipage}}
\end{center}

However, notice that this problem is the dual of global PQA: given a node expression $\aNodeExpression$ [a path expression $\aPathExpression$], the answer to the Global PQA problem for that formula coincides with 1 minus the answer to the Existential PQA problem for $\lnot \aNodeExpression$ [$\complementoCamino{\aPathExpression}$], and vice versa:  
%Indeed: 
%$$\lnot \exists \aNode \in \aGraph, (\aGraph, \aNode) \models \aNodeExpression \Leftrightarrow \forall \aNode \in \aGraph, (\aGraph, \aNode) \models \lnot \aNodeExpression.$$ 

Let $A_{\aGraph, \aNodeExpression}$ be defined as in Definition~\ref{def:proba_de_nu_sobre_U}
for node expression $\aNodeExpression$, and $E_{\aGraph, \aNodeExpression}$ be the binary value of ``$\aNodeExpression$ holds over some node in $\aGraph$'', we have: 
\begin{align*}
\sum_{\aGraph \in \aProbabilisticDatabase} A_{\aGraph, \aNodeExpression} \times  \aInverseRealizationModel(\aGraph')(\aGraph)  &=  \sum_{\aGraph \in \aProbabilisticDatabase} (1 - E_{\aGraph, \lnot \aNodeExpression}) \times  \aInverseRealizationModel(\aGraph')(\aGraph) \\
&= \sum_{\aGraph \in \aProbabilisticDatabase} \aInverseRealizationModel(\aGraph')(\aGraph)  - \sum_{\aGraph \in \aProbabilisticDatabase} E_{\aGraph, \lnot \aNodeExpression} \times \aInverseRealizationModel(\aGraph')(\aGraph)  
\end{align*}

And therefore: 
$$ \sum_{\aGraph \in \aProbabilisticDatabase} A_{\aGraph, \aNodeExpression} \times  \aInverseRealizationModel(\aGraph')(\aGraph)   = 1 - \sum_{\aGraph \in \aProbabilisticDatabase} E_{\aGraph, \lnot \aNodeExpression} \times \aInverseRealizationModel(\aGraph')(\aGraph). $$

We also define a decision problem related to \functionProblem{Global PQA}, namely \decisionProblem{Global PQA bound}:

\begin{center} 
\fbox{\begin{minipage}{30em}
  \textsc{Problem}: Global PQA bound.
  
\textsc{Input}: A PUDG $\aPUDG = (\aProbabilisticDatabase,\aRealizationModel,\aGraph')$ and a bound $b$.

\textsc{Output}: Decide if the probability of $\aFormulaNodeOrPath$ over $\aProbabilisticDatabase$, given $\aGraph'$ and $\aRealizationModel$, is above $b$.
\end{minipage}}
\end{center}

% \widesergio{Revisar: esto ignora que se observó $\aGraph$. Quizás incluir la $\aRealizationModel$ rompe la cuenta bien hecha, pero parecería que no, al menos mientras la semántica de la probabilidad sea esta que habla de la proporción en $\aProbabilisticDatabase$ que cumple cierta condición}

% \widesanti{Tiene pinta que es lo mismo pero escribiendo $\aProbabilisticDatabase(\aGraph_i | \aGraph, \aRealizationModel)$ (o mejor dicho $P(\aGraph_i | \aGraph, \aRealizationModel))$ y definiendo bien esa probabilidad en relación a $\aProbabilisticDatabase$ y los cambios que mete $\aRealizationModel$ para llegar a $\aGraph$.}
\end{definition}

\input{Examples/example_globalPQAPathExpressionToySocialNetwork}

\begin{example} [Global PQA\textemdash probability that an (incomplete) observed network has no point of attack]
Suppose that an attacker observes a data-graph $\aGraph'$ representing an incomplete diagram of a computer network. Assume they want to query whether all the nodes of the (full) network are secure, assuming for simplicity that this is represented by them having the data value \esDato{secure}, and they have a model of possible networks ($\aProbabilisticDatabase$) and of the limitations of their data-gathering methods ($\aRealizationModel$).

In this case, it makes sense for the semantics of the probability in the PQA to represent the proportion of clean data-graphs where \emph{all} nodes are secure, i.e., the (weighted) proportion of $\aGraph \in \aProbabilisticDatabase$ such that for all $\aNode \in \aGraph$ we have that $\aGraph, \aNode \models \esDatoIgual{secure}$. The relative proportion of secure or not secure nodes should not be reflected in the calculated probability: the only thing that matters is the existence of at least a single point of attack. In other words, we are talking of the global PQA problem. 
\end{example}

% \widesergio{A la hora poner weak constraints, se puede penalizar la probabilidad de los grafos en $\aProbabilisticDatabase$ con sus probabilidades asociadas de violación a estas soft constraints. Aquellos que violan hard constraints tienen probabilidad 0.}

% \widesanti{También se puede tomar la idea de contar la cantidad de violaciones y reducir la probabilidad con eso. Creo que solo va a tener sentido trabajar con una noción de contraint de ese estilo, ya que poner hard constraints sobre la prior solo va a hacer que reduzcas el prior al conjunto de grafos que satisface las hard constraints y después trabajes con solo las weak. Creo que estaría bueno definir bien la noción de weak y cómo afecta al cómputo de las probabilidades y empezar a trabajar con los problemas.}

\subsection{PQA for Subset PUDG}

%\bigsergio{Nota: creo que el problema que es Pspace-Complete es Emptiness de intersección de lenguajes de automátas finitos}

As in Section~\ref{section:dataCleaning}, first we study the complexity of the PQA problem for subset PUDGs.

One way to solve the Global PQA problem is to evaluate the query $\aFormulaNodeOrPath$ in every graph in the cosupport of $\aGraph'$ and then sum all the results multiplied by the probabilities associated. If $\sigma(\aGraph')$ is polynomially bounded on $\aGraph'$, then this approach would be feasible, and therefore we conclude, based on our results for data cleaning, that:

\begin{theorem}

The Global PQA problem can be solved in polynomial time in Subset PUDGs if the number of nodes and edges deleted from the original graph, as well as the number of possible data-values that can be assigned to a node, are bounded by some constant.

\end{theorem}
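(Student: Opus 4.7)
The plan is to follow the strategy outlined in the paragraph preceding the statement: enumerate the cosupport of $\aGraph'$ in polynomial time, evaluate the formula on each element, and weight the result by $\aInverseRealizationModel(\aGraph')(\aGraph)$. The key observation is that under the stated hypotheses the cosupport has polynomial size, so Definition~\ref{def:proba_de_nu_sobre_U} can be computed by brute-force enumeration.

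First I would bound the size of the cosupport. Let $k_n$ bound the number of deleted nodes, $k_e$ bound the number of deleted edges, and $\delta$ bound the number of valid data values per node. Any $\aGraph$ in the cosupport of $\aGraph'$ is obtained from $\aGraph'$ by adding $i \leq k_n$ nodes and $j \leq k_e$ edges. For a fixed $i$, there are at most $\binom{n+k_n}{i}$ choices of node identifiers, $\delta^i$ choices of data values for the new nodes, and at most $\binom{|\Sigma_e|(n+i)^2}{k_e}$ choices for which edges to add back; the last quantity is $\bigO{n^{2k_e}}$ since $k_e$ and $|\Sigma_e|$ are constant. Summing over $i=0,\dots,k_n$ gives a bound of the form $\bigO{n^{2k_e + k_n}\,\delta^{k_n}}$, which is polynomial in $n=|V_{\aGraph'}|$ since $k_n, k_e, \delta$ are constants.

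Next I would describe an enumeration procedure that actually produces this cosupport in polynomial time: iterate over the choices $(i, \text{new node ids}, \text{data values}, \text{edge set})$ in the order above, construct the candidate $\aGraph$, and check that $\aRealizationModel(\aGraph)(\aGraph') > 0$ using the efficient computability assumption on $\aRealizationModel$. Candidates with zero probability can be discarded; the remaining ones are exactly the cosupport.

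Finally, for each $\aGraph$ in the enumerated cosupport, I would compute $\aInverseRealizationModel(\aGraph')(\aGraph)$ (polynomial by the efficient computability assumption on $\aInverseRealizationModel$), and evaluate $A_{\aGraph, \aFormulaNodeOrPath}$: since $\aFormulaNodeOrPath$ is a \Gregxpath\ formula and these can be evaluated in polynomial combined complexity on data-graphs (as reviewed in Section~\ref{Section:Definitions}), checking whether $\aFormulaNodeOrPath$ holds over every node (or every pair of nodes) takes polynomial time. Summing $A_{\aGraph,\aFormulaNodeOrPath}\times \aInverseRealizationModel(\aGraph')(\aGraph)$ over the polynomially many graphs yields the answer per Definition~\ref{def:proba_de_nu_sobre_U}, which completes the proof. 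The main subtlety is the counting argument for the cosupport, in particular tracking how both the addition of nodes and the resulting blow-up of the potential edge set interact with the $k_e$ bound; once this polynomial bound is established, the remainder is routine.
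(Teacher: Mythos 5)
Your proposal is correct and follows essentially the same route the paper takes: the paper leaves this theorem's proof implicit in the preceding paragraph (enumerate the polynomially bounded cosupport of $\aGraph'$, evaluate the query on each candidate, and sum the values weighted by $\aInverseRealizationModel(\aGraph')(\cdot)$), invoking the cosupport-counting arguments from the data-cleaning section. Your explicit bound of the form $\bigO{n^{2k_e+k_n}\,\delta^{k_n}}$ matches the counting the paper uses for the bounded-deletion case, so the proposal simply fills in details the paper omits.
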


Let us consider the simplified case in which no node deletions are allowed, and therefore the only possible data-graphs in the cosupport of $\aGraph'$ are graphs obtained by adding edges to $\aGraph'$. We can prove the following bounds:

\begin{theorem}
There exists a \Gregxpath formula $\aFormulaNodeOrPath$ such that the problem Global PQA is PP-hard when considering no node deletions through $\aRealizationModel$. Moreover, PSPACE is an upper bound for the problem.
\end{theorem}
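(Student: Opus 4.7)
The plan is to establish the two bounds separately. For the PP lower bound, I will reduce from the canonical PP-complete problem \textsc{MajSAT} by adapting the formula-encoding used in the proof of Theorem~\ref{teo:subsetNpComplete}. Given a 3-CNF formula $\phi$ with $n$ variables, build the same ``formula graph'' $\aGraph'_\phi$ whose variable and clause nodes are joined by \esLabel{is\_literal} and \esLabel{is\_literal\_negated} edges, together with the distinguished data nodes $\bot$ and $\top$, but without any \esLabel{value} edges. Design efficient fixed functions $\aProbabilisticDatabase$ and $\aRealizationModel$ so that the cosupport of $\aGraph'_\phi$ is exactly the $2^n$ graphs $\aGraph_{\phi,v}$ obtained by adding, for each variable, a single \esLabel{value} edge to either $\top$ or $\bot$ (one per assignment $v$), and so that the induced posterior $\aInverseRealizationModel(\aGraph'_\phi)$ is \emph{uniform} on this cosupport. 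Uniformity is obtained by making $\aProbabilisticDatabase$ assign the same weight to all $\aGraph_{\phi,v}$ sharing the same $\phi$, and $\aRealizationModel(\aGraph_{\phi,v})(\aGraph'_\phi)$ a constant depending only on $\phi$; this is the main technical adaptation, since Theorem~\ref{teo:subsetNpComplete} deliberately biases the weights with the indicator $\phi_v$ to force NP-hardness, whereas here we want plain uniformity.

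Fix once and for all the \Gregxpath node expression
\[
\aFormulaNodeOrPath \;\equiv\; \neg\,\esDatoIgual{clause} \;\vee\; \comparacionCaminos{\down_{\esLabel{is\_literal}}^{-}\cdot\down_{\esLabel{value}}\cdot[\esDatoIgual{\top}]} \;\vee\; \comparacionCaminos{\down_{\esLabel{is\_literal\_negated}}^{-}\cdot\down_{\esLabel{value}}\cdot[\esDatoIgual{\bot}]},
\]
which is vacuously true at non-clause nodes and, at a clause node $c$ of $\aGraph_{\phi,v}$, asserts that some literal of $c$ is set to true by $v$. Hence $A_{\aGraph_{\phi,v},\aFormulaNodeOrPath}=1$ iff $v\models\phi$, and the output of \functionProblem{Global PQA} on $(\aProbabilisticDatabase,\aRealizationModel,\aGraph'_\phi)$ is exactly $\#\mathrm{SAT}(\phi)/2^{n}$. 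Thresholding this against $b=1/2$ in the decision version \decisionProblem{Global PQA bound} decides \textsc{MajSAT}, yielding PP-hardness (with a single fixed formula).

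For the PSPACE upper bound, forbidding node deletions forces $V_\aGraph=V_{\aGraph'}$ and $D_\aGraph=D_{\aGraph'}$ for every $\aGraph$ in the cosupport, so candidates differ from $\aGraph'$ only by the extra edges they carry over $V_{\aGraph'}\times\Sigma_e\times V_{\aGraph'}$. There are at most $2^{|\Sigma_e|\cdot|V_{\aGraph'}|^{2}}$ such graphs, and they can be enumerated lexicographically one by one using only polynomial space to track the current subset. For each candidate $\aGraph$ we check $\aGraph'\subseteq\aGraph$, compute $\aProbabilisticDatabase(\aGraph)$ and $\aRealizationModel(\aGraph)(\aGraph')$ in polynomial time (by the efficient-computability assumption on the EMDG), evaluate $A_{\aGraph,\aFormulaNodeOrPath}$ in polynomial combined complexity of \Gregxpath, and accumulate the numerator $\sum_{\aGraph}\aProbabilisticDatabase(\aGraph)\,\aRealizationModel(\aGraph)(\aGraph')\,A_{\aGraph,\aFormulaNodeOrPath}$ and denominator $\sum_{\aGraph}\aProbabilisticDatabase(\aGraph)\,\aRealizationModel(\aGraph)(\aGraph')$ of $\aInverseRealizationModel$. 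Comparing the resulting quotient to $b$ is the final step.

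The main obstacle is twofold. On the hardness side, it lies in reworking the distribution of Theorem~\ref{teo:subsetNpComplete} so that the combined weight $\aProbabilisticDatabase(\aGraph_{\phi,v})\cdot\aRealizationModel(\aGraph_{\phi,v})(\aGraph'_\phi)$ is constant in $v$ while keeping $\aProbabilisticDatabase, \aRealizationModel, \aInverseRealizationModel$ efficiently computable and the support over $\aUniverseOfGraphs$ finitely summing to $1$; this can be achieved by redesigning the normalisation table of Theorem~\ref{teo:subsetNpComplete} without the $\pm \phi_v$ perturbation. On the upper-bound side, one must ensure that summing exponentially many polynomial-bit rationals stays in PSPACE; this is handled by working throughout the enumeration with a fixed common denominator of polynomial bit-length (derived from the promised polynomial-bit rational encodings of $\aProbabilisticDatabase$ and $\aRealizationModel$), so that the running numerator and denominator only ever require polynomial space to store.
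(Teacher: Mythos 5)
Your proposal is correct and follows essentially the same route as the paper: a reduction from \textsc{MajSAT} in which the cosupport of the observed formula-graph is the set of $2^n$ assignment-extended graphs with uniform posterior $\tfrac{1}{2^n}$, a fixed \Gregxpath{} node expression that holds at a clause node iff some literal of that clause is assigned satisfyingly (the paper writes it as $[\esDato{clause}]^= \entoncesNodo \langle\down^-_+ \down_\esLabel{assigned} [\top]^=\rangle \lor \langle\down^-_- \down_\esLabel{assigned} [\bot]^=\rangle$, logically identical to your disjunction with $\neg\,\esDatoIgual{clause}$), and a threshold of $\tfrac12$; the \textsc{PSPACE} upper bound is the same exponential-time, polynomial-space enumeration of edge-supersets of $\aGraph'$ with a polynomial-size accumulator. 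Your extra effort to make $\aProbabilisticDatabase$ and $\aRealizationModel$ globally fixed via a normalisation table is not needed for this statement (the paper simply defines the uniform distribution relative to the constructed instance), but it is harmless and, if carried out, slightly strengthens the result.
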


\begin{proof}
For the first statement we will reduce \decisionProblem{MAJSAT} (\textit{majority SAT}) to subset PQA where no node deletions are allowed. \textsc{MAJSAT} problem consists in, given a boolean formula $\aFormula$ with free variables, deciding if more than half of the assignments of the free variables of $\aFormula$ satisfy it. We can assume that $\aFormula$ is given in conjunctive normal form.

Let $x_1,...,x_n$ be the free variables of $\aFormula$, and let $c_1,...,c_m$ be its clauses. We will construct an instance of the subset PQA problem such that half of the assignments of $\aFormula$ satisfy it if and only if the answer of the PQA problem is greater than $\frac{1}{2}$.

We define the observed data-graph $\aGraph'$ in the following way. Let $V_\aGraph' = \{v_i$ $|$ $1 \leq i \leq n\} \cup \{w_j$ $|$ $1 \leq j \leq m\} \cup \{\top, \bot\}$, where the nodes $v_i$ refer to the variables $x_i$ and the nodes $w_j$ refer to the clauses $c_j$. The edges of the graph are related to the structure of the formula $\aFormula$ as follows: 
$L(v_i,w_j) = \{\down_+\}$ if the variable $x_i$ appears in the clause $c_j$ without negation, and $L(v_i,w_j) = \{\down_-\}$ if the variable $x_i$ appears with negation in $c_j$. For every other pair of nodes $v,w$ that do not satisfy any of those hypotheses, we define $L(v,w)=\emptyset$. Also, we set $D(w_j)=clause$, $D(v_i)=var$ and $D(\star)=\star$ for $\star \in \{\top, \bot\}$. Observe that the data-graph $\aGraph'$ contains enough information to rebuild~$\aFormula$. 

%The preimage of $\aGraph'$ through $\aRealizationModel$ \sergio{Rephrase to '
The image of $\aInverseRealizationModel$ will represent the set of all possible assignments of the variables $x_1,...,x_n$. To achieve this, we define $\aProbabilisticDatabase(\aGraphb) = \frac{1}{2^n}$ if $\aGraph' \subseteq \aGraphb$, every node $v_i$ has an edge $\down_\esLabel{assigned}$ to either $\bot$ or $\top$ (but not to both at the same time), and if no other edge or node was added to obtain $\aGraphb$ from $\aGraph'$. We can easily map the data-graph $\aGraphb$ to an assignment of the variables $x_i$ by defining that $x_i$ is valuated to true if and only if the edge $(v_i, \esLabel{assigned}, \top)$ belongs to $\aGraphb$. We can also do this in the other way: given an assignment $v$ of the variables of $\aFormula$ we can build the data-graph $H_v$ that corresponds to that assignment.  

We also define $\aRealizationModel(\aGraph)(\aGraph')=1$ for every graph $\aGraph$. 
%Note that this $\aRealizationModel$ is $\Sigma$-uniform. 

Finally, the query $\aFormulaNodeOrPath$ will be in charge of `checking' whether the assignment represented by any $\aGraphb \in \aInverseRealizationModel(\aGraph')$ satisfies the Boolean formula $\aFormula$. This can be captured by defining:

$$
\aFormulaNodeOrPath = 
[clause]^= 
\entoncesNodo 
\comparacionCaminos{\down^-_+ \down_\esLabel{assigned} [\top]^=} 
\lor
\comparacionCaminos{\down^-_- \down_\esLabel{assigned} [\bot]^=}
$$

Now we prove the following fact: \textit{a valuation $v$ over the variables of $\aFormula$ satisfies $\aFormula$ if and only if the data-graph $\aGraph'\subseteq H_v$ that represents such assignment $v$ satisfies the constraint $\aFormulaNodeOrPath$}.

Let $v$ be an assignment of the variables of $\aFormula$, and let $H_v$ be the data-graph that corresponds to that assignation. That is, $H_v = (V_{H_v}, L_{H_v}, D_{H_v})$ is defined as:
\begin{align*}
& V_{H_v} = V_{\aGraph'} \\
& D_{H_v}(z) = D_{\aGraph'}(z) \text{ for } z \in V_{H_v}\\
& L_{H_v}(v_i,\star) = \{\down_\esLabel{assigned}\} \text{ if } v(x_i) = \star \text{ for } \star \in \{\top, \bot\}. \text{ Otherwise } L_{H_v}(v_i,\star) = \emptyset \\
& L_{H_v}(z_1, z_2) = L_{\aGraph'}(z_1,z_2) \text{ for any other case}
\end{align*}

Suppose that $v$ satisfies $\aFormula$ and let $z$ be a node of $H_v$. If $D_{H_v}(z) \neq clause$ then $z \in \semantics{\aFormulaNodeOrPath}_{H_v}$ by definition of $\aFormulaNodeOrPath$. Otherwise, $z$ is actually $w_j$ for some $j$. Observe that the clause $c_j$ is satisfied by $v$, and therefore there is a variable $x_i$ that either appears without negation on $c_j$ and is valuated to true by $v$, or either appears negated in $c_j$ and is valuated to false by $v$. In the first case, there must be edges $(v_i,+,w_j)$ and $(v_i,\esLabel{assigned},\top)$ in the graph, while in the other case there will be edges $(v_i,-,w_j)$ and $(v_i,\esLabel{assigned},\bot)$. In either case, $c_j \in \semantics{\aFormulaNodeOrPath}_{H_v}$ since it satisfies the consequent of $\aFormulaNodeOrPath$.

Now for the other direction, suppose that $v$ does not satisfy $\aFormula$. Let $c_j$ be the clause that is not satisfied. Then, the node $w_j$ would not belong to $\semantics{\aFormulaNodeOrPath}_{H_v}$: if there is an edge $(v_i,+,c_j)$, then there will be no edge $(v_i,\esLabel{assigned},\top)$ in $H_v$, since that would imply that $v(x_i) = \top$ and $c_j$ is satisfied. The same reasoning can be followed for the $(v_i,-,c_j)$ edges. 

To complete the reduction, suppose that $\aFormula \in MAJSAT$. Thus, let $\mathcal{V}$ be the set of at least $2^{n-1} + 1$ assignments of $\aFormula$ that satisfy it. Observe that every data-graph of the set $\{H_v\}_{v \in \mathcal{V}}$ satisfies $\aFormulaNodeOrPath$, and also that:
$$
\aInverseRealizationModel(\aGraph')(H_v) = \frac{\aProbabilisticDatabase(H_v) \times \aRealizationModel(H_v)(\aGraph')}{\sum_{I \in \aProbabilisticDatabase} \aProbabilisticDatabase(I) \times \aRealizationModel(I)(\aGraph')} = \frac{\aProbabilisticDatabase(H_v)}{\sum_{I\in\aProbabilisticDatabase}\aProbabilisticDatabase(I)} = \aProbabilisticDatabase(H_v) = \frac{1}{2^n}.
$$

Then, the probability that $\aFormulaNodeOrPath$ holds over $\aProbabilisticDatabase$ given $\aGraph'$ and $\aRealizationModel$ must be equal or greater than $|\mathcal{V}| \times \frac{1}{2^n} \geq \frac{2^{n-1} + 1}{2^n} > \frac{1}{2}$.

For the other direction, if $\aFormula \notin MAJSAT$, then half or more of its assignments do not satisfy it. It follows that at least half of the data-graphs in $\aProbabilisticDatabase$ do not satisfy $\aFormulaNodeOrPath$, because there is a bijection between the data-graphs $H \in \aProbabilisticDatabase$ and the valuations over $\aFormula$, and this bijection relates formulas that satisfy $\aFormula$ with data-graphs that satisfy $\aFormulaNodeOrPath$. Since for every data-graph $H \in \aProbabilisticDatabase$ holds that $\aInverseRealizationModel(\aGraph')(H) = \frac{1}{2^n}$, it follows that the probability that $\aFormulaNodeOrPath$ holds over $\aProbabilisticDatabase$ given $\aGraph'$ and $\aRealizationModel$ is bounded by $\frac{1}{2}$.

\hspace{0.1cm}

For the upper bound, we show that the PQA subset problem can be solved in \textsc{PSPACE}. Notice that any data-graph $\aGraphb$ with the same number of nodes as the observed data-graph $\aGraph'$ can be represented by a binary string of length $n^2 \times |\Sigma_e|$ that defines which edges belong to $\aGraphb$. One way to solve the problem would be to iterate over this `counter' and, for each representation $\omega$, to construct the data-graph $\aGraphb$ and check whether the formula $\aFormulaNodeOrPath$ is satisfied in $\aGraphb$. We also use an accumulator where we sum the values $\aInverseRealizationModel(\aGraph')(\aGraphb)$ of those graphs that satisfy $\aFormulaNodeOrPath$. Since for every $\aGraphb$ the value $\aInverseRealizationModel(\aGraph')(\aGraphb)$ can be represented in polynomial size on $|\aGraph'|$, this accumulator also has polynomial size on $\aGraph'$\footnote{One way to prove this fact is by noticing that the number of bits that the counter needs to represent its current value will never be more than those needed by the value $\aInverseRealizationModel(\aGraph')(\aGraphb)$ with the biggest number of bits used. This number of bits can be bounded by a polynomial on $\aGraph'$, and therefore the number of bits used by the accumulator will always be bounded by this polynomial, since the sums will never require an extra bit to represent the output.}. All of this can be done in polynomial space, and requires exponential time.
\end{proof}

Observe that we could have used the node expression
\[
\aFormulab
=
[var]^= \vee [\top]^= \vee [\bot]^= 
\vee
\comparacionCaminos{\down^-_+ \down_\esLabel{assigned} [\top]^=} 
\lor
\comparacionCaminos{\down^-_- \down_\esLabel{assigned} [\bot]^=}
\]

to obtain the same bound but using an expression from \Gposregxpath, which is a weaker language.

\subsection{PQA for Superset PUDG}

We finish this section by giving complexity considerations for PQA in superset PUDGs.
In this case, the PQA problem can always be solved in polynomial space if $\aProbabilisticDatabase$ and $\aRealizationModel$ can be computed efficiently\footnote{In fact we only require them to use polynomial space to compute its outputs, which as well must be bounded by a polynomial}: a PSPACE algorithm would just iterate over every graph $\aGraph \subseteq \aGraph'$ and accumulate the value $A_{\aGraph_i,\aFormulaNodeOrPath} \times \aInverseRealizationModel(\aGraph')(\aGraph)$. Observe that the normalization factor of $\aInverseRealizationModel$ can also be computed. 

This is an improvement from the subset case: the general PQA subset problem was not solvable in PSPACE because we had no way of knowing whether the graphs generated while iterating over the cosupport of $\aGraph'$ were bounded by some polynomial. In this case, since $I$ is in the cosupport, we can guarantee that $I \subseteq \aGraph'$.

Nevertheless, the problem remains hard even in restricted scenarios:

\begin{theorem}

The PQA superset problem is \textsc{PP-hard} when considering no node additions through $\aRealizationModel$.

\end{theorem}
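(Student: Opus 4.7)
The plan is to adapt the PP-hardness reduction from the Subset PUDG case by ``inverting'' the direction of the realization model: instead of the clean data-graph being a supergraph of the observation, it will be a subgraph. I would reduce \textsc{MAJSAT} to Global PQA for Superset PUDGs. Given a CNF formula $\aFormula$ over variables $x_1,\dots,x_n$ and clauses $c_1,\dots,c_m$, I would build an observed data-graph $\aGraph'$ containing variable nodes $v_i$, clause nodes $w_j$, and two special nodes $\top,\bot$, with the literal-encoding edges $\down_+$ and $\down_-$ exactly as in the proof of the Subset bound. In addition, for every $i$ I would include \emph{both} edges $(v_i,\esLabel{assigned},\top)$ and $(v_i,\esLabel{assigned},\bot)$ in $\aGraph'$. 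This way, $\aGraph'$ ``contains all possible valuations'' simultaneously.

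Next, I would define the prior $\aProbabilisticDatabase$ uniformly over the $2^n$ subgraphs of $\aGraph'$ that share the same vertex set and the same clause-literal edges, but keep exactly one $\esLabel{assigned}$ edge per variable node (one of $(v_i,\esLabel{assigned},\top)$, $(v_i,\esLabel{assigned},\bot)$), assigning probability $\tfrac{1}{2^n}$ to each and $0$ elsewhere. Each such $\aGraph$ codes a unique valuation of $\aFormula$. The realization model $\aRealizationModel$ would be defined so that $\aRealizationModel(\aGraph)(\aGraph')=1$ for every $\aGraph$ in the support of $\aProbabilisticDatabase$: that is, the noisy observer always adds the ``dual'' assignment edges back, yielding $\aGraph'$. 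For $\aGraph$ outside the support, $\aRealizationModel(\aGraph)$ can be set to the Dirac distribution on $\aGraph$. By construction $\aGraph \subseteq \aGraph'$ in every relevant case and no nodes are ever added, so this is a Superset PUDG with the required restriction, and both $\aProbabilisticDatabase$, $\aRealizationModel$ (and hence $\aInverseRealizationModel$) are efficiently computable.

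For the query, I would reuse the node expression
\[
\aFormulaNodeOrPath \;=\; [\esDatoIgual{clause}] \entoncesNodo \bigl(\comparacionCaminos{\down_+^{-}\,\down_{\esLabel{assigned}}\,[\esDatoIgual{\top}]} \;\lor\; \comparacionCaminos{\down_-^{-}\,\down_{\esLabel{assigned}}\,[\esDatoIgual{\bot}]}\bigr)
\]
from the Subset proof. The key observation is that on every $\aGraph$ in the support of $\aProbabilisticDatabase$, each variable node has \emph{exactly one} outgoing $\esLabel{assigned}$ edge, so the same combinatorial argument as in the Subset case shows that $\aGraph \models \aFormulaNodeOrPath$ iff the valuation coded by $\aGraph$ satisfies $\aFormula$. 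Since $\aInverseRealizationModel(\aGraph')$ is uniform on the $2^n$ assignment-graphs, the probability produced by Global PQA equals the fraction of satisfying valuations of $\aFormula$, which exceeds $\tfrac{1}{2}$ iff $\aFormula \in \textsc{MAJSAT}$.

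The main subtle point I expect to address is verifying that $\aInverseRealizationModel(\aGraph')$ is indeed uniform over the $2^n$ valuation-graphs: this requires checking via Bayes' formula that the normalization factor $\sum_{\aGraphb\in\aProbabilisticDatabase}\aProbabilisticDatabase(\aGraphb)\aRealizationModel(\aGraphb)(\aGraph')$ collapses to a single constant (which it does, because $\aRealizationModel(\aGraph)(\aGraph')=1$ on the support of $\aProbabilisticDatabase$ and $\aProbabilisticDatabase$ is uniform there). Beyond that, the reduction is essentially the symmetric dual of the Subset argument, so the equivalence ``answer above $\tfrac{1}{2}$ iff $\aFormula\in\textsc{MAJSAT}$'' follows by the same bijective counting.
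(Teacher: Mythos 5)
Your proposal is correct and follows essentially the same route as the paper's own proof: both reduce \textsc{MAJSAT} by placing \emph{both} $\esLabel{assigned}$ edges in the observed $\aGraph'$, taking $\aProbabilisticDatabase$ uniform over the $2^n$ single-assignment subgraphs with $\aRealizationModel(\aGraph)(\aGraph')=1$ on the support, and reusing the subset-case query $\aFormulaNodeOrPath$. Your explicit check that the Bayes normalization makes $\aInverseRealizationModel(\aGraph')$ uniform is a welcome detail the paper leaves implicit, but the argument is the same.
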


\begin{proof}

The proof is rather similar to the subset case. Given an input formula $\aFormula$ with $n$ variables $x_i$ and $m$ clauses $c_j$ for the MAJSAT problem, we define the graph $\aGraph' = (V, L, D)$ as follows:

\begin{align*}
& V  = \{v_i:1\leq i \leq n\} \cup \{w_j : 1\leq j \leq m\} \cup \{\top, \bot\} \\
& L(v_i,w_j)  = \{\down_+\} \iff x_i \mbox{ appears without negation in } c_j \\
& L(v_i,w_j)  = \{\down_-\} \iff x_i \mbox{ appears with negation in } c_j \\
& L(v_i,\top)  = \{\down_\esLabel{assigned}\} \mbox{ for every } 1\leq i \leq n \\
& L(v_i, \bot) = \{\down_\esLabel{assigned}\} \mbox{ for every } 1\leq i \leq n\\
& L(z_1,z_2)  = \emptyset \mbox{ for any other case} \\
& D(c_i)  = \esDato{clause} \\
& D(x_i)  = \esDato{var} \\
& D(\star)  = \star \mbox{ for } \star \in \{\top,\bot\} 
\end{align*}

% \begin{center}

% $V = \{v_i:1\leq i \leq n\} \cup \{w_j : 1\leq j \leq m\} \cup \{\top, \bot\}$

% $L(v_i,w_j) = \{\down_+\}$ $\iff$ $x_i$ appears without negation in $c_j$

% $L(v_i,w_j) = \{\down_-\}$ $iff$ $x_i$ appears with negation in $c_j$

% $L(v_i,\top) = \{\down_\esLabel{assigned}$\} for every $1\leq i \leq n$

% $L(z_1,z_2) = \emptyset$ for any other case

% $D(c_i) = clause$

% $D(x_i) = var$

% $D(\star) = \star$ for $\star \in \{\top,\bot\}$

% \end{center}

We define $\aRealizationModel(I)(\aGraph') = 1$ for every $I\subseteq \aGraph'$, and $\aProbabilisticDatabase(I)=\frac{1}{2^n}$ if and only if $I$ represents an assignment of the variables $x_i$, following the semantics defined in the previous proof.

Using the same $\Gregxpath$ expression $\aFormulaNodeOrPath$ as in the subset case, it is easy to prove that $\aFormula$ is satisfied by more than half of its assignments if and only if the answer to superset PQA problem with inputs $\aGraph'$, $\aProbabilisticDatabase$, $\aRealizationModel$ and $\frac{1}{2}$ is 1.

\end{proof}
\section{Conclusions and future work} \label{Section:Conclusions}

%\widesergio{El problema de PQA quizás amerita considerar una definición alternativa de semántica de una fórmula sobre un grafo, de modo que de algo distinto de 0 o 1. De lo contrario, notar que podría ser un poco raro contar únicamente 0 o 1 en el valor de la query sobre todos los nodos, sin importar la proporción que la cumple.}

In this work, we developed a formal framework for probabilistic unclean data-graphs, which allowed us to study several basic problems within them. The formalism defined in Section~\ref{Section:Definitions} expands the ideas described in~\cite{de2018formal}, addressing particular difficulties that arise when dealing with data-graphs. We also described a set of restrictions over the initial model that are analogous to those found in other works of the area, mainly the usual subset and superset restrictions considered when dealing with inconsistent databases as in~\cite{bienvenu2012complexity} or \cite{barcelo2017data}, and we show that they still hold a reasonable expressive power to capture real-life examples. We defined the problems of \decisionProblem{Data cleaning} and \decisionProblem{Probabilistic Query Answering} for unclean data-graphs, which strongly relate with two fundamental questions in the context of unclean databases developed in~\cite{de2018formal}. First, given an observed database and a model for the evolution of inconsistencies: which is the most likely original state of the data-graph? And second, given the same prior knowledge: which is the probability that a certain condition or constraint holds in the original state of the data-graph?

In Section~\ref{section:dataCleaning} and Section~\ref{section:PQA} we studied specific restrictions of the problems and determined that the complexity of the subset and superset versions of \decisionProblem{Data cleaning} and \decisionProblem{Probabilistic Query Answering} is hard in both cases. Nevertheless, we provided additional restrictions for which the problem becomes tractable. Regarding the study of \decisionProblem{PQA} in probabilistic data-graphs, we considered \decisionProblem{Global PQA} and \decisionProblem{Existential PQA} over data-graphs, and studied this using the \Gregxpath syntax to express conditions or constraints over data-graphs. It follows from the proofs given in Section~\ref{section:PQA} 
that the results obtained can be generalized to similar graph query languages, as long as they are (1) expressive enough to define simple path conditions on every pair of nodes and (2) simple enough to compute the set of answers given a data-graph instance in polynomial time.

Following a similar idea, we defined the \emph{update PUDG} restriction over the PUDG model, in which we consider the possibility of modification of labels and data values in edges and nodes, respectively. This resembles an analogous restriction already studied in the literature of inconsistency reasoning: namely, attribute-based repairs~\cite{Bertossi_2011_book}.
In the update case, we obtained intractable results for quite restrictive scenarios, which suggests that this problem is more difficult to address in the general case than those studied in Sections~\ref{section:dataCleaning} and~\ref{section:PQA}. % and
%therefore the results obtained in this work suggest that this generalization is a much more difficult problem to address.
%that may require distinct approaches and constraints and which may lead to obtaining dissimilar results, even within the same framework.

We observed in Section~\ref{section:constraints} that the concept of soft and hard constraints from the area of database repairing can be adapted to our framework of EMDGs and PUDGs. We do so by defining (weighted) restriction sets, which modify the distribution of a probabilistic data-graph by reducing the probability of those data-graphs which violate the constraints in the restriction set.
A deeper complexity analysis for constraint satisfaction is a matter of our future work.
%In this work we barely mentioned the complexity considerations of asking for constraint satisfaction, as this was not the focus of our study, but we hope to explore this topic in the future.

%Other questions remain open in the context of this work. 
%First, we did not prove a tight upper bound for the 
For the \decisionProblem{PQA} problem when considering only subset PUDGs, the best upper bound we show is \textsc{PSPACE}, while the lower bound is \textsc{PP}, so there might be some room for improvement. 
In addition, we did not find restrictions over the realization model $\aRealizationModel$ that could make the problems easy, without trivializing it. In most cases, the interaction between really simple probabilistic databases $\aProbabilisticDatabase$ and realization models $\aRealizationModel$ already makes the problems hard, so one possible way to advance would be to consider instances of the \decisionProblem{Data Cleaning} and \decisionProblem{PQA} problem where the distribution defined by $\aProbabilisticDatabase$ is uniform and the weight of the reasoning is encoded in $\aRealizationModel$.

%Regarding syntactical constraints
Regarding the expressiveness of the query language, subsets of $\Gregxpath$ could be explored to lower the complexity of the \decisionProblem{PQA} problems. Nonetheless, observe that the results of hardness use only a narrow set of tools from the \Gregxpath grammar, and therefore serious limitations on the complexity of the expression should be imposed. Considering expressions from weaker navigation languages such as those mentioned in~\cite{barcelo2012expressive} might be a good place to start.
We also left open for exploration the possibility of semantic restrictions in the universe $\aUniverseOfGraphs$ of data-graphs, such as considering only data-trees, DAGs, or other type of structures, in order to investigate whether these restrictions provide a lowering of the complexity for our problems. Similarly, we can study the trade-offs involved in further restrictions on the probabilistic data-graphs $\aProbabilisticDatabase$ and the noisy observers $\aRealizationModel$. 
Finally, it would also be interesting to consider an alternative but natural definition of the semantics of a given formula $\aFormulaNodeOrPath$ over a data-graph, such that it returns the proportion of nodes or pairs of nodes that satisfy the formula.
%values within the range $\left[ 0,1 \right]$ and not only a binary response.

%\input{Tareas}

% Recordar de poner los proyectos que tenemos. UBACyT, uno de data-aware con Santi Figueira?

%%
%% The next two lines define the bibliography style to be used, and
%% the bibliography file.
%\bibliographystyle{ACM-Reference-Format}
\bibliographystyle{plain}
\bibliography{main}

\begin{thebibliography}{10}

\bibitem{abiteboul2011capturing}
Serge Abiteboul, T-H~Hubert Chan, Evgeny Kharlamov, Werner Nutt, and Pierre
  Senellart.
\newblock Capturing continuous data and answering aggregate queries in
  probabilistic xml.
\newblock {\em ACM Transactions on Database Systems (TODS)}, 36(4):1--45, 2011.

\bibitem{abiteboul1999regular}
Serge Abiteboul and Victor Vianu.
\newblock Regular path queries with constraints.
\newblock {\em Journal of Computer and System Sciences}, 58(3):428--452, 1999.

\bibitem{afrati2009repair}
Foto~N Afrati and Phokion~G Kolaitis.
\newblock Repair checking in inconsistent databases: algorithms and complexity.
\newblock In {\em Proceedings of the 12th International Conference on Database
  Theory}, pages 31--41, 2009.

\bibitem{amarilli2017conjunctive}
Antoine Amarilli, Mika{\"e}l Monet, and Pierre Senellart.
\newblock Conjunctive queries on probabilistic graphs: Combined complexity.
\newblock In {\em Proceedings of the 36th ACM SIGMOD-SIGACT-SIGAI Symposium on
  Principles of Database Systems}, pages 217--232, 2017.

\bibitem{anand2010techniques}
Manish~Kumar Anand, Shawn Bowers, and Bertram Lud{\"a}scher.
\newblock Techniques for efficiently querying scientific workflow provenance
  graphs.
\newblock In {\em EDBT}, volume~10, pages 287--298, 2010.

\bibitem{arenas2020counting}
Marcelo Arenas, Pablo Barcel{\'o}, and Mika{\"e}l Monet.
\newblock Counting problems over incomplete databases.
\newblock In {\em Proceedings of the 39th ACM SIGMOD-SIGACT-SIGAI Symposium on
  Principles of Database Systems}, pages 165--177, 2020.

\bibitem{arenas1999consistent}
Marcelo Arenas, Leopoldo Bertossi, and Jan Chomicki.
\newblock Consistent query answers in inconsistent databases.
\newblock In {\em PODS}, volume~99, pages 68--79. Citeseer, 1999.

\bibitem{arenas2011querying}
Marcelo Arenas and Jorge P{\'e}rez.
\newblock Querying semantic web data with sparql.
\newblock In {\em Proceedings of the thirtieth ACM SIGMOD-SIGACT-SIGART
  symposium on Principles of database systems}, pages 305--316, 2011.

\bibitem{barcelo2017data}
Pablo Barcel{\'o} and Ga{\"e}lle Fontaine.
\newblock On the data complexity of consistent query answering over graph
  databases.
\newblock {\em Journal of Computer and System Sciences}, 88:164--194, 2017.

\bibitem{barcelo2012expressive}
Pablo Barcel{\'o}, Leonid Libkin, Anthony~W Lin, and Peter~T Wood.
\newblock Expressive languages for path queries over graph-structured data.
\newblock {\em ACM Transactions on Database Systems (TODS)}, 37(4):1--46, 2012.

\bibitem{barcelo2012relative}
Pablo Barcel{\'o}, Jorge P{\'e}rez, and Juan~L Reutter.
\newblock Relative expressiveness of nested regular expressions.
\newblock {\em AMW}, 12:180--195, 2012.

\bibitem{barcelo2013querying}
Pablo Barcel{\'o}~Baeza.
\newblock Querying graph databases.
\newblock In {\em Proceedings of the 32nd ACM SIGMOD-SIGACT-SIGAI symposium on
  Principles of database systems}, pages 175--188. ACM, 2013.

\bibitem{beame2013model}
Paul Beame, Jerry Li, Sudeepa Roy, and Dan Suciu.
\newblock Model counting of query expressions: Limitations of propositional
  methods.
\newblock {\em arXiv preprint arXiv:1312.4125}, 2013.

\bibitem{Bertossi_2011_book}
Leopoldo Bertossi.
\newblock {\em Database Repairing and Consistent Query Answering}.
\newblock Morgan \& Claypool Publishers, 2011.

\bibitem{bienvenu2012complexity}
Meghyn Bienvenu.
\newblock On the complexity of consistent query answering in the presence of
  simple ontologies.
\newblock In {\em Twenty-Sixth AAAI Conference on Artificial Intelligence},
  2012.

\bibitem{buneman2000path}
Peter Buneman, Wenfei Fan, and Scott Weinstein.
\newblock Path constraints in semistructured databases.
\newblock {\em Journal of Computer and System Sciences}, 61(2):146--193, 2000.

\bibitem{chomicki2005minimal}
Jan Chomicki and Jerzy Marcinkowski.
\newblock Minimal-change integrity maintenance using tuple deletions.
\newblock {\em Information and Computation}, 197(1-2):90--121, 2005.

\bibitem{dalvi2007efficient}
Nilesh Dalvi and Dan Suciu.
\newblock Efficient query evaluation on probabilistic databases.
\newblock {\em The VLDB Journal}, 16(4):523--544, 2007.

\bibitem{dalvi2007management}
Nilesh Dalvi and Dan Suciu.
\newblock Management of probabilistic data: foundations and challenges.
\newblock In {\em Proceedings of the twenty-sixth ACM SIGMOD-SIGACT-SIGART
  symposium on Principles of database systems}, pages 1--12, 2007.

\bibitem{dalvi2013dichotomy}
Nilesh Dalvi and Dan Suciu.
\newblock The dichotomy of probabilistic inference for unions of conjunctive
  queries.
\newblock {\em Journal of the ACM (JACM)}, 59(6):1--87, 2013.

\bibitem{fan2012graph}
Wenfei Fan.
\newblock Graph pattern matching revised for social network analysis.
\newblock In {\em Proceedings of the 15th International Conference on Database
  Theory}, pages 8--21, 2012.

\bibitem{Fan_2012}
Wenfei Fan and Floris Geerts.
\newblock {\em Foundations of Data Quality Management}.
\newblock Morgan \& Claypool Publishers, 2012.

\bibitem{friedman2020symbolic}
Tal Friedman and Guy Broeck.
\newblock Symbolic querying of vector spaces: Probabilistic databases meets
  relational embeddings.
\newblock In {\em Conference on Uncertainty in Artificial Intelligence}, pages
  1268--1277. PMLR, 2020.

\bibitem{garey1979computers}
Michael~R Garey and David~S Johnson.
\newblock Computers and intractability.
\newblock {\em A Guide to the}, 1979.

\bibitem{gribkoff2014most}
Eric Gribkoff, Guy Van~den Broeck, and Dan Suciu.
\newblock The most probable database problem.
\newblock In {\em Proceedings of the First international workshop on Big
  Uncertain Data (BUDA)}, pages 1--7, 2014.

\bibitem{gribkoff2014understanding}
Eric Gribkoff, Guy Van~den Broeck, and Dan Suciu.
\newblock Understanding the complexity of lifted inference and asymmetric
  weighted model counting.
\newblock In {\em Workshops at the Twenty-Eighth AAAI Conference on Artificial
  Intelligence}, 2014.

\bibitem{haynes1998domination}
Teresa~W Haynes and Michael~A Henning.
\newblock Domination critical graphs with respect to relative complements.
\newblock {\em Australasian Journal of Combinatorics}, 18:115--126, 1998.

\bibitem{kimelfeld2009modeling}
Benny Kimelfeld and Yehoshua Sagiv.
\newblock Modeling and querying probabilistic xml data.
\newblock {\em ACM SIGMOD Record}, 37(4):69--77, 2009.

\bibitem{lian2010consistent}
Xiang Lian, Lei Chen, and Shaoxu Song.
\newblock Consistent query answers in inconsistent probabilistic databases.
\newblock In {\em Proceedings of the 2010 ACM SIGMOD International Conference
  on Management of data}, pages 303--314, 2010.

\bibitem{libkin2016querying}
Leonid Libkin, Wim Martens, and Domagoj Vrgo{\v{c}}.
\newblock Querying graphs with data.
\newblock {\em Journal of the ACM (JACM)}, 63(2):1--53, 2016.

\bibitem{maniu2017indexing}
Silviu Maniu, Reynold Cheng, and Pierre Senellart.
\newblock An indexing framework for queries on probabilistic graphs.
\newblock {\em ACM Transactions on Database Systems (TODS)}, 42(2):1--34, 2017.

\bibitem{olteanu2008using}
Dan Olteanu and Jiewen Huang.
\newblock Using obdds for efficient query evaluation on probabilistic
  databases.
\newblock In {\em International Conference on Scalable Uncertainty Management},
  pages 326--340. Springer, 2008.

\bibitem{rekatsinas2017holoclean}
Theodoros Rekatsinas, Xu~Chu, Ihab~F Ilyas, and Christopher R{\'e}.
\newblock Holoclean: Holistic data repairs with probabilistic inference.
\newblock {\em arXiv preprint arXiv:1702.00820}, 2017.

\bibitem{riguzzi2015reasoning}
Fabrizio Riguzzi, Elena Bellodi, Evelina Lamma, and Riccardo Zese.
\newblock Reasoning with probabilistic ontologies.
\newblock In {\em Twenty-Fourth International Joint Conference on Artificial
  Intelligence}, 2015.

\bibitem{de2018formal}
Christopher~De Sa, Ihab~F. Ilyas, Benny Kimelfeld, Christopher R{\'e}, and
  Theodoros Rekatsinas.
\newblock {A Formal Framework for Probabilistic Unclean Databases}.
\newblock In Pablo Barcelo and Marco Calautti, editors, {\em 22nd International
  Conference on Database Theory (ICDT 2019)}, volume 127 of {\em Leibniz
  International Proceedings in Informatics (LIPIcs)}, pages 6:1--6:18,
  Dagstuhl, Germany, 2019. Schloss Dagstuhl--Leibniz-Zentrum fuer Informatik.

\bibitem{sarma2009representing}
Anish~Das Sarma, Omar Benjelloun, Alon Halevy, Shubha Nabar, and Jennifer
  Widom.
\newblock Representing uncertain data: models, properties, and algorithms.
\newblock {\em The VLDB Journal}, 18(5):989--1019, 2009.

\bibitem{souihli2013optimizing}
Asma Souihli and Pierre Senellart.
\newblock Optimizing approximations of dnf query lineage in probabilistic xml.
\newblock In {\em 2013 IEEE 29th International Conference on Data Engineering
  (ICDE)}, pages 721--732. IEEE, 2013.

\bibitem{tenCate:2012}
Balder ten Cate, Ga\"{e}lle Fontaine, and Phokion~G. Kolaitis.
\newblock On the data complexity of consistent query answering.
\newblock In {\em Proceedings of the 15th International Conference on Database
  Theory}, ICDT '12, pages 22--33, 2012.

\bibitem{Cate:2015}
Balder Ten~Cate, Ga{\"e}lle Fontaine, and Phokion~G Kolaitis.
\newblock On the data complexity of consistent query answering.
\newblock {\em Theory of Computing Systems}, 57(4):843--891, 2015.

\bibitem{van2014skolemization}
Guy Van~den Broeck, Wannes Meert, and Adnan Darwiche.
\newblock Skolemization for weighted first-order model counting.
\newblock In {\em Fourteenth International Conference on the Principles of
  Knowledge Representation and Reasoning}, 2014.

\bibitem{van2017query}
Guy Van~den Broeck, Dan Suciu, et~al.
\newblock Query processing on probabilistic data: A survey.
\newblock {\em Foundations and Trends{\textregistered} in Databases},
  7(3-4):197--341, 2017.

\bibitem{Vardi82thecomplexity}
Moshe~Y. Vardi.
\newblock The complexity of relational query languages (extended abstract).
\newblock In {\em Proc.\ ACM Symposium on Theory of Computing (STOC 82}, pages
  137--146, 1982.

\bibitem{zeng2014analytical}
Kai Zeng, Shi Gao, Barzan Mozafari, and Carlo Zaniolo.
\newblock The analytical bootstrap: a new method for fast error estimation in
  approximate query processing.
\newblock In {\em Proceedings of the 2014 ACM SIGMOD international conference
  on Management of data}, pages 277--288, 2014.

\end{thebibliography}

\section{Appendix}\label{section:Appendix}

\begin{proof}[Proof of Theorem~\ref{teo-superset-fixed}]
This proof is analogous as the one provided for Theorem \ref{teo:subsetNpComplete}. The only difference is that we need to define $\aRealizationModel$ so that it actually removes the assignment of the data-graph by adding all possible edges of the form $(\star, \esLabel{value}, x_i)$ for $\star \in \{\bot, \top\}$ and $1 \leq i \leq n$. Such $\aRealizationModel$ can be defined as
$$
\aRealizationModel(\aGraph)(H) = 2^{-(|\Sigma_e||V_\aGraph)|^2 - |E_\aGraph|)}
$$
if $V_\aGraph = V_H$ and $\aGraph \subseteq H$. Otherwise, $\aRealizationModel(\aGraph)(H) = 0$.

Now, for the reduction, given a 3CNF formula $\phi$ we map it to the data-graph $\aGraph_{\phi, f}$ where $f$ intuitively ``assigns'' both boolean values to each variable. Deciding if there is some data-graph such that $\aInverseRealizationModel_{\aProbabilisticDatabase, \aRealizationModel} > \frac{1}{2}^n$ is equivalent to deciding if there exists a satisfying assignment for $\phi$.
\end{proof}

\begin{proof}[Proof of Theorem~\ref{teo-update-fixed}]
Even though the main ideas are the same as in Theorem~\ref{teo:subsetNpComplete}, for this proof we have to define a somewhat different $\aRealizationModel$ and reduction. First, $\aRealizationModel$ will once again delete the assignment represented by the data-graph, but in this case this information needs to be represented through the edge labels. That is, we consider that a data-graph represents a 3CNF formula $\phi$ alongside an assignment $f$ of its variables if we have the $m$ clause and $n$ variable nodes with the corresponding \esLabel{is\_literal} and \esLabel{is\_literal\_negated}, the two boolean nodes $\bot$ and $\top$ with an edge between them, but now the assignment $f$ will be present in the data-graph in the following sense: each variable node will have an edge from itself to $\bot$ and $\top$, but only one of those will have the label \esLabel{chosen}, while the other one will have the label \esLabel{unchosen}. Then, we can represent the ``deletion'' of an assignment satisfying $\phi$ by turning all \esLabel{chosen} edges to \esLabel{unchosen} ones. Observe that the probabilistic database $\aProbabilisticDatabase$ can be defined in an analogous way. Finally, the realization model $\aRealizationModel$ is defined as:
$$
\aRealizationModel(\aGraph)(H) = 2^{-|E_\aGraph^{\lnot \esLabel{unchosen}}|}
$$
if $V_\aGraph = V_H$, $\aGraph$ has the same edges as $H$ up to a renaming of the edge labels and the only edges with different labels have a label \esLabel{unchosen} in $H$. In this formula, $E_\aGraph^{\lnot \esLabel{unchosen}}$ denotes the set of edges from $\aGraph$ with label different from \esLabel{unchosen}.

As in the proof of Theorem~\ref{teo:subsetNpComplete}, we complete the reduction from 3SAT: given a $3CNF$ formula $\phi$ we build the data-graph $\aGraph'=\aGraph_{\phi,f}$, where $f$ assigns to each variable $x_i$ both boolean values. Moreover, we consider $b = \frac{1}{2^n}$.
\end{proof}

\begin{proof}[Proof of Lemma~\ref{lema:constraints1}]
Let $\eta$ be a $\Gregxpath$ formula, $\aGraph=(V_\aGraph, L_\aGraph, D_\aGraph)$ a data-graph of $n$ nodes and $A$ a set of data values such that $|A| = n$ and $\Sigma_n^{\eta} \cap A = \emptyset$. Let $C = \{\esDato{c}_1, \esDato{c}_2, \ldots, \esDato{c}_k\}$ be the set of all data values present in $\aGraph$ that are not in $\Sigma_n^\eta$, and let $\aGraph'=(V_\aGraph, L_\aGraph, D_{\aGraph'})$ be a new data-graph such that
\[
D_{\aGraph'}(v) = \begin{cases} 
          \esDato{d}_i & D_\aGraph(v)= \esDato{c}_i \text{ for some } i\\
         D_\aGraph(v) & otherwise\\
       \end{cases}
\]

We prove by induction on the formula's structure that:

\begin{enumerate}
    \item If $\eta$ is a path expression, then $(v,w) \in \semantics{\eta}_\aGraph \iff (v,w) \in \semantics{\eta}_{\aGraph'}$.
    \item If $\eta$ is a node expression, then $v \in \semantics{\eta}_\aGraph \iff v \in \semantics{\eta}_{\aGraph'}$
\end{enumerate}

The only interesting cases are those in which the expression interacts with data values, since the edges of the data-graph were not modified. Observe that the theorem is true for the base cases: if $\eta = [\esDatoIgual{c}]$, then clearly $\semantics{\eta}_\aGraph = \semantics{\eta}_{\aGraph'}$, since those nodes with data values in $\Sigma_n^\eta$ kept their original data. Now, if $\eta = [\esDatoDistinto{c}]$ the theorem also holds: if $v \in \semantics{\esDatoDistinto{c}}_\aGraph$ then $D_\aGraph(v)$ is in $\Sigma_n^\eta \setminus \{\esDato{c}\}$ and was not modified in $\aGraph'$, or rather it is not in $\Sigma_n^\eta$, and belongs to $A$. Since $A \cap \Sigma_n^\eta = \emptyset$ we conclude that $v \in \semantics{\esDatoDistinto{c}}_{\aGraph'}$.

For the inductive cases, we consider:

\begin{itemize}
    \item $\eta \equiv \langle \aPath = \aPathb \rangle$: this case follows easily by noticing that, if two nodes had the same data value in $\aGraph$ then they also have the same data value in $\aGraph'$.
    \item $\eta \equiv \langle \aPath \neq \aPathb \rangle$: this case follows by observing that, if two nodes had a different data value in $\aGraph$, then they also contain a different data value in $\aGraph'$.
\end{itemize}
\end{proof}

\begin{proof}[Proof of Lemma~\ref{lemma:optimumDatavalues}]
Let $\aNodeExpression \in \Gposcoreregxpath$ be a node expression that does not use the $\langle \aPath = \aPathb \rangle$ subexpression, $\aGraph = (V_\aGraph, L_\aGraph, D_\aGraph)$ a data-graph with an origin $o$ and $\delta:\Sigma_n \times \Sigma_n \to \N_0$ a function such that for each $c \in \Sigma_n$ it is possible to efficiently enumerate $\esDato{d}_1, \esDato{d}_2, \ldots$ where $\delta(\esDato{d}_i, c) \leq \delta(\esDato{d}_{i+1},c)$ and every data value $\esDato{d}$ belongs to the enumeration.

We want to show that, if there exists a data-graph $H=(V_\aGraph, L_\aGraph, D_H)$ such that $H, o \models \aNodeExpression$ and $\sum_{v\in V_\aGraph} \delta(D_H(v), D_\aGraph(v))$ is minimized across all data-graphs equal to $\aGraph$ up to its data values that also satisfy $\aNodeExpression$ at the origin, then there is another data-graph $H'=(V_\aGraph, L_\aGraph, D_{H'})$ such that $H', o \models \aNodeExpression$, $H'$ also minimizes $\sum_{v\in V_\aGraph} \delta(D_{H'}(v), D_\aGraph(v))$ and for each data value $\esDato{c}$ in $\aGraph$ that was changed into $\esDato{d}$ in $H'$ where $\esDato{d}$ is not mentioned by $\aNodeExpression$ it is satisfied that $\esDato{d}=\esDato{d}_j$ for $j \leq n + |\Sigma_n^\aNodeExpression|$ where $\esDato{d}_i$ is the $i$th data value in the enumeration $\esDato{d}_1, \esDato{d}_2, \ldots$ such that $\delta(\esDato{d}_i, \esDato{c}) \leq \delta(\esDato{d}_{i+1}, \esDato{c})$.

To prove this, it suffices to find, given $H$ and a node $v$ such that $D_H(v) \notin \Sigma_n^\aNodeExpression$ and $D_H(v) \neq \esDato{d}_j$ for all $j\leq n + |\Sigma_n^\aNodeExpression|$ where the $\esDato{d}_j$ are the numeration of $\delta$ for $D_\aGraph(v)$, a new data value $\esDato{d}$ such that $\esDato{d} = \esDato{d}_j$ for $j \leq n + |\Sigma_n^\aNodeExpression|$, the data-graph obtained by considering $H$ and modifying the data value of $v$ to $\esDato{d}$ also satisfies $\aNodeExpression$ when evaluated at the origin, and the summation involving $\delta$ is still minimized.

Let $\Sigma_n^H$ be the set of data values that are contained by some node of $H$. Also, observe that the set $A = \{\esDato{d}_j : 1 \leq j \le n + |\Sigma_n^\aNodeExpression|, \esDato{d}_j \notin \Sigma_n^\aNodeExpression\}$ has size at least $n$. Then, since $D_H(v) \notin A$ there is at least one data value $\esDato{d} \in A \setminus \Sigma_n^H$. We claim that the data-graph $H' = (V_H, L_H, D_{H'})$ where $D_{H'}(w) = D_H(w)$ unless $w =v$, in which case $D_{H'}(v) = \esDato{d}$, satisfies all the desired conditions.

Clearly, $\delta$ is also minimized in $H'$, since we only chose a transition with smaller cost. Using the following claim we conclude that $H'$ satisfies $\aNodeExpression$ at the origin.
\end{proof}
%To prove that the $\aNodeExpression$ is still satisfied at the origin, we prove the following claim:

\textit{Claim}: Let $G$ be a data-graph and $\eta$ an expression from $\Gposregxpath$ without data comparison with equality (i.e. the subexpression $\langle \aPath = \aPathb \rangle$). Then, given a node $v$ such that $D_\aGraph(v) \notin \Sigma_n^\aNodeExpression$, we can define the data-graph $\aGraph'=(V_\aGraph, L_\aGraph, D_{\aGraph'})$ where $D_{\aGraph'}(w) = D_\aGraph(w)$ if $w = v$, and $D_{\aGraph'} = \esDato{d} \in \Sigma_n \setminus (\Sigma_n^\aNodeExpression \cup \Sigma_n^\aGraph)$. Then, $\semantics{\eta}_\aGraph = \semantics{\eta}_{\aGraph'}$.

\begin{proof}[Proof of the Claim:]

We prove this by structural induction. As before, the only interesting cases are those in which the subexpression interacts with data values. For the base cases:

\begin{itemize}
    \item $\eta \equiv \esDato{c}^=$: if $w \in \semantics{\esDato{c}^=}_\aGraph$ we can conclude that $w \in \Sigma_n^\aNodeExpression$, which implies that $w \neq v$. Then, $D_{\aGraph'}(w) = D_\aGraph(w)$, and $w \in \semantics{\esDato{c}^=}_{\aGraph'}$. The other direction can be seen in the same way.
    \item $\eta \equiv \esDato{c}^\neq$: if $w \in \semantics{\esDato{c}^\neq}$, then $D_{\aGraph}(w) \notin \Sigma_n^{\aNodeExpression}$, and therefore $D_{\aGraph'}(w) \notin \Sigma_n^\aNodeExpression$ and $w \in \semantics{\esDato{c}^\neq}_{\aGraph'}$. The other direction follows with similar arguments.
\end{itemize}

For the recursive cases, we only need to consider $\eta \equiv \langle \aPath_1 \neq \aPath_2 \rangle$. Observe that if $w \in \semantics{\eta}_{\aGraph}$, then there exists nodes $z_1,z_2$ such that $(v,z_i) \in \semantics{\aPath_i}_{\aGraph}$ and $D(z_1)_\aGraph \neq D(z_2) \aGraph$. We know that $(v, z_i) \in \semantics{\aPath_i}_{\aGraph'}$ by induction, so we can conclude by proving that $D_{\aGraph'}(z_1) \neq D_{\aGraph'}(z_2)$. Observe that this inequality could only have been altered if $z_i=v$ for some $i$. But, in that case, the inequality still holds, since $D_{\aGraph'}(v) \notin \Sigma_n^H$ by construction.
The other direction can be proven in the exact same way.
\end{proof}

%%
%% If your work has an appendix, this is the place to put it.

\end{document}